\documentclass[12pt]{article}
\usepackage[margin=1in]{geometry}
\usepackage[round,compress, sort, authoryear]{natbib}
\usepackage{setspace}
\usepackage{xcolor}
\usepackage{hyperref}
\usepackage{amsmath,amsthm,amssymb}
\usepackage{enumitem}
\usepackage{cleveref}
\newtheorem{theorem}{Theorem}

\usepackage{rotating,multirow,makecell,array}
\usepackage {tikz}
\usepackage{tikz}
\usetikzlibrary{arrows.meta}

\usepackage{float}
\usepackage{pgfplots}
\usepackage{mathtools}
\usepgfplotslibrary{groupplots}
\usepackage{textcomp}
\usetikzlibrary{calc} 
\usepackage{subcaption}
\usepackage{algorithm}
\usepackage{algpseudocode}

\usetikzlibrary {positioning}
    \newtheorem{lem}{Lemma}
  \newtheorem{prop}{Proposition}
\theoremstyle{definition}
 \newtheorem{defn}{Definition}
  \newtheorem{ass}{Assumption}
 
  \usepackage{multibib}
\usepackage[utf8]{inputenc}   
\usepackage[T1]{fontenc}      
\usepackage{graphicx}         
\usepackage{booktabs}         
\usepackage{epstopdf}  
\usepackage{siunitx}
\usepackage{booktabs}   
\usepackage{multirow}   
\usepackage{booktabs}
\usepackage{multirow}      
\usepackage{makecell}      
 \newtheorem{rem}{Remark}
 
\usepackage{booktabs}   
\usepackage{amsmath}    

\usepackage{caption}              
\ifxetex
  \usepackage{mathspec}
\fi

\DeclareMathOperator*\argmin{arg\,min}
\DeclareMathOperator*\argmax{arg\,max}

\title{When is the statistical evidence strong enough? \\ Using hypothesis tests to value data collection
}
\author{Aristotelis Epanomeritakis%
\footnote{\raggedright Department of Economics, Harvard University. Email address: \texttt{aristotle\_epanomeritakis@fas.harvard.edu}.}
\and Davide Viviano%
\footnote{Department of Economics, Harvard University.  Email address: \texttt{dviviano@fas.harvard.edu}.}
}
\date{\today}

\begin{document}

\maketitle

\vspace{-24pt}

\begin{abstract}
We recast statistical significance as a choice between making an immediate policy recommendation and deferring it until further evidence is collected. We show that the welfare-optimal decision corresponds, under minimax regret, to a statistical test whose level depends on the cost and precision of additional evidence. Inverting this rule, we introduce and recommend reporting the abstention-value (\(A\)-value) alongside traditional \(p\)-values to determine where additional data collection is most needed. The $A$-value defines the break-even welfare cost of abstaining and recommending further experimentation given the initial evidence. When experimentation capacity is limited, prioritizing additional data collection where \(A\)-values are the largest yields finite-sample welfare guarantees. We illustrate its implications for economic program evaluation.
\end{abstract}

\newpage 

\onehalfspacing 

\section{Introduction}

Statistical significance plays a prominent role in evidence-based policy. From the regulatory approval of new drugs \citep{guidance2017multiple} to the adoption of economic programs \citep{gertleretal2016}, researchers routinely compare \(p\)-values with conventional thresholds to determine whether evidence is ``strong enough'' to implement a new policy.  Yet the economic rationale for this practice remains unclear and actively debated \citep[e.g.][]{wassersteinlazar2016,benjaminetal2018,amrheinetal2019,abadie_nonsignificance,imbens2021,frankel2022findings}. For instance, when choosing between two policies, a five-percent test can generally be rationalized only by assigning highly asymmetric preferences to Type I and Type II errors \citep[][]{tetenov2012,tetenov2016testing}, which is difficult to justify for policies with different welfare implications. At the same time, this binary decision-theoretic foundation overlooks the possibility that a ``failure to reject'' reflects insufficient evidence to support an immediate policy recommendation, rather than evidence in favor of the status quo \citep{fisher1935design}.

This paper re-interprets statistical significance through the lens of the cost incurred by declaring ``we do not know yet'', defined as the cost of deferring a policy recommendation and collecting further evidence. 
Our main goal is to characterize the trade-off between making an immediate policy recommendation and deferring it, and to use this characterization to construct a welfare-based measure that complements traditional \(p\)-values to guide where additional evidence is most valuable and where an immediate policy action is most justified. 
We formalize this trade-off by embedding the classical binary testing problem \citep{neymanpearson1933} in a two-stage sequential decision framework in the spirit of \citet{wald1947sequential}.



Specifically, we study a two-stage experiment (and multi-stage in Section \ref{sec:multiple_periods}) in which researchers collect a Gaussian, noisy and unbiased estimate of welfare effects. After observing the initial estimate, the researcher chooses among three actions: recommend implementing the policy, recommend retaining the status quo, or defer any policy choice. We assume that deferral implies that a decision-maker will make a terminal policy recommendation based on both the first unbiased signal and, in addition, on a second unbiased estimate collected in the second period, but unobserved to researchers in the first period. We remain agnostic over the values of treatment effects and evaluate each strategy by its worst-case regret over possible policy effects \citep{savage1951}, corresponding to the welfare difference with an oracle that knows the true effect and always makes the correct recommendation. Equivalently, regret corresponds to negative welfare effects when the policy recommendation differs from the correct policy recommendation, net of costs of additional data collection if recommended.

We characterize the optimal decision as a three-region structure familiar from the classical testing literature \citep[e.g.][]{wald1947sequential} where sufficiently positive (negative) initial evidence leads to implementation (rejection), and intermediate evidence leads to further data collection. Here the boundaries solely depend on two economic fundamentals: the cost per standard deviation measure and the ratio of standard errors between the two periods.

We use this characterization to measure the economic value of further experimentation. Given the initial evidence, we calculate the break-even cost at which the researcher switches between making an immediate policy decision to recommend collecting additional evidence (with equal precision). We call this threshold the abstention-value (\(A\)-value for short). A smaller \(A\)-value indicates more decisive current evidence; as we show, further experimentation is optimal if and only if it can be conducted at a lower cost than the $A$-value. 

We establish two main properties of the \(A\)-value. First, holding experimental precision fixed, it increases with the \(p\)-value, i.e., weaker statistical evidence makes additional information more valuable. Second, holding the \(p\)-value fixed, the \(A\)-value is smaller when the experimental design is more precise. The intuition is that with higher precision, the policy effects that remain difficult to distinguish from zero are smaller in welfare units. The worst-case welfare of an incorrect immediate decision, and hence the value of collecting further evidence, are therefore smaller. Thus, unlike the \(p\)-value, the \(A\)-value reflects both the strength of the statistical evidence and the welfare implication of the policy decision.

Our next main result shows how \(A\)-values can guide experimentation when researchers face several policy questions but can suggest future experimentation for a limited number of studies. We propose assigning the available capacity to the policies with the largest \(A\)-values, corresponding to those for which further information is most valuable. Our ranking guarantees that the worst-case regret relative to the best feasible strategy is bounded by the expected \(A\)-value of the policy at the experimentation margin. Thus, when the marginal \(A\)-value is small, the ranking rule is approximately minimax-regret optimal. 
 With equally precise experiments, the rule prioritizes the largest \(p\)-values. When experimental precision differs across policies, \(A\)-values may produce a different ranking because they account for both the potential welfare loss from an incorrect immediate decision and the informativeness of the proposed follow-up experiment, two features that \(p\)-values do not capture.

We illustrate our findings by reanalyzing 42 evaluations of unconditional cash transfer programs in low- and middle-income countries \citep{crosta2024unconditional}, measuring welfare by their effects on household consumption. We find that the distribution of \(A\)-values is strongly right-skewed, with many evaluations leaving little value in deferring a policy recommendation, while a few indicate substantially greater value of further experimentation. When allocating ten follow-up experiments, the \(A\)-value rule shares seven or eight selections with a ranking by the largest \(p\)-values and five or six with a ranking by standard errors, depending on the cost specification. These disagreements illustrate the importance of accounting jointly for statistical inconclusiveness and the scale of uncertainty. For instance, selecting studies solely by the largest \(p\)-values excludes some program with smaller \(p\)-values but substantially larger standard errors. The uncertainty in these evaluations concerns effects on a larger welfare scale, so an incorrect immediate recommendation can entail greater losses. 

One more comparison is with an empirical Bayes (EB) benchmark which leverages studies distribution to estimate a Gaussian prior and then posterior welfare effects. While useful in the presence of many studies drawn from the same distribution \citep[e.g.][]{abadie2023estimating}, the approach is not applicable in settings where studies across different countries are not exchangeable, and can be sensitive to the prior when a small number of studies make non-parametric approaches infeasible. In our empirical findings, the two rules share only three to five of the ten studies selected for additional experimentation. These differences arise because, for programs with imprecise estimates, empirical Bayes places most of the weight on the fitted prior. Countries with noisy but negative estimated effects exhibit positive precise posterior effects and confident policy recommendations due to more aggressive shrinkage; a second similarly-precise experiment is unlikely to change the EB policy decision.  
The \(A\)-value instead reflects the welfare losses that remain possible when assessing each evaluation without imposing a common prior across programs. It therefore values experimentation especially in settings where both $p$-values and standard errors are the largest, since policy decisions face the largest uncertainty in terms of welfare effects. It is robust to settings where previous evidence is not comparable or available when making decisions.

\subsection{Related literature}

This paper builds on the classical literature on statistical decision theory and sequential analysis \citep{arrowblackwellgirshick1949,waldwolfowitz1950bayes,wald1950statistical}, which develop the foundations for our problem. While these papers mostly focus on trade-off between Type I and
II error control, here our focus is on welfare maximization under a continuous parameter space and its implications for the opportunity costs of immediate policy recommendations. 

Specifically, we connect the practice of hypothesis testing to the classical literature on sequential experimentation \citep[e.g.][]{fisher1952sequential,robbins1952aspects,sobel1953,maurice1957,degroot1960,chernoff1961normal,johnsonmaurice1963}, the broader recent literature on best-arm identification  \citep[e.g.][]{trapeznikov2013supervised, kasy2021adaptive, manskitetenov2016, garivier2016optimal,qin2017improving}, and to the generalized Wald problem recently studied by \citet{adusumilli2026wald}. While \citet{adusumilli2026wald} derives minimax-regret stopping rules in a continuous-time experiment, here we focus on a discrete two-stage problem with a single prespecified follow-up experiment; because continuation is a single batch decision, our
continuation value differs from
continuous-time solutions.\footnote{As \citet[Section 6.1]{adusumilli2026wald} note, simply discretizing continuous-time rules is not minimax optimal.} Also, different from sequential binary  prediction problems \citep{trapeznikov2013supervised}, here we study a welfare-optimization problem. 

Our main contribution relative to all these references is to characterize the exact threshold in our terminal two-batches problem, and, most importantly, to invert it and introduce and establish properties of the $A$-value. 
Because the least-favorable distribution can change with the cost
of data collection, monotonicity of the minimax threshold in the
cost does not follow directly from its posterior characterization under a least-favorable prior.
We establish this monotonicity and continuity, to justify
the $A$-value as the break-even cost.

The \(A\)-value is conceptually distinct from other useful statistics in the testing literature. For instance, the \(q\)-value introduced by \citet{storey2002} measures the minimum positive false discovery rate associated with treating an observed statistic as a discovery. Unlike the \(A\)-value, it does not account for the cost and precision of future evidence.
 An \(e\)-value is instead a nonnegative statistic whose expectation under the null is at most one. Sequences of \(e\)-values are often used to preserve Type I error control when the decision to collect additional observations depends on previously observed evidence \citep{vovkwang2021,ramdas2025hypothesis}. However these values do not determine whether the value of additional information justifies its cost. The $e$-value therefore differs in its scope and construction from the \(A\)-value whose goal is not to adaptively control Type I error, but rather to measure the welfare-implications of future data collection through its break-even cost.

The $A$-value is instead related to  notions of economic value of experimentation often studied under risk criteria that average with respect to a chosen (prior) distribution. Classical contributions include \citet{grundyreeshealy1954,grundyhealyrees1956,raiffaschlaifer1961,batherwalker1962,scott1968,schleifer1969}, while recent Bayesian or empirical Bayes analyses include \citet{abadie2023estimating, hendren2020unified, henry2019research, fudenberg2018speed,liang2022dynamically,morris2019wald,azevedoetal2023,azevedo2020b}. Here, we evaluate experimentation with minimax regret, connecting the value of information to classical frequentist hypotheses testing. The minimax perspective is particularly useful when researchers want to characterize the evidence strength without relying on (or having access to) a prior distribution over studies. The \(A\)-value serves this purpose and can be interpreted as a frequentist analog to Bayesian values of information.

The study of statistical testing from a decision-theoretic perspective has a long tradition in statistical decision theory and econometrics. 
The difficulty of motivating hypothesis testing in binary treatment-choice problems \citep{manski2004, stoye2009} has motivated the literature to (i) either take Type I error control as an exogenous part of researcher's preferences, and study approximately optimal decisions up-to the probability of making Type I error mistake \citep{manski2021econometrics,andrews2025certified}; (ii) or study hypothesis testing in communication models involving multiple (often misaligned) stakeholders \citep[e.g.][]{abadie_nonsignificance,tetenov2016testing,frankel2022findings,vivianoetal2026, jagadeesan2025publication, bates2026principal}; (iii) or focus on single-agent binary-decisions with highly asymmetric losses \citep[e.g.][]{das1994decision, kimchoi2021, tetenov2012}. All such references focus on a terminal (policy) decision. We complement this literature 
by studying statistical testing as a choice between immediate decision and future data collection. This allows us to characterize the significance threshold endogenously as a function of the economic cost of future data. With prohibitively costly data collection, we recover \cite{manski2004}'s empirical success rule.

The three-action structure of our problem connects to the literature on classification with a rejection option, where ``rejection'' means declining to classify \citep{chow1970optimum,cortes2016learning,bartlettwegkamp2008,elyanivwiener2010,mozannarsontag2020}. A related strand of work gives statistical testing a three-decision interpretation \citep[e.g.][]{kaiser1960,ricekrakauer2023}.    
In these contributions, the middle region represents a terminal conclusion that the available evidence does not support making a decision. In our framework, by contrast, deferring the policy decision implies conducting a follow-up experiment. This distinction requires a separate analysis. Instead of minimizing a loss function for a three point decision that only depends on immediate information, here our optimization explicitly accounts for the follow up decision to be contingent on the evidence accumulated in each period, making the cost of abstention a function of future decisions and of the precision of future data collection. These differences yield a distinct characterization of the optimal abstention threshold and motivate the \(A\)-value introduced in this paper.

Finally, recent work in econometrics uses abstention to address other decision problems. \citet{horowitzlee2025} study partial identification in a maximum-score model and develop a computationally tractable linear-programming procedure. \citet{brezaetal2025} use an option to admit ignorance when aggregating heterogenous causal evidence across environments. We share \citet{brezaetal2025}'s view that abstention should be interpreted under the lens of future data collection; we complement this literature with an explicit characterization of optimal abstention threshold with sequential decisions. This characterization is key to study its connection to hypothesis testing and the notion and properties of $A$-value introduced here.

\section{From statistical testing to economic decisions}

\subsection{Setup} 

For a given policy question, a planner (or researcher) may reject or implement the given policy. If implemented, the policy generates a welfare effect $\Delta\in\left[-\Bar{\Delta},\Bar{\Delta}\right]$, ex-ante unknown to the planner. We standardize welfare at status quo to be equal to zero.  
We assume that $\Bar{\Delta}$, the maximum possible magnitude of the welfare difference, is known (as in the case of bounded outcomes). If unknown, $\Bar{\Delta}$ can be arbitrary large without affecting our results. 

\begin{ass}[Unbiased experiment] \label{ass:experiment} Researchers collect an (asymptotically) Gaussian unbiased estimate of the welfare effect with variance $s_1^2$
\begin{equation*}
    X_1 \sim \mathcal{N}\left(\Delta,s_1^2\right),
\end{equation*}
where the variance $s_1^2$ is assumed to be known. 
\end{ass} 

Assumption \ref{ass:experiment} states that researcher conducts an experiment with an unbiased signal of welfare. Known $s_1^2$ can be replaced by a consistent estimate for $s_1^2$, where $s_1^2$ is implicitly a function of the sample in the first-stage experiment. In this case our analysis should be interpreted as asymptotic. While the standard Neyman-Pearson framework assumes a two-point decisions (either reject or accept a null hypothesis), here we study a sequential testing procedure in the spirit of \cite{wald1945sequential}, where researchers may recommend to accept, reject or \textit{collect more evidence}. The third option connects the idea of rejection option in \cite{chow1970optimum} to classical literature on sequential experimentation \citep[e.g.][]{maurice1957}, which we revisit here in the context of a three point decision with two sequential batch experiments.

\begin{defn}[Classification with rejection option, \cite{chow1970optimum}] Denote $\delta_1$ a decision as a function of evidence $X_1$ with 
$$
\begin{aligned}
\delta_1(X_1) & = \begin{cases}   1 & \text{ if accept policy} \\ 
 0 & \text{ if reject policy} \\ 
 \texttt{NA} & \text{ if abstain} 
\end{cases} 
\end{aligned}
$$
\end{defn} 
If abstention carried no cost, no immediate decision would be made. The core question is therefore how to model its consequences. Here, we give abstention an explicit sequential interpretation and find optimal decisions under such interpretation. Abstention implies that a decision-maker will make policy decision after conducting a follow-up experiment.

\begin{ass}[Abstention as a sequential sampling rule] \label{ass:2} Upon admitting ignorance $(\delta_1 = \texttt{NA})$, researchers recommend collecting more evidence. Additional evidence will be collected in the form of $X_2 \sim \mathcal{N}(\Delta, s_2^2), X_2 \perp X_1$ \text{ at cost } $c$.\footnote{Independence is without loss because with jointly Gaussian and correlated $X_1,X_2$, we can re-define the second experiment signal as $X_2'$ after residualizing $X_1$. Specifically, for $X_1,X_2$ with correlation $\rho < 1$ and for $\rho s_2/s_1 \neq 1$, we can define $a = \rho \frac{s_2}{s_1}$ and $X_2' = (X_2 - a X_1)/(1 - a)$, independent of $X_1$ and centered around $\Delta$. The same analysis then continues where we replace $X_2$ with $X_2'$.} 

If $\delta_1 = \texttt{NA}$, the terminal decision takes the form: 
$$ 
\delta_2(X_1, X_2)  = \begin{cases}   1 & \text{ if accept policy} \\ 
 0 & \text{ if reject policy} 
\end{cases}
$$ 
where no further experimentation is possible. 
 \qed 
\end{ass}

Assumption \ref{ass:2} gives abstention a welfare interpretation where abstention leads to additional evidence but entails the cost of collecting such evidence and delaying the policy decision. This structure will allow us to connect the optimal decision rule to standard practices of hypothesis testing. In the baseline analysis, we treat the design of the follow-up experiment, and hence its variance $s_2^2$, as given. This matches common practices of fixed budget constraints for replication studies. Section \ref{rem:endogenous_precision} discusses how the precision of the follow-up experiment can instead be chosen optimally ex-ante.

 The cost $c$ can be an implicit function of $s_2^2$, omitted for notational convenience.

\begin{figure}[t]
\centering
\begin{tikzpicture}[
    >=latex,
    timeline/.style={thick},
    event/.style={
      rectangle, draw, rounded corners,
      align=left, inner sep=4pt, font=\footnotesize,
      text width=3.35cm
    },
    note/.style={
      align=center, font=\scriptsize,
      text width=3.35cm
    }
]

\draw[timeline,->] (0,0) -- (10.7,0);

\foreach \x/\lab in {0/$t=0$,5/$t=1$,10/$t=2$} {
  \draw[timeline] (\x,0.1) -- (\x,-0.1);
  \node[below=0.22cm] at (\x,0) {\lab};
}

\node[event, above=0.55cm] (design) at (0,0) {%
    \textbf{Data collection}\\
    Researcher observe $X_1 \sim \mathcal{N}(\Delta,s_1^2)$
};

\node[event, above=0.55cm] (report) at (5,0) {%
    \textbf{Decision}\\
    Researcher recommends a policy or defers and asks for more evidence
};

\node[event, above=0.55cm] (audience) at (10,0) {%
    \textbf{Loss}\\
    The cost of deferral is $c$ plus the expected loss from making decisions using additional evidence
};

\draw[->,timeline] (design.east) -- (report.west);
\draw[->,timeline] (report.east) -- (audience.west);

\end{tikzpicture}
\caption{\footnotesize Timeline of the decision process.}
\label{fig:communication_model}
\end{figure}

 \subsection{Objective function}
 
Given the decisions $\delta_1, \delta_2$, the corresponding loss function depends on the welfare effect of adopting the policy $\Delta$ net of cost of future experimentation. This is defined as follows: 
$$
\ell(\delta_1, \delta_2; c, \Delta) = - \Big(\underbrace{\Delta 1\{\delta_1 = 1\}}_{\text{Welfare effect today}} + \underbrace{\Delta1\{\delta_2 = 1\} 1\{\delta_1 = \texttt{NA}\}}_{\text{Welfare effect from future decision}} - \underbrace{c 1\{\delta_1 = \texttt{NA}\}}_{\text{Cost from future experimentation}} \Big). 
$$
Because $\Delta$ is unknown, and 
following \cite{savage1951}, we study optimal decisions from a minimax regret perspective. Specifically, we compare the loss function to the highest achievable welfare corresponding to $\Delta 1\{\Delta \ge 0\}$. Expected regret therefore takes the following form: 
$$
\small 
\begin{aligned}
\mathrm{Reg}_1(\delta_1, \delta_2; \Delta,c) & = \underbrace{\mathbb{E}_{\Delta}\Big[\ell(\delta_1(X_1), \delta_2(X_1, X_2);\Delta,c) \Big]}_{\text{Expected researcher's loss}} + \underbrace{\Delta 1\{\Delta \ge 0\}}_{\text{highest achievable welfare}} \\ 
&= \underbrace{\Delta_+ \mathbb{E}_\Delta\Big[1\{\delta_1(X_1) = 0\} + 1\{\delta_2(X_1,X_2) = 0\} 1\{\delta_1(X_1) = \texttt{NA}\}\Big]}_{\text{loss from wrongly recommending status quo}} \\ & \qquad +\underbrace{\Delta_- \mathbb{E}_\Delta \Big[ 1\{\delta_1(X_1) = 1\} + 1\{\delta_2(X_1,X_2) = 1\} 1\{\delta_1(X_1) = \texttt{NA}\}\Big]}_{\text{loss from wrongly recommend new policy}} + \underbrace{c \mathbb{E}_\Delta[1\{\delta_1(X_1) = \texttt{NA}\}]}_{\text{abstention cost}}
\end{aligned} 
$$
where $\Delta_+ = \max\{0,\Delta\}, \Delta_- = \max\{-\Delta,0\}$. 
That is, under the regret objective function, a correct policy decision incurs no loss while an incorrect decision incurs a loss equal to the magnitude of the treatment effect; $c$ measures the cost (in welfare units) of additional data. 

We define the minimax regret optimal decisions over all measurable functions $\mathcal{D}$ as 
$$
(\delta_1^\star,\delta_2^\star) \in \mathrm{arg} \min_{(\delta_1,\delta_2) \in \mathcal{D}} \sup_{\Delta: |\Delta| \le \bar{\Delta}} \mathrm{Reg}_1(\delta_1,\delta_2;\Delta,c). 
$$  

\begin{rem}[Comparison with minimax solution]
Minimax regret is a common criterion for evaluating statistical decision rules;
see, for example, \cite{manski2021econometrics} for a review. Other objectives
are possible. A seemingly natural alternative is to minimize the worst-case
expected loss $\ell$, instead of the loss corresponding to regret. In our setting, however, this criterion admits a trivial
solution. In particular, nature can make the welfare for rules that approve the policy arbitrary small, making the rule that never approves minimax optimal. Minimax
regret avoids this asymmetry between new policy and status quo and assigns a cost both to approving a harmful policy and to rejecting a
beneficial one. \qed 
\end{rem} 

\begin{rem}[Comparison with Bayes and Empirical Bayes]
Another alternative is a Bayesian objective that averages $\ell$ with respect
to a prior $\Delta\sim\pi$. This criterion is natural when the researcher can
elicit a credible distribution of policy effects. It may be less attractive when such
information is unavailable or when the decision is sensitive to the chosen
prior.\footnote{Even when multiple studies evaluate similar interventions across different contexts, using their distribution of effects to construct a prior for a new site requires an exchangeability assumption. This assumption may fail when sites selected for future implementation are not drawn from the same distribution of observable sites (e.g., because of selection).}
 
An empirical Bayes analysis would use a collection of comparable studies to estimate a distribution of policy effects under similar exchangeability assumptions (potentially controlling for observable characteristics), requiring that this evidence be available at the time of the policy or deferral decision. Our approach complements empirical Bayes by accommodating decisions made before a sufficiently large body of comparable evidence has accumulated. \qed
\end{rem}

\subsection{Optimal decisions}

We can now state our first theorem.

\begin{theorem}[Confidence interval representation] \label{thm:dead_zone}
Let Assumptions \ref{ass:experiment}, \ref{ass:2} hold with $s_2,s_1 > 0$. 
    There exists minimax-regret optimal decisions $(\delta_1^\star,\delta_2^\star)$, and a threshold $\tau^\star \ge 0$ so that 
$$  
\begin{aligned} 
\delta_{1}^\star(X_1) = \begin{cases} 
1 & \text{ if } \frac{X_1}{s_1} \ge \tau^\star  \\ 
0 & \text{ if } \frac{X_1}{s_1} \le -\tau^\star \\ 
\texttt{NA} & \text{ if } \frac{X_1}{s_1} \in (-\tau^\star,\tau^\star)
\end{cases} 
\end{aligned}, \qquad \delta_2^\star(X_1,X_2) = 1\Big\{ \frac{s_2^2}{s_1^2} X_1 + X_2 \ge 0\Big\}
 $$    
 where for $\Phi$ and $\phi$ denoting the Gaussian CDF and PDF respectively, and $w = \frac{s_2}{s_1}$ 
 $$ 
 \small 
 \begin{aligned}
 \tau^\star \in \mathrm{arg} \min_{\tau: \tau \ge 0} \max_{d: d \in [0, \bar{\Delta}/s_1]}&
    \Big\{d\underbrace{\Phi\!\left(-\tau-d\right)}_{
        \substack{\text{probability of an incorrect}\\
                  \text{first-period decision}}}
    +d\underbrace{\int_{-\tau}^{\tau}
        \Phi\!\left(-wx-d/w\right)
        \phi\!\left(x-d\right)\,dx}_{
        \substack{\text{probability of continuation and}\\
                  \text{an incorrect second-period decision}}}
\\ & \quad
    +\frac{c}{s_1} \underbrace{\left[
        \Phi\!\left(\tau-d\right)
        -\Phi\!\left(-\tau-d\right)
    \right]}_{\text{probability of continuation}}\Big\}.
\end{aligned}
$$ 
\end{theorem}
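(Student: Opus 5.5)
We prove the result by a reduction to a symmetric two-point game, combined with a guess-and-verify argument for the minimax rule. First normalize by $s_1$: write $Z_1 = X_1/s_1\sim\mathcal N(d,1)$ with $d=\Delta/s_1$, and $Z_2 = X_2/s_1 \sim \mathcal N(d,w^2)$. Regret scales as $s_1$ times the normalized regret with cost $c/s_1$, so it suffices to treat $s_1=1$.

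\textbf{Step 1: second-stage rule.} Conditional on continuation, the problem reduces to a one-shot treatment choice with sufficient statistic $\bar X = (w^{-2}... )$. Specifically, use the precision-weighted average $(s_2^2 X_1 + s_1^2 X_2)/(s_1^2+s_2^2)$, which up to a positive factor equals $\frac{s_2^2}{s_1^2}X_1 + X_2$. This statistic is Gaussian with mean $\Delta$ times a positive constant.

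\textbf{Step 2: reduction to symmetric rules.} The problem is invariant under the sign flip $(X_1,X_2,\Delta)\mapsto(-X_1,-X_2,-\Delta)$ together with swapping actions $0$ and $1$. Because regret is convex-linear in randomized rules, we can symmetrize any rule without increasing worst-case regret. We may therefore restrict to symmetric rules and evaluate regret only at $\Delta\ge0$, where the loss is $\Delta\,P(\text{choose }0)+c\,P(\text{continue})$.

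\textbf{Step 3: the least-favorable prior.} We construct a two-point prior $\pi$ with equal mass at $\pm d^\star$, with $d^\star$ chosen by the saddle point of the displayed min–max. We then show that the posterior Bayes rule under $\pi$ has the claimed form.
\begin{itemize}
\item The posterior is monotone in $X_1$ by the monotone likelihood ratio property, so the Bayes stopping decision compares the posterior expected loss of stopping, $d^\star\min(p,1-p)$ with $p$ the posterior probability, against continuation. Continuation costs $c$ plus the expected loss from the second-stage Bayes rule, and that second-stage rule thresholds the sufficient statistic at $0$.
\item The difference between these two quantities is a function of $|X_1|$ alone. It should be monotone, since stopping loss decreases in $|X_1|$ and the information gain shrinks as $|X_1|$ grows. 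This yields a symmetric interval $(-\tau,\tau)$ for abstention.
\end{itemize}
Substituting this rule, the regret at $d\ge0$ is exactly the displayed objective. The three terms are: an incorrect first-period rejection, $\Phi(-\tau-d)$; continuation with an incorrect second-stage choice, computed by integrating $Z_1=x$ over $(-\tau,\tau)$ with $P(w^2x+Z_2<0)=\Phi(-wx-d/w)$; and continuation cost.

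\textbf{Step 4: existence and equilibrium.} Existence of a minimizer $\tau^\star$ of $R(\tau)=\max_{d\in[0,\bar\Delta/s_1]} r(\tau,d)$ follows from continuity of $r$ and compactness of $d$. For $\tau$, note that $R(\tau)\to$ the $\tau=\infty$ value as $\tau\to\infty$, so the minimizer can be taken on a compact set, with $\tau=\infty$ allowed as a limit if necessary. It remains to verify that the pair forms a saddle point. The rule with $\tau^\star$ must be Bayes against some prior supported on $\arg\max_d r(\tau^\star,d)$, possibly a mixture of symmetric two-point priors. For that we use a minimax theorem on the convex set of (randomized) threshold rules versus priors on the compact set $[-\bar\Delta,\bar\Delta]$, plus the complete-class fact from Step 3 that Bayes rules against symmetric priors are symmetric thresholds.

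\textbf{Main obstacle.} The main obstacle is the monotonicity in Step 3: the Bayes continuation region must be a single interval even when the least-favorable prior is a mixture over several $d$ values. If it is not an interval, we would first show the rule can be replaced by a threshold rule without increasing regret at any $d\ge0$. The route is an essentially-complete-class argument for monotone rules in MLR families (Karlin–Rubin style), applied to the three-action problem.
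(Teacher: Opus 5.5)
There is a genuine gap, and it is the step you label the main obstacle. The proposal never proves that, against a symmetric prior, the Bayes continuation region is a symmetric interval $\{|X_1|/s_1<\tau\}$. Your heuristic does not deliver this. Write $r_h$ for the posterior risk of the better terminal action given the sufficient statistic $Y_h$. The stopping risk $r_1(y)$ and the expected risk after continuation, $\mathbb E_\pi[r_2(Y_2)\mid Y_1=y]$, both fall as $|y|$ grows, so nothing controls the sign of $r_1(y)-c-\mathbb E_\pi[r_2(Y_2)\mid Y_1=y]$. ``The information gain shrinks'' is the statement to be proved, not an argument for it.

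Everything downstream rests on this fact:
\begin{enumerate}
\item The least-favorable prior need not be a single pair $\pm d^\star$, because the maximizing set in $d$ can contain several points. So you need the property for arbitrary symmetric priors.
\item Step~4 uses it as the complete-class fact. It also needs uniqueness of the Bayes threshold to collapse the mixture of thresholds, produced by a minimax theorem over randomized threshold rules, into a single threshold. You do not address this.
\item The Karlin--Rubin fallback does not apply as stated. Conditional on $X_1/s_1=x$ and $\Delta=s_1d\ge0$, the loss of continuing is $c+\Delta\,\Phi(-wx-d/w)$. That depends on the observation, not only on the parameter and the action.
\end{enumerate}

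The missing idea is an exact value-of-information identity that holds for every symmetric nondegenerate prior $\pi$. This is how the paper argues, via Theorem~\ref{thm:multi_period_dead_zone} with $H=2$. Let $Y_1=X_1/s_1^2$ and $Y_2=Y_1+X_2/s_2^2$. Let $\mu_2$ be the posterior mean given $Y_2$, which is odd and strictly increasing. For $y\ge0$, stopping means implementing, so $r_1(y)=\mathbb E_\pi[\Delta_-\mid Y_1=y]$. Also $r_2(y')=\mathbb E_\pi[\Delta_-\mid Y_2=y']-(-\mu_2(y'))_+$. The tower property then gives
\[
r_1(y)-\mathbb E_\pi[r_2(Y_2)\mid Y_1=y]=\mathbb E_\pi\bigl[(-\mu_2(Y_2))_+\mid Y_1=y\bigr],\qquad y\ge0.
\]
The predictive density of $Y_2$ given $Y_1=y$ is $g(y'-y)M_2(y')/M_1(y)$, where $g$ is the $\mathcal N(0,s_2^{-2})$ density, $M_h(y)=\int\exp\{y\Delta-S_h^2\Delta^2/2\}\,\pi(d\Delta)$ and $S_h^2=\sum_{r\le h}s_r^{-2}$. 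This density has a monotone likelihood ratio in $(y,y')$ whatever $\pi$ is. Since $(-\mu_2)_+$ is nonincreasing and nonconstant, the right side is strictly decreasing in $y\ge0$.

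Three consequences follow. The continuation index is even and strictly decreasing in $|y|$. The Bayes continuation region is an interval $\{|Y_1|<b\}$. The Bayes rule is almost surely unique. Combine this with three further ingredients: Wald's theorem (existence of a minimax rule and a least-favorable prior), symmetrization of that prior by concavity of the Bayes risk, and nondegeneracy ($V>0$, while the point mass at zero has Bayes risk zero). Together they show that the minimax rule coincides almost surely with the threshold rule. Your Step~3 regret computation is correct and then yields the displayed characterization of $\tau^\star$. No two-point least-favorable prior and no separate saddle-point verification in $(\tau,d)$ is needed.
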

\begin{proof}
    See Appendix \ref{proof:dead_zone}.
\end{proof}

Theorem \ref{thm:dead_zone} shows that in the original pilot study, the optimal decision takes the form of a standard interval-based decision. For sufficiently large positive (resp. negative) estimated effects, the policy is accepted (resp. rejected), while sufficiently small estimated effects lead to deferral. The confidence intervals  for which a policy recommendation is deferred depends on how costly is a follow-up experiment. A particularly costly experiment, as for large-scale experimentations discussed in \cite{muralidharan2017experimentation}, policy recommendation may demand more lenient (i.e., smaller) thresholds $\tau^\star$ because of the cost of additional data collection. Easily replicable experiments may require more stringent (i.e., larger) thresholds. 

The following proposition identifies key properties of $\tau^\star$. 

\begin{prop}[Properties of $\tau^\star$] \label{cor:1} Let the conditions in Theorem \ref{thm:dead_zone} hold and let $\bar{\Delta}/s_1 \ge 0.76$. Then the following hold: 
\begin{itemize} 
\item[(A)] 
$\tau^\star$ is constant in $\bar{\Delta}$ and is only a function of $(s_2/s_1,c/s_1)$.   Denote this as $\tau^\star(w,\gamma)$, with $w = s_2/s_1, \gamma = c/s_1$. 
\item[(B)] For every fixed $w>0$, $\tau^\star(w,\gamma)$
is unique, continuous and nonincreasing in $\gamma\geq0$.
Moreover, $\tau^\star(w,0)=\infty, \lim_{\gamma\downarrow0}\tau^\star(w,\gamma)=\infty$, and
$\tau^\star(w,\gamma)<\infty$ for every $\gamma>0$.
\end{itemize} 
\end{prop}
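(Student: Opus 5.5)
We work directly with the reduced objective from Theorem \ref{thm:dead_zone}. Write
\[
R(\tau,d;w,\gamma)=d\,\Phi(-\tau-d)+d\int_{-\tau}^{\tau}\Phi(-wx-d/w)\,\phi(x-d)\,dx+\gamma\big[\Phi(\tau-d)-\Phi(-\tau-d)\big],
\]
with $\gamma=c/s_1$, and $M(\tau;w,\gamma,D)=\max_{d\in[0,D]}R(\tau,d;w,\gamma)$ for $D=\bar\Delta/s_1$. The function $R$ depends on $(s_1,s_2,c)$ only through $(w,\gamma)$. So the only remaining dependence on $\bar\Delta$ is through the constraint $d\le D$.

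\textbf{Part (A).} The plan is to show that for $D\ge 0.76$ the inner maximizer lies in $[0,0.76]$ at any relevant $\tau$. Then $M$, and hence its argmin, does not depend on $D$.

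First, as $d\to\infty$ the factor $d$ is dominated by Gaussian tails, so $R\to 0$ from the first two terms. The third term $\gamma[\Phi(\tau-d)-\Phi(-\tau-d)]$ is decreasing in $d\ge0$, since its derivative is $\gamma[\phi(\tau+d)-\phi(\tau-d)]\le 0$.

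At $\tau=0$, $R=d\Phi(-d)$, whose maximizer is $d_0\approx 0.752<0.76$. This is presumably where the constant $0.76$ comes from.

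For $\tau>0$, I would show that $\partial_d R<0$ for $d\ge 0.76$. The first two terms together equal $d\cdot P_d(\text{wrong final decision})$. This error probability is at most $\Phi(-d)$ at $\tau=0$ and is smaller for larger $\tau$, because pooling more information reduces the error. I expect the maximizer of $d\,P_d(\text{error})$ to shift left as $\tau$ grows, since the error decays faster in $d$ when the effective precision is higher. Combined with the decreasing cost term, this gives the bound $d^\star(\tau)\le 0.76$.

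\emph{This is the step I expect to be hardest.} A clean route is to write the error probability as a mixture of tails $\Phi(-d\kappa)$ with $\kappa\ge1$. Each map $d\mapsto d\Phi(-\kappa d)$ is maximized at $d_0/\kappa\le d_0$ and is decreasing beyond that point, so the mixture is decreasing for $d\ge d_0$ as well.

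\textbf{Part (B).} Once $D$ has been removed, set $M(\tau;\gamma)$. At $\gamma=0$ the continuation term is free and $M(\tau;0)$ is strictly decreasing in $\tau$: larger $\tau$ means more pooled information and lower error for every $d$. Hence the infimum is approached only as $\tau\to\infty$, giving $\tau^\star(w,0)=\infty$.

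For $\gamma>0$, take $d$ small with $\tau$ large. The continuation probability tends to $1$, so $M(\tau;\gamma)\ge \gamma-o(1)$. Meanwhile $M(0;\gamma)=d_0\Phi(-d_0)\approx0.17$ is finite. The minimizer therefore lies in a compact set once $\gamma$ exceeds small values, and more generally the comparison $M(\tau)\ge\gamma(2\Phi(\tau)-1)$ forces $\tau^\star<\infty$.

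For monotonicity, note that $R$ is increasing in $\gamma$ with increment $\gamma\,P(\text{continue})$, and $P(\text{continue})$ is increasing in $\tau$. This is a single-crossing (increasing differences in $(\tau,-\gamma)$) property, but it holds only pointwise in $d$, and the max over $d$ can break it. To handle this I would use the envelope/Danskin derivative $\partial_\tau M=\partial_\tau R(\tau,d^\star(\tau))$. I would then show that $M$ is quasiconvex in $\tau$ with a unique critical point, giving uniqueness, and that the first-order condition $\partial_\tau R(\tau,d^\star)=0$ shifts $\tau$ downward as $\gamma$ rises.

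Continuity then follows from Berge's maximum theorem together with uniqueness. The limit $\tau^\star\to\infty$ as $\gamma\downarrow0$ follows because any bounded limit point would be a minimizer at $\gamma=0$, which contradicts the first claim of (B).
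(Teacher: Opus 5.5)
Part (B) has a genuine gap at its central step: monotonicity of $\tau^\star$ in $\gamma$. You correctly note that increasing differences in $(\tau,-\gamma)$ hold only pointwise in $d$. But you then simply assert that the first-order condition moves $\tau$ down. Write $R=L+\gamma q$, with $L$ the first two terms and $q=\Phi(\tau-d)-\Phi(-\tau-d)$. Then $q_\tau=k:=\phi(\tau+d)+\phi(\tau-d)$ and $q_d=-kz$ with $z=\tanh(\tau d)$, so $R_\tau=k(\gamma-b)$, where $b=-L_\tau/q_\tau$ is the reduction in $L$ per unit of added continuation probability. Suppose you carry out your route by the implicit function theorem on the saddle system $R_d=R_\tau=0$, where $R_{dd}\le0$ and $R_{\tau\tau}>0$. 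You find that $\partial\tau^\star/\partial\gamma$ has the sign of $R_{dd}+zR_{\tau d}=R_{dd}-kz\,b_d$. This is negative when $b_d>0$. When the worst case sits where $b$ decreases in $d$, however, the two terms have opposite signs. And when worst-case effects tie, there is no first-order condition to differentiate at all.

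This is exactly where the paper's work lies. It introduces the $\gamma$-free combination $G=R_d+zR_\tau=L_d+zL_\tau$, which equals $zR_\tau$ at every worst case. It shows $G_d<0$ at worst cases where $G=0$, using strict log-concavity of $b$ in $d$ and a lemma giving $G<0$ when $a^2d^2>2$ and $b_d\le0$, with $a=\sqrt{1+w^2}/w$. It rules out sign-reversing ties, and concludes that $G$ cannot switch from positive to negative along the (weakly decreasing) worst-case path as $\gamma$ rises. Strict convexity of $R$ in $x=2\Phi(\tau)-1$, which follows from $R_{\tau\tau}+\tau R_\tau>0$ for $d>0$, then turns this into monotonicity of $\tau^\star$. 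The same convexity is what delivers uniqueness, so your quasiconvexity step is also unsupported as stated.

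Your finiteness argument fails for small $\gamma>0$. The bound $M(\tau)\ge\gamma(2\Phi(\tau)-1)$ is satisfied by $\tau=\infty$ as well; indeed $M(\infty)=\gamma+u_0\Phi(-u_0)/a$, with $u_0\approx0.7518$ your $d_0$. It beats $M(0)=u_0\Phi(-u_0)\approx0.17$ only when $\gamma>0.17$. To exclude $\tau=\infty$ you need an upper comparison: $0\le b(\tau,d)\le u_0\Phi(u_0/w-w\tau)\to0$ uniformly on $[0,u_0]$. Hence $R_\tau>0$ for every $d$ beyond some finite $\tau$, and $M$ is strictly increasing from there on.

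Likewise, at $\gamma=0$ the strict decrease of the error probability in $\tau$ does not follow from ``more information''. For $d>0$, widening the band at $X_1/s_1=+\tau$ raises the error. You must check $\phi(\tau+d)\Phi(d/w-w\tau)>\phi(\tau-d)\Phi(-d/w-w\tau)$; the difference equals $\phi(\tau+d)\int_0^\infty\phi(d/w-w\tau-y)(1-e^{-2dy/w})\,dy>0$.

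Finally, in Part (A) your mixture of tails $\Phi(-\kappa d)$ with $\kappa\ge1$ is precisely the paper's device, but its existence is the entire content of (A). The heuristic that a pointwise smaller error curve peaks further left is not an argument. The paper conditions on $V=(wZ_1+Z_2)/\sqrt{1+w^2}\sim\mathcal N(ad,1)$, with $Z_t=X_t/s_t$; given $V$, the law of $Z_1$ is free of $d$. This writes the error probability as $\mathbb P(U-\varepsilon>ad)$, with $\varepsilon\sim\mathcal N(0,1)$ and $U$ independent and symmetric. $U$ carries an atom $2\Phi(\Gamma)-1$ at zero and density $w\phi(\Gamma+w|u|)$ elsewhere, where $\Gamma=\tau\sqrt{1+w^2}$. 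Representing $U$ as a centered Gaussian scale mixture with variance $\sigma^2\le 1/w^2$ gives $\kappa=a/\sqrt{1+\sigma^2}\in[1,a]$.
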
 

\begin{proof} See Appendix \ref{proof:cor:1}.  
\end{proof} 
The critical threshold is independent of the bound $\bar{\Delta}$ for sufficiently precise initial experiment (small $s_1$).\footnote{Intuitively, regret is small when $|\Delta|$ is close to zero because the welfare consequences of an incorrect decision are small. Regret is also small when $|\Delta|$ is large relative to $s_1$ because the sign of the effect is easier to determine. The least-favorable value of $|\Delta|$ therefore has an intermediate magnitude. Once the parameter space contains this value (which occurs when $\bar{\Delta}/s_1\geq 0.76$) further increases in $\bar{\Delta}$ do not affect the worst-case regret or the optimal threshold $\tau^\star$.
} In particular, $\tau^\star$ is only a function of the variance of the follow up study relative to the next experiment $s_2^2/s_1^2$, and of the cost of deferral per standard deviation unit $c/s_1$. We will therefore write throughout $\tau^\star(w,\gamma)$ where $w = \frac{s_2}{s_1},\gamma = \frac{c}{s_1}$ as a function of these two quantities, assuming that the conditions in Corollary \ref{cor:1} hold.  

The threshold also exhibits a desiderable behavior: it is monotonic in the welfare cost of future data collection, for given precision of the first and follow up study.

\begin{table}[!ht]
\centering
\caption{Optimal thresholds and normalized costs for unconstrained $\Delta$.}
\label{tab:optimal_thresholds}
\begin{tabular}{ccc}
\toprule
\multicolumn{3}{l}{
    \textit{Panel A: Optimal threshold $\tau^\star$ for selected normalized costs}
} \\
\addlinespace[2pt]
$\displaystyle \frac{c}{s_1}$
&
\begin{tabular}{c}
Perfect follow-up \\[-2pt]
$s_2=0$
\end{tabular}
&
\begin{tabular}{c}
Equal precision \\[-2pt]
$s_2=s_1$
\end{tabular}
\\
\midrule
$0$     & $\infty$ & $\infty$ \\
$0.025$ & $5.564$  & $0.880$  \\
$0.050$ & $2.757$  & $0.564$  \\
$0.100$ & $1.295$  & $0.265$  \\
$0.150$ & $0.761$  & $0.109$  \\
$0.200$ & $0.469$  & $0.010$  \\
$0.250$ & $0.280$  & $0$      \\
$0.300$ & $0.145$  & $0$      \\
$0.400$ & $0$      & $0$      \\
\midrule
\multicolumn{3}{l}{
    \textit{Panel B: Normalized cost corresponding to $\tau^\star=1.96$}
} \\
\addlinespace[2pt]
Follow-up design
&
Target threshold
&
$\displaystyle \frac{c}{s_1}$
\\
\midrule
Perfect follow-up, $s_2=0$
&
$1.96$
&
$0.06918$
\\
Equal precision, $s_2=s_1$
&
$1.96$
&
$0.0015$
\\
\bottomrule
\end{tabular}

\begin{minipage}{0.92\textwidth}
\vspace{0.15cm}
\footnotesize
\textit{Notes:} Panel A reports the minimax-optimal 
threshold $\tau^\star$ for as a function of $c/s_1$. Panel B reports the cost that generates
$\tau^\star=1.96$, corresponding to a two-sided nominal significance
level of $5\%$.  
\end{minipage}
\end{table}
\subsection{$A$-value: an economic notion of significance value}

To gain insights on the behavior of the threshold $\tau^\star$, Table \ref{tab:optimal_thresholds} reports 
the value of $\tau^\star$ as a function of the cost per standard deviation units $\gamma = c/s_1$. 
We consider two regimes. The first regime sets $w = s_2/s_1$ equal to one, so that the follow up experiment has the same precision as the first study. The second regime sets $s_2/s_1 = 0$, so that the follow up experiment is perfectly precise. In practice, we may expect to be within these two scenarios. 

As $c/s_1 \rightarrow 0$, the threshold $\tau^\star \rightarrow \infty$, so that whenever the cost of abstention is small, critical values become more stringent. Similar behavior for $\tau^\star$ occurs as $w$ decreases,  so that follow-up data collection is more precise. To motivate the standard critical value $1.96$ instead, we need either a perfect follow up study and $c= 7\% s_1$ or with an equally precise follow up study (arguably the most credible case), we require a much smaller cost $c = 0.15\% s_1$.

While the exact cost $c$ may be difficult to elicit in practice, our recommendation is that researchers report the $A$-value alongside conventional $p$-values. The $A$-value measures the break-even welfare cost of additional data collection at which the planner is indifferent between collecting further evidence and making an immediate
policy decision. 

\begin{defn}[$A$-value] Under the conditions in Corollary \ref{cor:1}, denote the $A$-value the smallest welfare cost of follow up data collection so that results are ``deemed significant''. Specifically, denote 
$$
A_{\mathrm{value}}(X_1;s_1, w) = s_1 \inf\{\gamma: |X_1|/s_1 \ge \tau^\star(w,\gamma)\},  \qquad w = \frac{s_2}{s_1}
$$
with $\tau^\star$ in Theorem \ref{thm:dead_zone}. 
Denote the benchmark as $A_{\mathrm{value}}^\star(X_1;s_1) = A_{\mathrm{value}}(X_1; s_1, w = 1)$. 
\end{defn}

The $A$-value relates statistical significance to the immediate accept/reject policy decision based on welfare considerations. In words: 
\begin{center}
\textit{A smaller $A$-value indicates more decisive current evidence, as it is optimal to defer the policy decision only when additional evidence can be collected at a sufficiently low cost.}
\end{center} 
 We establish its properties below.

\begin{prop}[Properties of the $A$-value]
\label{prop:c_value_properties}
Denote  $p(|X_1|/s_1) = 2\left\{1 - \Phi\Big(\frac{|X_1|}{s_1}\Big)\right\}$ the two-sided $p$-value and $  \Gamma_w(t) = \inf\{\gamma \ge 0: t\ge \tau^\star(w,\gamma)\}$.  
The following properties hold.

\begin{enumerate}

\item
The $A$-value depends on the initial estimate only through its
two-sided $p$-value:
\[
A_{\mathrm{value}}(X_1;s_1, w)
=
s_1\Gamma_w\left(
\Phi^{-1}\left(1-\frac{p(|X_1|/s_1)}{2}\right)
\right).
\]
For fixed $(s_1,w)$, the $A$-value is weakly decreasing in
$|X_1/s_1|$ and weakly increasing in $p(|X_1|/s_1)$.
\item For fixed $p$-value $p(|X_1|/s_1)$ and $w$, the $A$-value is increasing in the variance $s_1^2$. 
\item
The $A$-value is scale equivariant. For every $a>0$,
$ 
A_{\mathrm{value}}(aX_1;as_1, w)
=
aA_{\mathrm{value}}(X_1;s_1, w).
$ 
\item $\delta_1^\star(X_1) = \texttt{NA}$ as defined in Theorem \ref{thm:dead_zone} if and only if $c < A_{\mathrm{value}}(X_1;s_1, w)$. 
\end{enumerate}
\end{prop}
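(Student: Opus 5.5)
The whole argument rests on Proposition \ref{cor:1}(B). For fixed $w$, the map $\gamma\mapsto\tau^\star(w,\gamma)$ is unique, continuous and nonincreasing on $[0,\infty)$. It is finite for $\gamma>0$, equals $+\infty$ at $\gamma=0$, and tends to $\infty$ as $\gamma\downarrow 0$. I will therefore treat $\tau^\star(w,\cdot)$ as a continuous, monotone, extended-real-valued function. The plan is to rewrite the $A$-value through $\Gamma_w$ and derive each item from monotonicity of the sublevel sets $\{\gamma: t\ge \tau^\star(w,\gamma)\}$.

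\textbf{Item 1.} With $t=|X_1|/s_1$, the definition gives directly $A_{\mathrm{value}}(X_1;s_1,w)=s_1\Gamma_w(t)$. (Since $\tau^\star(w,0)=\infty$, we have $\gamma=0$ in the set only if $t=\infty$, so restricting to $\gamma\ge 0$ is harmless.) The two-sided $p$-value inverts as $t=\Phi^{-1}(1-p/2)$, which yields the displayed identity. For monotonicity, suppose $t'\ge t$. Then $\{\gamma:t\ge\tau^\star\}\subseteq\{\gamma:t'\ge\tau^\star\}$, so the infimum weakly decreases and $\Gamma_w$ is weakly decreasing in $t$. Because $t\mapsto p$ is strictly decreasing, the $A$-value is weakly increasing in $p$.

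\textbf{Items 2 and 3.} Fix $p$ and $w$. Then $t$ is fixed, so $\Gamma_w(t)$ is a constant, and $A_{\mathrm{value}}=s_1\Gamma_w(t)$ is linear in $s_1$. It is therefore increasing in $s_1^2$: strictly so when $\Gamma_w(t)>0$, and weakly (identically zero) otherwise. I will state this qualification explicitly. For scale equivariance, $aX_1/(as_1)=X_1/s_1$ and $w$ is unchanged, so $\Gamma_w$ takes the same argument and the prefactor becomes $as_1$.

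\textbf{Item 4.} This is the main step. By Theorem \ref{thm:dead_zone}, $\delta_1^\star=\texttt{NA}$ if and only if $t<\tau^\star(w,c/s_1)$. Let $S=\{\gamma\ge0:\tau^\star(w,\gamma)\le t\}$. Because $\tau^\star$ is nonincreasing, $S$ is an up-set, an interval $[\gamma_0,\infty)$ or $(\gamma_0,\infty)$ with $\gamma_0=\Gamma_w(t)$. Continuity of $\tau^\star$ in $\gamma$ on $(0,\infty)$ makes $S$ closed there, so $S=[\gamma_0,\infty)$ whenever $\gamma_0>0$. The case $\gamma_0=0$ forces $\gamma\downarrow0$ with $\tau^\star\to\infty$, which is impossible for finite $t$ unless $S$ contains all $\gamma>0$.

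Hence $c/s_1\notin S$ if and only if $c/s_1<\gamma_0$. At $c=0$, $\tau^\star=\infty>t$ gives abstention, consistent with $0<\gamma_0$ for finite $t$. Negating both sides, abstention holds exactly when $\tau^\star(w,c/s_1)>t$, i.e.\ $c<s_1\Gamma_w(t)=A_{\mathrm{value}}$.

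The delicate point is the boundary behavior: showing the infimum is attained, so the inequality is strict on the correct side. This requires continuity of $\tau^\star(w,\cdot)$, together with the convention that the first-period decision at $t=\tau^\star$ is a policy decision rather than abstention, matching the closed inequalities in Theorem \ref{thm:dead_zone}. I would check both explicitly and also handle the degenerate case $\Gamma_w(t)=\infty$, which cannot occur since $\tau^\star$ is finite and nonincreasing. I also expect to need that $\tau^\star\to 0$ for large $\gamma$, as Table \ref{tab:optimal_thresholds} suggests; I would try to argue it from the regret bound in Theorem \ref{thm:dead_zone} when $\gamma$ exceeds the maximal regret of $\tau=0$.
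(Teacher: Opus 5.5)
Your proposal is correct and follows essentially the paper's argument. Items 1--3 follow from $A_{\mathrm{value}}=s_1\Gamma_w(|X_1|/s_1)$ plus set inclusion, and for item 4 your ``the up-set $S$ is closed, so $S=[\Gamma_w(t),\infty)$'' uses the same ingredients as the paper (monotonicity, continuity, and $\tau^\star(w,\gamma)\to\infty$ as $\gamma\downarrow0$), which the paper instead phrases as finding $\varepsilon>0$ with $t<\tau^\star(w,\gamma+\varepsilon)$.

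Two small fixes. For finite $t$ the case $\Gamma_w(t)=0$ is simply impossible, with no ``unless'' exception. You also need not rule out $\Gamma_w(t)=\infty$: under $\inf\varnothing=\infty$ both sides of the equivalence then hold trivially. In any case ``finite and nonincreasing'' would not rule it out; that needs $\tau^\star(w,\gamma)=0$ for $\gamma>u_0$, proved in Lemma~\ref{lem:tau_cost_unique}.
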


\begin{proof} See Appendix \ref{proof:prop:c_value_properties}.  
\end{proof}

Proposition \ref{prop:c_value_properties} establishes four main properties of the $A$-value. First, after normalization by $s_1$, the $A$-value depends on the initial estimate $X_1$ only through its $p$-value and on the design of the follow-up experiment through $w=s_2/s_1$. As a benchmark, we recommend setting $w=1$, which corresponds to a follow-up estimate with the same precision as the initial estimate. For fixed $(s_1,w)$, the $A$-value is increasing in the $p$-value. Thus, as with the $p$-value, larger values indicate weaker initial evidence and a greater value of collecting additional information.

Second, unlike the $p$-value, the $A$-value also depends on the uncertainty through $s_1$ separately. Holding the $p$-value and the relative precision $s_2/s_1$ of the follow-up experiment fixed, the $A$-value is proportional to $s_1$. More precise initial experiments therefore lead to smaller $A$-values (and therefore more conclusive evidence). Intuitively, at a fixed test statistic, a smaller $s_1$ implies both a smaller estimated effect and less posterior uncertainty. The welfare consequences of an incorrect immediate decision, and hence the value of collecting additional evidence, are therefore smaller. 

Third, the $A$-value changes proportionally with the units in which $X_1$ and $s_1$ are measured. Comparisons across studies therefore require estimated effects to be expressed in common welfare units, as we discuss in our application. 

Fourth, any maximin optimal rule in Theorem \ref{thm:dead_zone} chooses to abstain if and only if the welfare cost of abstention is below the $A$-value. This implies that more stringent $A$-values require smaller cost of deferral to justify abstention.

Table \ref{tab:c_value_by_s1} tabulates the $A$-value as a function of the $p$-value and the standard error of the initial estimate. We set $w=1$ and report the resulting $A$-values relative to the $A$-value evaluated at the $5\%$ $p$-value with standard error $s_1 = 1$.  The table illustrates the role of precision. Holding the $p$-value fixed, the $A$-value increases proportionally with $s_1$. An imprecise estimate leaves greater uncertainty and therefore makes additional evidence more valuable. Similarly, holding $s_1$ fixed, the $A$-value increases with the $p$-value. 

\begin{table}[t]
    \centering
    \caption{Normalized $A$-values for different initial standard errors when $s_2/s_1=1$.}
    \label{tab:c_value_by_s1}
    \begin{tabular}{lccccc}
        \toprule
        & \multicolumn{5}{c}{$A_{\mathrm{value}}/A_{\mathrm{ref}}$} \\
        \cmidrule(lr){2-6}
        $p$-value
        & $s_1=0.05$
        & $s_1=0.1$
        & $s_1=0.2$
        & $s_1=0.5$
        & $s_1=1$ \\
        \midrule
        0.05 & 0.050 & 0.100 & 0.200 & 0.500 & $\mathbf{1.000}$ \\
        0.10 & 0.126 & 0.252 & 0.504 & 1.260 & 2.520 \\
        0.15 & 0.219 & 0.438 & 0.876 & 2.190 & 4.380 \\
        0.20 & 0.327 & 0.654 & 1.308 & 3.270 & 6.540 \\
        0.25 & 0.449 & 0.898 & 1.796 & 4.490 & 8.980 \\
        \bottomrule
    \end{tabular}

    \begin{minipage}{0.92\textwidth}
        \vspace{0.18cm}
        \centering
        \small
        \textbf{Reference value:}\quad
        $
        A_{\mathrm{ref}}$ indicates the $A$-value at $p = 0.05$ and $s_1 = 1$ and $C_\mathrm{ref} = 0.0015$

        \vspace{0.3cm}
        \raggedright
        \footnotesize
        \textit{Notes:} The table reports the $A$-value relative to
        $A_{\mathrm{ref}}$ for an equal-precision follow-up design,
        $s_2/s_1=1$.
    \end{minipage}
\end{table}

\subsection{Using $A$-values for policy rankings} \label{sec:multiple_comparisons}

In practice, the \(A\)-value is particularly useful when researchers face several policy questions and must decide which follow-up experiments to conduct subject to a common resource constraint. This problem is related in spirit to multiple testing, but differs in its objective since rather than controlling erroneous policy recommendations, it allocates scarce capacity for future data collection across policies.
Specifically, researchers face \(J\) independent policy questions but can collect follow-up evidence for exactly \(q\) of them.

\begin{ass}[Multiple policy comparisons] \label{ass:multiple} Consider $J$ policies. For each policy \(j\), suppose that
$$
X_{1j} \sim\mathcal N(\Delta_j,s_{1j}^2),
\qquad
X_{2j} \sim\mathcal N(\Delta_j,s_{2j}^2),
$$
where the $X_{tj}$ are mutually independent across policies and periods. Let
$
\boldsymbol\Delta
=
(\Delta_1,\ldots,\Delta_J)
\in
\prod_{j=1}^J[-\bar\Delta_j,\bar\Delta_j],
$
and write
$
A_j
=
A_{\mathrm{value}}(X_{1j};s_{1j},\frac{s_{2j}}{s_{1j}}).
$
\end{ass} 

Independence is natural when the policy-specific signals are constructed from separate samples; this is common when studying interventions in different experimental sites or populations. We next define the class of feasible strategies under the capacity constraint.

\begin{defn}[Strategy with $q$-experimentation constraints] A strategy
\(\boldsymbol\delta=(\boldsymbol\delta_1,\boldsymbol\delta_2)\)
consists of a measurable first-period rule
$
\boldsymbol\delta_1:
\mathbb R^J
\rightarrow
\{0,1,\texttt{NA}\}^J,
$ where \(\delta_{1j}(\mathbf X_1)=1\) denotes immediate implementation,
\(\delta_{1j}(\mathbf X_1)=0\) denotes immediate rejection, and
\(\delta_{1j}(\mathbf X_1)=\texttt{NA}\) denotes collecting follow-up evidence for policy \(j\). Feasibility requires
$
\sum_{j=1}^J
1\{\delta_{1j}(\mathbf X_1)=\texttt{NA}\}
=
q
$
for a given capacity constraint $q\leq J$, almost surely.  

After the first-period decisions, researchers observe \(X_{2j}\) only for policies satisfying
\(\delta_{1j}(\mathbf X_1)=\texttt{NA}\). For each such policy,
\(\delta_{2j}\in\{0,1\}\) denotes the terminal implementation decision and may depend on \(\mathbf X_1\) and all the follow-up evidence collected under the first-period rule. Let \(\mathcal D_q\) denote the class of all measurable strategies satisfying these constraints. \qed 
\end{defn}

We evaluate a strategy \(\boldsymbol\delta\in\mathcal D_q\) using regret. To simplify notation, write
\(\delta_{1j}=\delta_{1j}(\mathbf X_1)\), with \(\delta_{2j}\) denoting the corresponding terminal decision whenever \(\delta_{1j}=\texttt{NA}\). Define
$$
\small 
\begin{aligned}
\mathrm{Reg}(\boldsymbol\delta;\boldsymbol\Delta)
=
\mathbb E_{\boldsymbol\Delta}
\Bigg[
\frac{1}{J}  \sum_{j=1}^J
\Big\{
&
1\{\delta_{1j}=0\}\Delta_{j,+}
+
1\{\delta_{1j}=1\}\Delta_{j,-}
+
1\{\delta_{1j}=\texttt{NA}\}
\big[
(1-\delta_{2j})\Delta_{j,+}
+
\delta_{2j}\Delta_{j,-}
\big]
\Big\}
\Bigg] 
\end{aligned}
$$
where
$
\Delta_{j,+}=\max\{\Delta_j,0\},
\Delta_{j,-}=\max\{-\Delta_j,0\},   
$
and $\boldsymbol{\delta} \in \mathcal{D}_q$ so that we re-interpret the cost of experimentation as a constraint on the number of policies for which researchers may recommend collecting more evidence. See Remark \ref{rem:cost} for further discussion.

An intuitive approach is to rank policy decisions based on the $A$-values. 

\begin{defn}[Top-\(q\) \(A\)-values procedure]
\label{def:top_q_c_value}
After observing \(\mathbf X_1\), let
\(\mathcal J_q(\mathbf X_1)\) contain the indices of the \(q\) largest \(A\)-values (with arbitrary tie-breaking rule). The top-\(q\) \(A\)-values procedure
\(\boldsymbol\delta^{A,q}\in\mathcal D_q\) is defined by
$$
\delta_{1j}^{A,q}(\mathbf X_1)
=
\begin{cases}
\texttt{NA},
& j\in\mathcal J_q(\mathbf X_1),\\
1,
& j\notin\mathcal J_q(\mathbf X_1)
\text{ and }X_{1j}\geq0,\\
0,
& j\notin\mathcal J_q(\mathbf X_1)
\text{ and }X_{1j}<0,
\end{cases}
$$
and, for every \(j\in\mathcal J_q(\mathbf X_1)\),
$
\delta_{2j}^{A,q}(X_{1j},X_{2j})
=
1\left\{
\frac{s_{2j}^2}{s_{1j}^2}X_{1j}
+
X_{2j}
\geq0
\right\}.
$
Thus, the procedure collects follow-up evidence for the \(q\) policies with the largest \(A\)-values, makes an immediate decision based on the sign of \(X_{1j}\) for every other policy, and uses the combined first- and second-period evidence for the selected policies.  \qed 
\end{defn}

Our goal is to study how the top-$q$ $A$-value solution compared to the regret of the uncostrained solution  defined as  
 $$
R_J^\star = \min_{\boldsymbol{\delta} \in \mathcal{D}_J} \sup_{\boldsymbol{\Delta} \in \prod_{j=1}^J[-\bar\Delta_j,\bar\Delta_j]} \mathrm{Reg}(\boldsymbol\delta;\boldsymbol\Delta). 
 $$
 Here the minimization is over all measurable functions $\boldsymbol{\delta}$ where the capacity constraint is not binding ($q = J$), therefore allowing for each policy to elicit additional data at \textit{no cost}.

\begin{theorem}[Regret bound for the top-\(q\) \(A\)-values procedure]
\label{prop:top_q_c_value}
Suppose that Assumption \ref{ass:multiple} holds and $s_{1j},s_{2j} > 0, \bar{\Delta}_j/s_{1j} > 0.76$. For given realization of \(\mathbf X_1\), let
$
A_{(1)}
\geq
A_{(2)}
\geq
\cdots
\geq
A_{(J)}$
denote the ordered \(A\)-values. Then, for every
\(q\in\{1,\ldots,J-1\}\), $\boldsymbol\Delta\in
\prod_{j=1}^J[-\bar\Delta,\bar\Delta_j]$
$$
\begin{aligned}
\mathrm{Reg}
\big(
\boldsymbol\delta^{A,q};
\boldsymbol\Delta
\big)
-
R_J^\star
\leq \frac{J - q}{J}
\mathbb E_{\boldsymbol{\Delta}}
\left[
A_{(q)}
\right].
\end{aligned}
$$
In addition $\sup_{\boldsymbol\Delta\in
\prod_{j=1}^J[-\bar\Delta,\bar\Delta_j]} E_{\boldsymbol{\Delta}}
\left[
A_{(q)}
\right] \le E_{\boldsymbol{\Delta} = \mathbf{0}}
\left[
A_{(q)}
\right]$. 
\end{theorem}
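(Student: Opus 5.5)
The plan is to split the regret of $\boldsymbol\delta^{A,q}$ into a term that reproduces the unconstrained benchmark and an excess term generated only by the $J-q$ policies decided immediately.

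\textbf{Step 1: the benchmark $R_J^\star$.} When $q=J$, feasibility forces $\delta_{1j}=\texttt{NA}$ for every $j$, so only terminal rules matter. Since $\boldsymbol\Delta$ ranges over a product set and the data are independent across $j$, the objective is separable. I would therefore argue that $R_J^\star=\frac1J\sum_j r_j$ with
$$r_j=\sup_{|\Delta_j|\le\bar\Delta_j}\Delta_{j}\,\Phi\big(-|\Delta_j|/\sigma_j\big),$$
where $\sigma_j$ is the standard error of the pooled precision-weighted estimate. This is the classical minimax-regret optimality of the sign of a Gaussian sufficient statistic (the empirical success rule, as in \cite{manski2004}). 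The pooled statistic is exactly the one used in $\delta^\star_2$ of Theorem~\ref{thm:dead_zone}.

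\textbf{Step 2: decomposition.} For each $j$, let $L_{1j}=|\Delta_j|1\{\mathrm{sign}(X_{1j})\text{ wrong}\}$ be the immediate loss. Let $L_{2j}=|\Delta_j|\,P_{\Delta_j}(\text{pooled sign wrong}\mid X_{1j})$ be the conditional loss after follow-up. Then
$$\mathrm{Reg}(\boldsymbol\delta^{A,q};\boldsymbol\Delta)=\frac1J\sum_j\mathbb E[L_{2j}]+\frac1J\sum_j\mathbb E\big[1\{j\notin\mathcal J_q\}(L_{1j}-L_{2j})\big].$$
The first sum is at most $R_J^\star$ by Step~1, because $\mathbb E[L_{2j}]\le r_j$ for every $\Delta_j$.

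\textbf{Step 3: excess term (main obstacle).} It remains to show
$$\sum_j\mathbb E\big[1\{j\notin\mathcal J_q\}(L_{1j}-L_{2j})\big]\le\sum_j\mathbb E\big[1\{j\notin\mathcal J_q\}A_j\big].$$
Given this, the result follows from $A_j\le A_{(q)}$ for $j\notin\mathcal J_q$, since $|\{j\notin\mathcal J_q\}|=J-q$. The difficulty is that $L_{1j}-L_{2j}$ depends on the unknown $\Delta_j$, so a pointwise comparison with $A_j$ cannot come from the definition alone.

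I would first try to use the single-policy problem of Theorem~\ref{thm:dead_zone} with cost $c=A_j$. By Proposition~\ref{prop:c_value_properties}(4), at this cost $X_{1j}$ sits exactly on the boundary $|X_{1j}|/s_{1j}=\tau^\star(w_j,A_j/s_{1j})$. At the boundary, the least-favourable (equalizer) characterization underlying $\tau^\star$ should give indifference: the worst-case conditional gain from continuing, $L_{1j}-L_{2j}$, is at most the break-even cost. Concretely, I would show that the minimax value is unchanged by deferring at the boundary point. Then I would use continuity and monotonicity of $\tau^\star$ in $\gamma$ from Proposition~\ref{cor:1}(B) to pass from a fixed $c$ to the random $c=A_j$.

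If this pointwise route fails, the fallback is to compare a single policy across rules. For each $\gamma$, the rule that always continues has worst-case regret at most the threshold rule's regret plus $\gamma$, and I would integrate this over the distribution of $A_j$ restricted to the event $\{j\notin\mathcal J_q\}$.

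\textbf{Step 4: second claim.} By Proposition~\ref{prop:c_value_properties}(1), each $A_j$ is a weakly decreasing function of $|X_{1j}|/s_{1j}$. Hence $A_{(q)}$ is coordinatewise nonincreasing in the vector $(|X_{1j}|/s_{1j})_j$. The folded normal $|X_{1j}|$ is stochastically increasing in $|\Delta_j|$. Using independence across $j$ and a componentwise monotone coupling, $\mathbb E_{\boldsymbol\Delta}[A_{(q)}]\le\mathbb E_{\mathbf 0}[A_{(q)}]$ for every $\boldsymbol\Delta$, and taking the supremum over $\boldsymbol\Delta$ gives the second inequality.
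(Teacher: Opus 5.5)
Your Steps 1, 2 and 4 are fine. Step 4 is the paper's stochastic-ordering argument. In Step 1, the direction you actually use, $\frac1J\sum_j r_j\le R_J^\star$, is the nontrivial one and needs a product least-favorable prior, because $\delta_{2j}$ may depend on other policies' data.

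The gap is Step 3: the inequality you reduce the theorem to is false, so neither of your routes can establish it. Pointwise, with $X_{1j}/s_{1j}$ large and $\Delta_j$ very negative, $L_{1j}-L_{2j}\approx|\Delta_j|$ while $A_j\approx 0$. The boundary indifference you appeal to holds only in posterior expectation under the least-favorable prior for that particular cost, not for each $\Delta_j$. The inequality also fails in expectation. Take $J=2$, $q=1$, $w_1=1$, $\Delta_1=1.2\,s_{11}$, $\Delta_2=0$. Choose $s_{12}$ so large that policy $2$ is selected with probability arbitrarily close to one; this is possible since $\tau^\star(1,0.25)=0$ in Table~\ref{tab:optimal_thresholds} forces $A_1\le 0.25\,s_{11}$. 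Your left-hand side then tends to $1.2\,s_{11}\{\Phi(-1.2)-\Phi(-1.2\sqrt2)\}\approx 0.084\,s_{11}$. Bounding $\Gamma_1$ stepwise by the thresholds in Table~\ref{tab:optimal_thresholds} gives $\mathbb E[A_1]\le 0.05\,s_{11}$ (about $0.047\,s_{11}$).

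The theorem survives only because $A_{(q)}=A_2$ is huge. The excess has to be charged to the selection cutoff, not to the unselected policy's own $A$-value. It also cannot be controlled separately once Step 2 has spent the slack $r_j-\mathbb E_{\Delta_j}[L_{2j}]$.

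The missing argument bounds each policy's total regret. Write $r_j^\star(c)$ for the single-policy minimax value at cost $c$, so $r_j^\star(0)=r_j$. For a fixed cutoff $\lambda$, the rule that continues iff $A_j\ge\lambda$ agrees almost surely with the threshold rule at $\bar\tau_j(\lambda)=\inf_{0\le\eta<\lambda/s_{1j}}\tau^\star(w_j,\eta)$. This threshold is a limit of minimax thresholds at costs $c_n<\lambda$.

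Minimax optimality at cost $c_n$ is combined with $r_j^\star(c_n)\le r_j^\star(0)+c_n$, which holds because always continuing is feasible. This is the inequality your fallback needs, and it runs opposite to the one you wrote. Together they give zero-cost regret at most $r_j^\star(0)+c_n\,\mathbb P_{\Delta_j}(\text{stop})\le r_j^\star(0)+\lambda\,\mathbb P_{\Delta_j}(\text{stop})$ for every $\Delta_j$, and one passes to the limit.

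The cutoff must then be random but independent of $(X_{1j},X_{2j})$, which rules out plugging in $A_j$. Condition on $\mathbf X_{1,-j}$ and let $\lambda_j$ be the $q$-th largest $A$-value among the other $J-1$ policies. Policy $j$'s rule is then the single-policy cutoff rule at $\lambda_j$, and both tie conventions give the same bound. Hence its realized regret $L_j$ satisfies $\mathbb E[L_j\mid\mathbf X_{1,-j}]\le r_j^\star(0)+\lambda_j\,\mathbb P(j\notin\mathcal J_q\mid\mathbf X_{1,-j})$. Since $\lambda_j=A_{(q)}$ whenever $j\notin\mathcal J_q$, summing over $j$ and using $R_J^\star=\frac1J\sum_j r_j^\star(0)$ gives exactly $\frac{J-q}{J}\mathbb E[A_{(q)}]$.
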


\begin{proof}
See Appendix \ref{proof:prop:top_q_c_value}.
\end{proof}

Theorem \ref{prop:top_q_c_value} provides an approximation guarantee for the top-\(q\) \(A\)-values procedure. The theorem characterizes the difference between its worst-case aggregate regret and the lowest worst-case regret attainable by any strategy that can elicit more evidence with no constraint/additional cost. We can therefore interpret $R_J^\star$ as the worst-case irreducible regret.

The bound depends only on the expected \(A\)-values at the experimentation margin. In particular, \(A_{(q)}\) is the smallest \(A\)-value among the policies receiving follow-up evidence. Thus, when these marginal \(A\)-values are small, ranking policies by their \(A\)-values achieves approximately optimal worst-case regret relative to all feasible strategies. This result further highlights the interpretation of the \(A\)-value as a measure of the value of additional evidence: policies with larger \(A\)-values are those for which an immediate decision is potentially most costly and therefore receive priority when experimentation capacity is scarce.

While the bound holds for every treatment-effect vector \(\boldsymbol\Delta\), because the worst-case bound $\mathbb E_{\boldsymbol{\Delta} = \boldsymbol 0}
\left[
A_{(q)}
\right]$ can be computed as a function of the precision of each study in analytic form, researchers may ex-ante commit to a given $q$ to guarantee control of the excess risk.  

\begin{rem}[\(A\)-values and \(p\)-value rankings]
Suppose that all policies have the same initial and follow-up precisions, so that \(s_{1j}=s_1\) and \(s_{2j}=s_2\) for every \(j\). The \(A\)-value is then increasing in the two-sided \(p\)-value. Consequently, the top-\(q\) \(A\)-values procedure collects additional evidence for the \(q\) policies with the largest \(p\)-values. Experimentation therefore follows the reverse of the conventional multiple-testing ranking where policies with the least decisive initial evidence receive priority for additional data collection. Whenever $s_j$ differs across policies, the ranking under the $A$-value may differ from the ranking under the $p$-value as the $A$-value accounts in addition for the precision of the estimates. \qed
\end{rem}

\begin{rem}[Budget constraint and common cost of experimentation] \label{rem:cost} We can interpret $\mathrm{Reg}(\boldsymbol\delta;\boldsymbol\Delta)$ under the $q$ budget constraints under a dual representation of worst-case regret with experimentation costs, where we assume that the \textit{worst-case} cost of each policy decision is the same. Because the cost of deferring policy decision $j$ depends on the precision $s_{2j}$ of the follow-up experiment, we can interpret this dual representation as restricting attention to follow-up designs where the follow-up precision $s_{2j}$ may differ across policies and is calibrated so that the corresponding maximum welfare cost is common across studies.\footnote{If the analyst wants to include heterogeneous costs directly in the objective function, it is possible to do so by ranking $A$-values minus $c_j$, with $c_j$ denoting the per-study follow up welfare cost. We leave this extension to future research.}
\qed 
\end{rem}

\section{Extensions}

\subsection{Multiple time periods}
\label{sec:multiple_periods}

We now allow researchers to collect evidence over multiple periods. The setup extends Assumption \ref{ass:2}, with researchers choosing in each period whether to recommend or reject the policy or to defer the decision and collect additional evidence.

\begin{ass}[Multiple rounds of data collection]
\label{ass:multiple_periods}
In each period $h=1,\ldots,H$, researchers observe
$
X_h\mid\Delta
\sim
\mathcal{N}(\Delta,s_h^2),
X_1,\ldots,X_H
\text{ mutually independent}.
$
After observing $(X_1,\ldots,X_h)$, researchers make a decision

$$
\delta_h(X_1,\ldots,X_h)
\in
\{0,1,\texttt{NA}\},
\qquad h<H,
$$

where $\delta_h=1$ corresponds to recommending the policy,
$\delta_h=0$ corresponds to rejecting the policy, and
$\delta_h=\texttt{NA}$ corresponds to collecting additional evidence.

If $\delta_h=\texttt{NA}$, researchers collect $X_{h+1}$ at welfare cost
$c_h\geq0$. In the terminal period, no additional evidence can be collected and $
\delta_H(X_1,\ldots,X_H)\in\{0,1\}.
$
\qed
\end{ass}

Assumption \ref{ass:multiple_periods} allows researchers to defer the policy recommendation several times. The standard errors $s_h$ and the costs $c_h$ may vary across periods. For example, researchers may first conduct an inexpensive pilot study, subsequently collect evidence in a larger experiment, and finally conduct an expensive confirmatory experiment. Denote by
$$
\mathsf{h}(\boldsymbol{\delta})
=
\inf\left\{
h\leq H:
\delta_h(X_1,\ldots,X_h)\in\{0,1\}
\right\}
$$

the period in which the terminal policy decision is made, where
$\boldsymbol{\delta}=(\delta_1,\ldots,\delta_H)$. The realized loss equals
$
\ell_H(\boldsymbol{\delta};\boldsymbol{c},\Delta)
=
-\left(
\Delta
1\left\{
\delta_{\mathsf{h}(\boldsymbol{\delta})}=1
\right\}
-
\sum_{r<\mathsf{h}(\boldsymbol{\delta})}c_r
\right)
$
where $\boldsymbol{c}=(c_1,\ldots,c_{H-1})$. Thus, if researchers stop in period $h$, they implement or reject the policy using all the evidence collected up to that period and incur the cumulative experimentation cost
$
\sum_{r<h}c_r.
$
As in the two-period problem, we compare the researcher's loss to the highest achievable welfare under knowledge of $\Delta$. Expected regret is
$$
\mathrm{Reg}_H(
\boldsymbol{\delta};
\Delta,\boldsymbol{c}
)
=
\mathbb{E}_\Delta\left[
\ell_H(
\boldsymbol{\delta};
\boldsymbol{c},\Delta
)
\right]
+
\Delta 1\{\Delta\geq0\}.
$$

Researchers choose a complete sequence of contingent decisions to solve

$$
\boldsymbol{\delta}^\star
\in
\operatorname*{arg\,min}_{\boldsymbol{\delta}}
\sup_{\Delta:|\Delta|\leq\bar{\Delta}}
\mathrm{Reg}_H(
\boldsymbol{\delta};
\Delta,\boldsymbol{c}
),
$$
over the set of measurable decision sequences.

The multiple-period problem introduces two features relative to the two-period problem. First, the value of collecting evidence in period $h+1$ depends on all the subsequent experimentation opportunities. Second, reaching a later period is endogenous because it depends on the evidence collected and the continuation decisions made in the previous periods. Minimax-regret therefore evaluates the costs and policy consequences over the entire path.

Despite this path dependence, the Gaussian structure reduces the relevant history to a one-dimensional statistic. Define cumulative precision and precision-weighted evidence as
$$
S_h^2
=
\sum_{r=1}^h\frac{1}{s_r^2},
\qquad
T_h
=
\frac{\sum_{r=1}^h\frac{X_r}{s_r^2}}{S_h}.
$$
Note that $\mathbb{V}(T_h) = 1$ so that the statistic $T_h$ is standardized. 
We can now state the multiple-period extension of Theorem \ref{thm:dead_zone}.

\begin{theorem}[Multiple-period dead-zone rule]
\label{thm:multi_period_dead_zone}
Under Assumption \ref{ass:multiple_periods}, with $s_h^2 > 0$ for all $h$, there exist minimax-regret optimal decisions
$
\boldsymbol{\delta}^\star
=
(\delta_1^\star,\ldots,\delta_H^\star)
$
and thresholds
$
\tau_h^\star\in[0,\infty],
h=1,\ldots,H,
$
with $\tau_H^\star=0$, such that, for every $h<H$,
$$
\delta_h^\star(X_1,\ldots,X_h)
=
\begin{cases}
1
& \text{if } T_h \geq\tau_h^\star,\\
0
& \text{if }  T_h \leq-\tau_h^\star,\\
\texttt{NA}
& \text{if } |T_h|<\tau_h^\star,
\end{cases}, \qquad \delta_H^\star(X_1,\ldots,X_H)
=
1\{T_H\geq0\}. 
$$
To define the thresholds jointly, let
$
\boldsymbol{\tau}
=
(\tau_1,\ldots,\tau_{H-1},0)
\in
\mathcal{T}_H
=
[0,\infty]^{H-1}\times\{0\},
$
and denote
$$
\sigma(\boldsymbol{\tau})
=
\inf\left\{
h\leq H:
|T_h|\geq\tau_h
\right\}, \qquad R_H(
\boldsymbol{\tau};
\Delta,\boldsymbol{c}
)
=
\Delta
\underbrace{
\mathbb{P}_\Delta\left(
T_{\sigma(\boldsymbol{\tau})}<0
\right)
}_{\substack{\text{probability of an incorrect}\\
\text{terminal policy decision}}}
+
\sum_{h=1}^{H-1}
c_h
\underbrace{
\mathbb{P}_\Delta\left(
\sigma(\boldsymbol{\tau})>h
\right)
}_{\text{probability of collecting }X_{h+1}}
$$
The optimal thresholds can be selected so that
$
\boldsymbol{\tau}^\star
\in
\operatorname*{arg\,min}_{\boldsymbol{\tau}\in\mathcal{T}_H}
\max_{\Delta\in[0,\bar{\Delta}]}
R_H(
\boldsymbol{\tau};
\Delta,\boldsymbol{c}
).
$
\end{theorem}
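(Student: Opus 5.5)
We prove Theorem \ref{thm:multi_period_dead_zone} by adapting the argument for Theorem \ref{thm:dead_zone}: a symmetry reduction, then a least-favorable-prior (game-theoretic) argument that produces a Bayes rule against a symmetric two-point prior. The plan has four steps.

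\textbf{Step 1: sufficiency and invariance.} In period $h$, the likelihood of $(X_1,\ldots,X_h)$ given $\Delta$ is proportional to $\exp(\Delta S_h T_h - \Delta^2 S_h^2/2)$. Hence $T_h$ is sufficient for $\Delta$ given the history, and $T_{h+1}=(S_hT_h + X_{h+1}/s_{h+1}^2)/S_{h+1}$, so $(T_h)$ is a Markov process under every $\Delta$. Regret is invariant under the sign flip $\Delta\mapsto-\Delta$, $X\mapsto-X$, $\delta\mapsto 1-\delta$. Consequently any rule can be replaced by its symmetrization, randomizing over the rule and its mirror image, without increasing worst-case regret, because regret is linear in the randomization. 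This lets us restrict to symmetric rules and to $\Delta\in[0,\bar\Delta]$.

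\textbf{Step 2: existence of a least-favorable prior.} Embed behavioral rules as randomized stopping and decision rules. This is a compact convex set under the weak topology on the joint law of $(X,\text{actions})$, since every experiment is Gaussian and the number of periods is finite. The regret $\mathrm{Reg}_H(\boldsymbol\delta;\Delta,\boldsymbol c)$ is bilinear in the rule and in a prior $\pi$ on $[-\bar\Delta,\bar\Delta]$, and continuous. A minimax theorem (Sion) then gives a saddle point $(\boldsymbol\delta^\star,\pi^\star)$, and we may take $\pi^\star$ symmetric.

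\textbf{Step 3: Bayes rule under a symmetric prior has dead-zone form.} Against a symmetric $\pi^\star$, solve the Bayes problem by backward induction in the statistic $T_h$.
\begin{itemize}
\item[(i)] Terminal period. Stopping and implementing gives posterior regret $\mathbb E[\Delta_-\mid T_h]$; rejecting gives $\mathbb E[\Delta_+\mid T_h]$. Their difference is $-\mathbb E[\Delta\mid T_h]$. Because the likelihood in $T_h$ is an exponential family in $\Delta$, the posterior mean is increasing in $T_h$. By symmetry it is odd in $T_h$, so the terminal decision is $1\{T_H\ge 0\}$. The same holds for the accept/reject choice in any stopping period.
\item[(ii)] Earlier periods. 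Let $V_h(t)$ be the continuation value minus the stopping value. Stopping is preferred iff $V_h(t)\ge 0$, and we must show $\{t: V_h(t)<0\}$ is an interval $(-\tau_h,\tau_h)$. The stopping loss is $\min\{\mathbb E[\Delta_+\mid t],\mathbb E[\Delta_-\mid t]\}$. This is even in $t$ by symmetry, and it is decreasing in $|t|$: for $t>0$ it equals $\mathbb E[\Delta_-\mid t]$, which falls as $t$ rises by monotone likelihood ratio. The continuation value equals $c_h$ plus the expected optimal future loss. It is also even, and we need the difference to be monotone in $|t|$.
\end{itemize}

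The \textbf{main obstacle} is the monotonicity needed in step (ii). To handle it, work with the unnormalized posterior measure $\pi^\star(d\Delta)e^{\Delta S_h t - \Delta^2S_h^2/2}$. Write the net gain from continuing as an integral over $\Delta\ge 0$ of $\Delta\cdot g_h(\Delta,t)$ minus $c_h$ times the marginal density, using pairs $\pm\Delta$. Here $g_h$ is the reduction in error probability. The plan is to show the ratio of net gain to marginal density is decreasing in $|t|$, following the same single-crossing computation used for Theorem \ref{thm:dead_zone} and applying it inductively. If single-crossing fails for a general prior, the fallback is to use the result of Theorem \ref{thm:dead_zone}'s proof that the least-favorable prior can be taken as a symmetric two-point prior $\{\pm d\}$. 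In that case the posterior depends on $t$ only through $\tanh(dS_ht)$, and the monotonicity becomes an explicit one-dimensional check carried through the backward induction.

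\textbf{Step 4: identification of the thresholds.} Once rules of the stated form are known to contain a minimax solution, compute the regret of a threshold rule $\boldsymbol\tau$. By symmetry it suffices to take $\Delta\ge 0$. The rule errs exactly when $T_{\sigma(\boldsymbol\tau)}<0$, and it pays $c_h$ whenever $\sigma(\boldsymbol\tau)>h$. This is precisely $R_H(\boldsymbol\tau;\Delta,\boldsymbol c)$.

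Finally, minimize the worst case over $\boldsymbol\tau\in\mathcal T_H$. The minimum is attained because $\mathcal T_H$ is compact (extended reals) and $\max_\Delta R_H$ is lower semicontinuous in $\boldsymbol\tau$. A minimizer $\boldsymbol\tau^\star$ is therefore minimax over all measurable rules, which proves the theorem.
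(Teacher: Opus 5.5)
Your Steps 1, 2 and 4 are sound and essentially match the paper. The paper uses Wald's (1950) existence theorems instead of Sion, and it symmetrizes the least-favorable prior via concavity of the Bayes risk rather than symmetrizing rules.

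The gap is in Step 3(ii), which carries the real content of the theorem: you never prove that the continuation region is a symmetric interval. You correctly note that the stopping loss and the continuation loss are both even and decreasing in $|t|$, so what matters is the monotonicity of their difference. You then only restate this goal: ``the ratio of net gain to marginal density is decreasing in $|t|$'' is just the posterior net gain. For the proof you defer to ``the same single-crossing computation used for Theorem~\ref{thm:dead_zone}''. In the paper, however, Theorem~\ref{thm:dead_zone} is derived as the $H=2$ case of this very theorem, so there is no independent computation to borrow.

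Your fallback is also unavailable. Nothing establishes that the least-favorable prior is a symmetric two-point prior $\{\pm d\}$. The analysis of Proposition~\ref{cor:1} explicitly allows several worst-case effects $d$ at the same threshold, which is exactly the case in which a two-point prior can fail to be least favorable. That is already possible for $H=2$, let alone when $H-1$ thresholds must be optimal simultaneously. Even granting a two-point prior, carrying monotonicity through the recursion in $V_{h+1}$ would still need an argument you do not give.

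The missing idea is a monotonicity argument that works for every symmetric nondegenerate prior. With $Y_h=S_hT_h$, posterior mean $\mu_h$, stopping loss $r_h$, and $A_h(y)=r_h(y)-c_h-\mathbb E[V_{h+1}(Y_{h+1})\mid Y_h=y]$, write $V_{h+1}=r_{h+1}-(A_{h+1})_+$. Use $r_h=\tfrac12\{\mathbb E[|\Delta|\mid Y_h=y]-|\mu_h(y)|\}$ together with the martingale property $\mathbb E[\mu_{h+1}(Y_{h+1})\mid Y_h=y]=\mu_h(y)$. This gives, for $y\geq0$,
\[
r_h(y)-\mathbb E[r_{h+1}(Y_{h+1})\mid Y_h=y]=\mathbb E\bigl[(-\mu_{h+1}(Y_{h+1}))_+\mid Y_h=y\bigr],
\]
which is the conditional mean of a nonconstant, nonincreasing function of $Y_{h+1}$.

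The rest follows from two stochastic orderings:
\begin{itemize}
\item The posterior-predictive density $K_h(y,y')\propto\phi_{h+1}(y'-y)M_{h+1}(y')/M_h(y)$ has a likelihood ratio increasing in $y'$. Hence the term above is strictly decreasing in $y\geq0$.
\item The $\cosh(s_{h+1}^2xy)$ form of the density of $|Y_{h+1}|$ shows that $|Y_{h+1}|$ is stochastically increasing in $y\geq0$. Hence, by the induction hypothesis, $\mathbb E[(A_{h+1}(Y_{h+1}))_+\mid Y_h=y]$ is nonincreasing.
\end{itemize}
Therefore $A_h$ is even and strictly decreasing in $|y|$.

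You also omit two facts that your Step 4 needs.
\begin{itemize}
\item Nondegeneracy of $\pi^\star$: the paper shows $V\geq a\Phi(-aS_H)>0$, while the point mass at zero has Bayes risk zero.
\item Strict monotonicity of $\mu_h$ and $A_h$, which makes the Bayes action unique off null sets.
\end{itemize}
Without these you only know that \emph{some} Bayes rule against $\pi^\star$ has threshold form. Step 4 presupposes the stronger statement that a minimax rule does.
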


\begin{proof}
See Appendix \ref{proof:multi_period_dead_zone}.
\end{proof}

Theorem \ref{thm:multi_period_dead_zone} generalizes the confidence-interval representation in Theorem \ref{thm:dead_zone}. In every non-terminal period, sufficiently positive estimates lead researchers to recommend the policy, sufficiently negative estimates lead researchers to reject it, and intermediate estimates lead to additional data collection. The sign of $T_h$ determines the direction of the policy recommendation, while $|T_h|$ measures the strength of the accumulated evidence.

The definition of $\tau_h^\star$ provides a direct economic interpretation as it depends on the cost of deferral in each period and the precision of each subsequent estimate. As for the definition of the $A$-value we can use $\tau_h^\star$ to define each interim $A$-value representation, omitted for brevity.

To gain intuition on the behavior of $\tau_h$, Figure \ref{fig:threshold} reports the threshold for a five-period horizon as a function of the costs of future data collections, assuming for simplicity $s_1 = s_2 = \cdots = s_H = 1$ and costs are constant over each period ($c_1 = c_2 = \cdots c_{H -1}$), with $\Delta \in \mathbb{R}$.  The figure shows two expected patterns. First, as the number of subsequent periods decreases (i.e., we move from the first period $h = 1$, to $h = 2$ and so on), the threshold becomes less stringent (smaller). This is natural as more possible experimentation periods ahead implies higher chance of detecting the welfare-improving policy and more possibility of learning. Second, as the continuation cost increases, the threshold also becomes less stringent due to higher costs of future data collection.  

\begin{figure}[H]
    \centering
    \includegraphics[scale = 0.7]{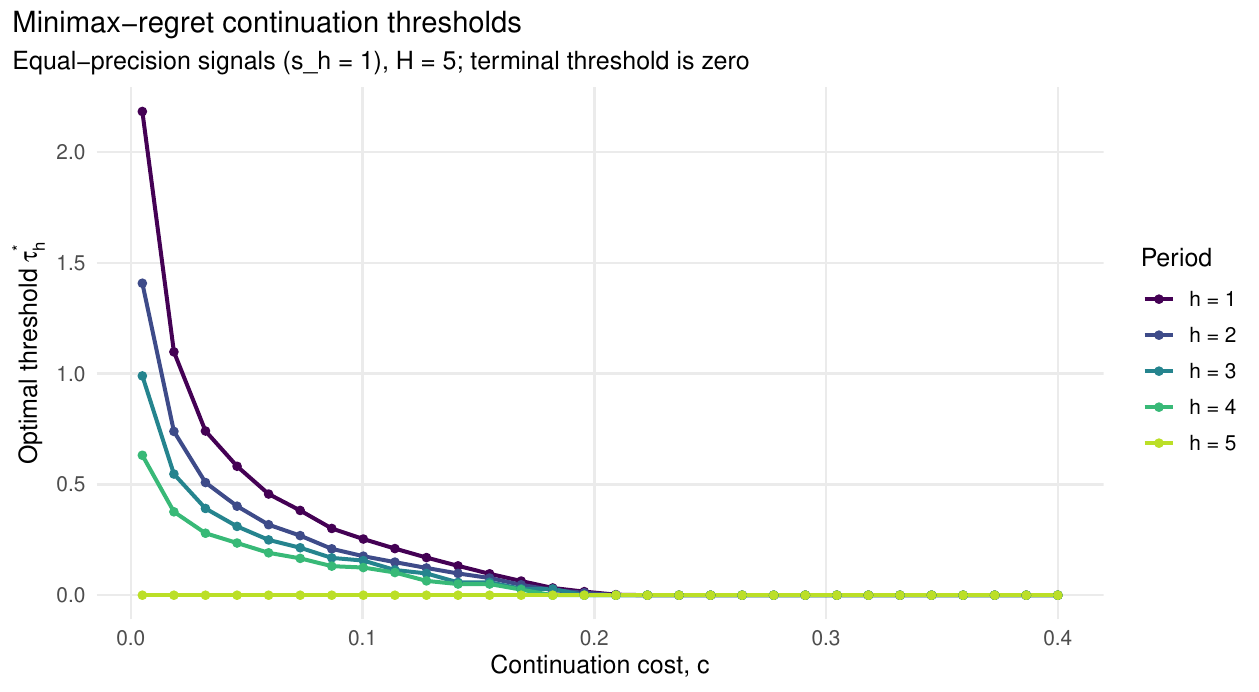}
    \caption{\footnotesize Optimal threshold with five experimentation periods.}
    \label{fig:threshold}
\end{figure}

\subsection{Endogenous precision of the follow-up experiment}
\label{rem:endogenous_precision}

The precision of the follow-up experiment can be endogenized by allowing researchers to choose its sample size. Let $n\in\mathcal{N}$ denote a feasible follow-up sample size and suppose that
$$
X_2(n)
\sim
\mathcal{N}\left(\Delta,\frac{\sigma_2^2}{n}\right)
$$
at welfare cost $c(n)$. The cost $c(n)$ may include both the direct cost of collecting additional observations and the opportunity cost of delaying the policy decision. Each sample size therefore generates a feasible pair
$
\left(s_2^2(n),c(n)\right)
=
\left(\frac{\sigma_2^2}{n},c(n)\right),
$
capturing the trade-off between the informativeness and the cost of the follow-up experiment. If the follow-up design must be chosen before observing $X_1$, its optimal sample size solves
$$
n_2^\star
\in
\argmin_{n\in\mathcal{N}}
\;
\min_{(\delta_1,\delta_2)\in\mathcal{D}}
\sup_{|\Delta|\leq\bar{\Delta}}
\mathrm{Reg}_1
\left(
\delta_1,\delta_2;
\Delta,c(n)
\right),
\qquad
s_2^2=\frac{\sigma_2^2}{n}.
$$
Conditional on $n_2^\star$, Theorem \ref{thm:dead_zone} continues to apply directly with
$$
\frac{s_2}{s_1} =\frac{\sigma_2}{s_1\sqrt{n_2^\star}},
\qquad
\frac{c}{s_1} =\frac{c(n_2^\star)}{s_1}.
$$
We can think of this problem as in contexts where funding agencies must commit to a given budget before seeing the evidence reported in pilot studies. 

When the sample size $n_2^\star$ is chosen as a function of $X_1$ the problem becomes however more complex, and left to future research.

\section{Empirical application} \label{sec:application}

In this section, we use data from \cite{crosta2024unconditional} to analyze unconditional cash transfer (UCT) programs in low- and middle-income countries. Our goal is to show how researchers may report the $A$-value alongside $p$-values as a useful measure of evidence strength.  

\subsection{Data description}

\cite{crosta2024unconditional} cover 115 studies of 72 randomized evaluations in 34 countries. We focus on monthly household consumption, which is available and has been harmonized across 45 studies. These are publicly available through the online data supplement of \cite{crosta2024unconditional}. Treatment effects and standard errors are expressed as monthly household consumption per \$100 of total transfer in 2010 USD PPP. 

For each study, we use a single point estimate selected as follows. First, we keep estimates which use the overall sample and omit those based on smaller subgroups. Second, within each program, we retain estimates from the longest available follow-up horizon. Third, we select the arm with the largest total transfer delivered.\footnote{For lump-sum transfers, we use the reported total transfer. For stream transfers, we calculate the reported monthly tranche amount multiplied by the number of months since the first transfer.} The longest-follow-up horizon focuses  on the most persistent effects, while the largest-transfer restriction focuses on the highest-intensity arm. This procedure yields 42 program-level estimates out of the original 45 studies for which household consumption is available.\footnote{Three programs are excluded because they report consumption effects only for particular subgroups.}

For each of the 42 candidate programs, we construct an estimate of average monthly consumption, expressed per \$100 of total transfers. Let \(m_j\) denote the number of months between the first transfer and the selected follow-up.
Define \(\widehat b_j\) and \(s_{1j}\) the resulting monthly consumption estimate and standard error, both expressed per \$100 of total transfers.\footnote{Lump-sum programs measure monthly household consumption effects per \$100 of total transfers, whereas those for stream programs measure monthly effects per \$100 of the monthly tranche. Because the monthly tranche equals cumulative transfers divided by \(m_j\), we divide both the reported estimate and its standard error by \(m_j\) for stream programs and leave them unchanged for lump-sum programs.  Dividing the reported stream estimate and its standard error by \(m_j\) therefore converts their denominator to \$100 of cumulative transfers without requiring continuous monthly payments through follow-up.}

We consider two scenarios. In the first scenario we set 
$
X_{1j}
=\widehat b_j,
$
so that in the baseline specification we assume that treatment effects measure directly the researcher's  welfare criterion, abstracting from the opportunity and administrative costs of transfer expenditure (i.e., assuming that a dollar has the same value if transferred to the households in the program or invested in a different program). As a second specification, we  consider  
administrative costs of the program which we assume equals 24 percent of the transfer amount, following the cost assumption in \cite{crosta2024unconditional}, and set
$
X_{1j}^{\mathrm{admin}}
=\widehat b_j - \frac{24}{m_j}, 
$ assuming welfare denotes the average consumption over the $m_j$ horizon. 
The standard error remains \(s_{1j}\), treating the administrative-cost assumption and normalization factors as fixed. Clearly other specification are possible depending on the target welfare criterion. (Also, note that our framework does not require that the interpretation of treatment effects $\Delta_j$ is the same across studies.)

\subsection{Problem description}

Our goal is to assess when the existing evidence is strong enough to recommend implementing or rejecting a UCT program, and when further experimentation is warranted. We take the perspective of a researcher reporting the results of an individual policy evaluation. 

After observing the evidence from a given program, the research team can recommend scaling-up the program, reject it in favor of the status quo, or defer the decision and commission a follow-up experiment. The follow-up experiment provides an independent estimate of the same policy effect with the same precision as the initial experiment, \(s_{2j}=s_{1j}\).\footnote{This is our preferred specification, although the framework accommodates other choices of \(s_{2j}\).}

We take a perspective where the individual research team may lack a collection of comparable studies from which to estimate a distribution of policy effects, and readers may differ in their prior beliefs about each given study effect. For instance, studies in different countries may be drawn from different distributions or
the evidence assembled in a subsequent meta-analysis need not have been available when each study reported its results and make a policy recommendation. This exercise therefore differs from standard Bayes approach that require observing the same distribution across studies before making policy decisions.

We use the \(A\)-value to assess the value of further experimentation for each program. The availability of multiple studies also allows us to examine, in aggregate, how alternative cost thresholds affect recommendations for further data collection and how programs should be prioritized when experimentation capacity is limited.

\subsection{Analysis}

Figure \ref{fig:distribution} and Table \ref{tab:a-value-distribution} summarize the distribution of \(A\)-values across the 42 programs. The distribution is strongly right-skewed, with many values close to zero and a few substantially larger values. Without administrative costs, the median \(A\)-value is 0.003, compared with a mean of 0.258; with administrative costs equal to 24 percent of transfers, the corresponding values are 0.019 and 0.251. Small \(A\)-values indicate that even low welfare costs of further experimentation favor an immediate policy recommendation.

Selecting the ten programs with the largest \(A\)-values ($\sim$ one fourth of the sample) yields cutoffs of 0.07 without administrative costs and 0.09 with administrative costs, measured in monthly consumption units per \$100 of total transfers. Accounting for administrative costs therefore raises both the median \(A\)-value and the cutoff for allocating ten follow-up experiments. These changes reflect two opposing effects: administrative costs can bring estimated welfare effects closer to zero, increasing the value of further experimentation, or make rejection more decisive, reducing that value.

\begin{table}[tbp]
\centering
\setstretch{1}
\small
\caption{Distribution of A-values}
\label{tab:a-value-distribution}
\begin{tabular}{lr@{\hspace{1cm}}lr}
\toprule
\multicolumn{2}{c}{No administrative costs} & \multicolumn{2}{c}{Administrative costs (24\%)} \\
\cmidrule(lr){1-2}\cmidrule(lr){3-4}
Statistic & A-value & Statistic & A-value \\
\midrule
Studies & 42 & Studies & 42 \\
Mean & 0.258 & Mean & 0.251 \\
Standard deviation & 0.977 & Standard deviation & 0.706 \\
Minimum & $1.91\times 10^{-40}$ & Minimum & $6.64\times 10^{-26}$ \\
10th percentile & $2.61\times 10^{-8}$ & 10th percentile & $6.42\times 10^{-6}$ \\
25th percentile & $1.25\times 10^{-5}$ & 25th percentile & $8.94\times 10^{-4}$ \\
Median & 0.003 & Median & 0.019 \\
75th percentile & 0.059 & 75th percentile & 0.059 \\
90th percentile & 0.163 & 90th percentile & 0.547 \\
95th percentile & 1.160 & 95th percentile & 1.585 \\
Maximum & 5.680 & Maximum & 3.732 \\
\bottomrule
\end{tabular}
\par\vspace{0.6em}
\begin{minipage}{\linewidth}\footnotesize\raggedright Each panel summarizes all 42 program-level A-values in monthly consumption units per \$100 of total transfers (2010 USD PPP). The administrative-cost specification subtracts 24/$m$ (where $m$ is the length of the study in months) before computing A-values.\end{minipage}
\end{table}

\begin{figure}[!ht]
    \centering
    \includegraphics[scale = 0.55]{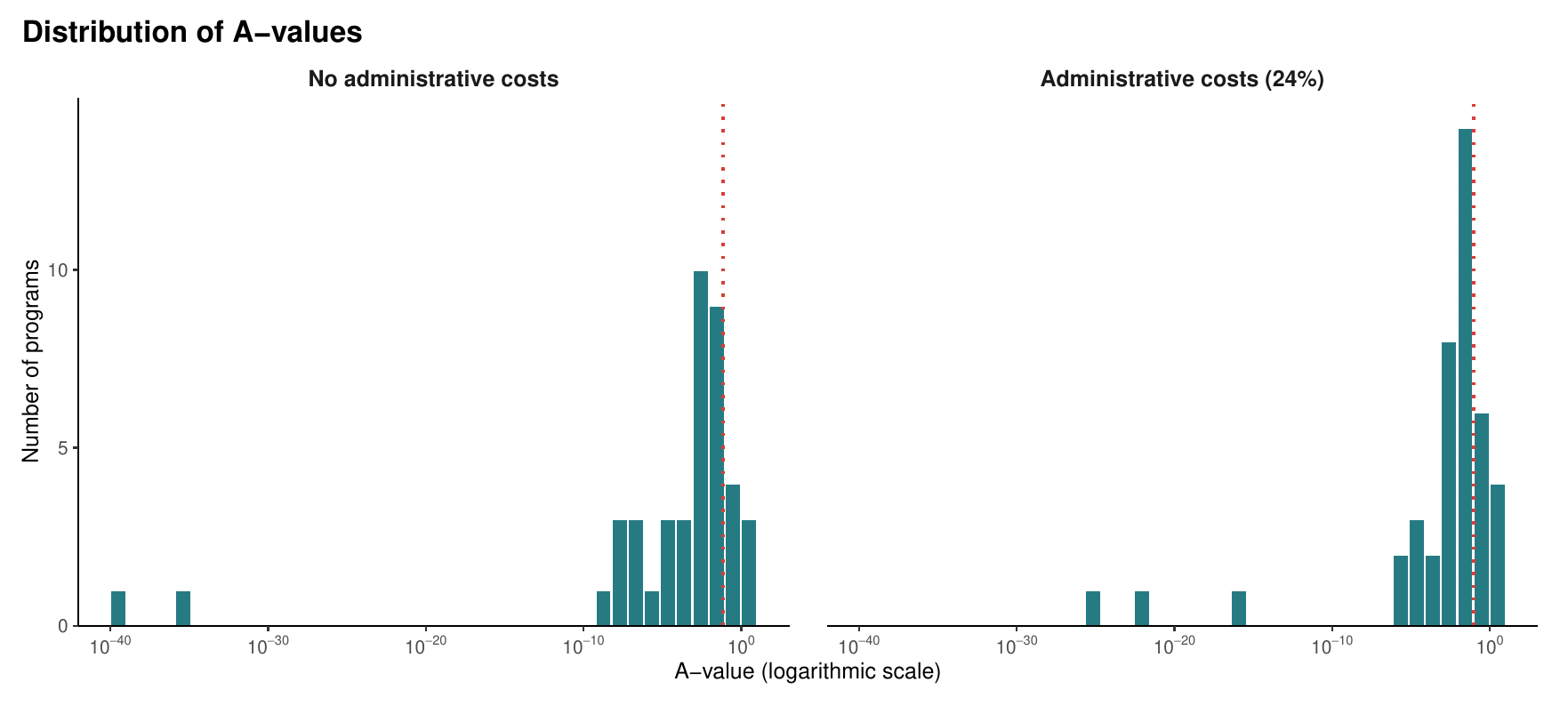}
    \caption{\footnotesize Histogram in logarithmic scale of the distribution of $A$-values across experiments.}
    \label{fig:distribution}
\end{figure}

Table \ref{tab:country-a-values-selected-panels} reports the countries with at least one program selected for further experimentation under the global top-ten \(A\)-value rule. The countries represented are the same across specifications, except for Liberia, which appears only when administrative costs are excluded. Among countries with multiple evaluations, the baseline rule selects one of seven programs in Kenya, one of six in Uganda, one of four in Malawi, one of two in Liberia, and both programs in Mexico. Administrative costs increase the number selected in Kenya to three and reduces the number in Mexico to one, while Uganda and Malawi remain unchanged.

These selections illustrate how both sampling uncertainty and proximity to the implementation threshold determine the value of further experimentation. Without administrative costs, the selected Kenyan study has an estimate-to-standard-error ratio of 0.02 and a sampling variance of 216, compared with a country average variance of 37. In Uganda, the selected study has a smaller variance than the country average, one versus ten, but its estimate-to-standard-error ratio is only 0.4. Thus, a relatively precise study can still merit further experimentation when its estimated welfare effect remains sufficiently close to zero. 

Within-country comparisons can also help researchers contextualize how decisive their evidence is for a policy recommendation. A practical reporting exercise would place a study's \(A\)-value within the distribution of previous comparable evaluations, even when only a few are available. For example, without administrative costs, Kenya's mean \(A\)-value is \(0.4\), and six of its seven studies have values below \(0.07\). Naturally, one may refine further these comparisons to similar transfer amounts, study periods or specific regions if available.

\begin{table}[H]
\centering
\setstretch{1}
\footnotesize
\setlength{\tabcolsep}{4pt}
\caption{Countries with top-ten A-value selections}
\label{tab:country-a-values-selected-panels}
\begin{tabular}{lrrrrrr}
\toprule
\multicolumn{7}{l}{\textbf{Panel A: No administrative costs}} \\
\addlinespace[0.4em]
Country & \shortstack{Mean\\A-value} & \shortstack{Mean\\variance} & \shortstack{Share of selected\\top-10 studies} & \shortstack{Mean A-value\\(selected)} & \shortstack{Mean variance\\(selected)} & \shortstack{Mean $X_{1j}/s_{1j}$\\(selected)} \\
\midrule
Nepal & 5.680 & 1393.156 & 100\% (1) & 5.680 & 1393.156 & -0.104 \\
Ghana & 1.208 & 65.610 & 100\% (1) & 1.208 & 65.610 & 0.111 \\
Kenya & 0.413 & 37.524 & 14.3\% (7) & 2.845 & 216.090 & 0.020 \\
Bangladesh & 0.165 & 111.831 & 100\% (1) & 0.165 & 111.831 & 1.087 \\
Mexico & 0.106 & 34.817 & 100\% (2) & 0.106 & 34.817 & 0.556 \\
Tanzania & 0.070 & 8.410 & 100\% (1) & 0.070 & 8.410 & 0.897 \\
Liberia & 0.070 & 7.650 & 50\% (2) & 0.139 & 15.210 & -0.718 \\
Malawi & 0.064 & 2.378 & 25\% (4) & 0.241 & 5.621 & -0.258 \\
Uganda & 0.024 & 9.825 & 16.7\% (6) & 0.072 & 1.000 & 0.400 \\
\midrule
\addlinespace[0.8em]
\multicolumn{7}{l}{\textbf{Panel B: Administrative costs (24\%)}} \\
\addlinespace[0.4em]
Country & \shortstack{Mean\\A-value} & \shortstack{Mean\\variance} & \shortstack{Share of selected\\top-10 studies} & \shortstack{Mean A-value\\(selected)} & \shortstack{Mean variance\\(selected)} & \shortstack{Mean $X_{1j}/s_{1j}$\\(selected)} \\
\midrule
Nepal & 3.732 & 1393.156 & 100\% (1) & 3.732 & 1393.156 & -0.265 \\
Ghana & 1.606 & 65.610 & 100\% (1) & 1.606 & 65.610 & -0.012 \\
Bangladesh & 0.583 & 111.831 & 100\% (1) & 0.583 & 111.831 & 0.520 \\
Kenya & 0.514 & 37.524 & 42.9\% (7) & 1.152 & 85.605 & 0.048 \\
Tanzania & 0.220 & 8.410 & 100\% (1) & 0.220 & 8.410 & 0.379 \\
Mexico & 0.108 & 34.817 & 50\% (2) & 0.216 & 69.444 & 0.864 \\
Malawi & 0.037 & 2.378 & 25\% (4) & 0.092 & 5.621 & -0.680 \\
Uganda & 0.027 & 9.825 & 16.7\% (6) & 0.125 & 1.000 & 0.178 \\
\bottomrule
\end{tabular}
\par\vspace{0.6em}
\begin{minipage}{\linewidth}\footnotesize\raggedright Only countries with at least one study selected under the respective global top-ten rule are shown. Countries are ordered by decreasing overall mean A-value within each specification. Each study is one of the 42 program-level estimates. Overall means use all studies in the country. The selection share is the number selected in the country divided by all studies in that country, expressed as a percentage. The three selected-study means use only studies in that country selected under the global top 10 A-value rule for the specification. Mean variance is the average $s_{1j}^2$ in the respective group. Mean $X_{1j}/s_{1j}$ is the average signed estimate-to-standard-error ratio among selected studies, using the specification-specific X. A-values are in monthly household consumption per \$100 of total transfers (2010 USD PPP). Numbers in parentheses after percentages are the total numbers of studies in the respective countries.\end{minipage}
\end{table}

 \subsection{Comparisons with $p$-values, variance and empirical Bayes}

To understand what drives the \(A\)-value ranking, we compare it with two simple benchmarks. The first selects the ten programs with the largest two-sided \(p\)-values, prioritizing estimates closest to zero relative to their standard errors. The second selects the ten programs with the largest standard errors \(s_{1j}\), prioritizing the least precise estimates. The \(p\)-value rule does not account for the scale of uncertainty in welfare units, while the standard-error rule ignores whether the estimated effect already supports an immediate policy recommendation.

Figures \ref{fig:rank-no-admin} compares these rankings with and without administrative costs. The \(A\)-value ranking generally tracks the \(p\)-value ranking more closely than the standard-error ranking. Without administrative costs, the \(A\)-value and \(p\)-value rules share eight of their ten selections; with administrative costs, they share seven. The corresponding overlaps with the standard-error rule are five and six programs.

Table \ref{tab:selection-disagreements-p_value} reports the programs on which these rules disagree. The disagreements with the \(p\)-value rule illustrate the importance of measuring uncertainty in welfare units. Without administrative costs, the \(A\)-value rule selects a study in Bangladesh with \(p=0.277\) and \(s_{1j}=10.575\), whereas the \(p\)-value rule selects a study in Kenya with \(p=0.377\) and \(s_{1j}=1.017\). Although the Kenyan estimate is less decisive relative to its standard error, the uncertainty of the Bangladeshi estimate is substantially larger. Because a mistaken policy recommendation incurs a loss equal to the magnitude of the true welfare effect, this greater uncertainty allows for substantially larger losses from an immediate decision, increasing the value of further experimentation. Their \(A\)-values are consequently 0.025 and 0.165, respectively. The \(A\)-value rule also selects a Mexican study with a relatively large standard error of 8.333 despite its smaller \(p\)-value of 0.270.

This distinction becomes even more stark when administrative costs are included. The \(p\)-value rule selects two Kenyan studies with \(p\)-values of 0.9 and 0.8, both with standard errors of only 0.3. Their estimates provide little evidence about the sign of net welfare, but the uncertainty concerns effects on a relatively small welfare scale as their \(A\)-values are 0.043 and 0.034. The \(A\)-value rule instead prioritizes, among others, again the Bangladeshi study with significantly larger standard error.

The disagreements with the standard-error rule reveal the complementary limitation of prioritizing imprecision alone. Without administrative costs, this rule selects studies in Rwanda and Uganda with standard errors of 11 and 5.8, respectively, even though their \(p\)-values are close to zero. By contrast, the \(A\)-value rule selects another Mexican study with a much smaller standard error of 0.435 but a \(p\)-value of 0.993. Moreover, the standard-error rule selects the same programs in both specifications, since subtracting administrative costs leaves standard errors unchanged, without incorporating its welfare effects.

\begin{figure}[!ht]
    \centering
    \includegraphics[scale = 0.55]{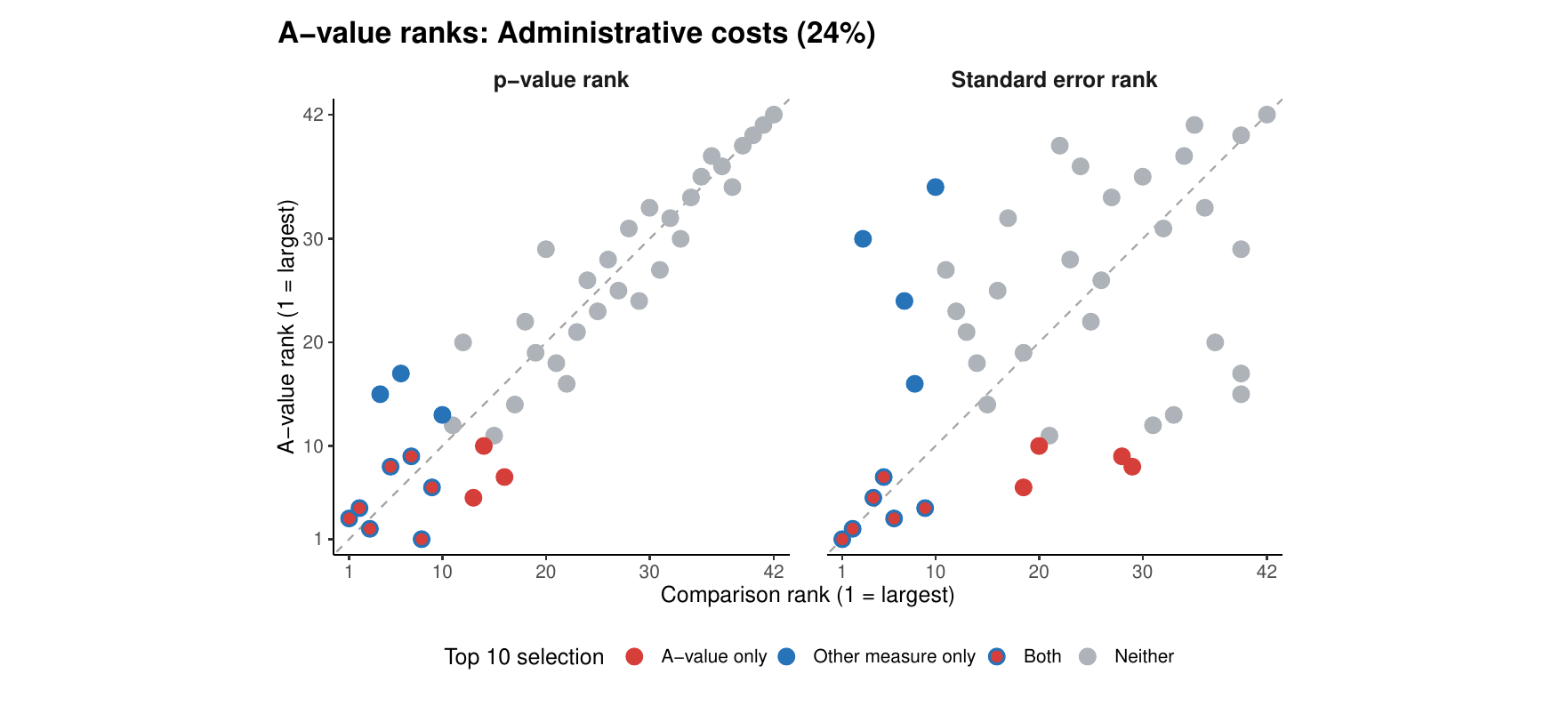}
     \includegraphics[scale = 0.55]{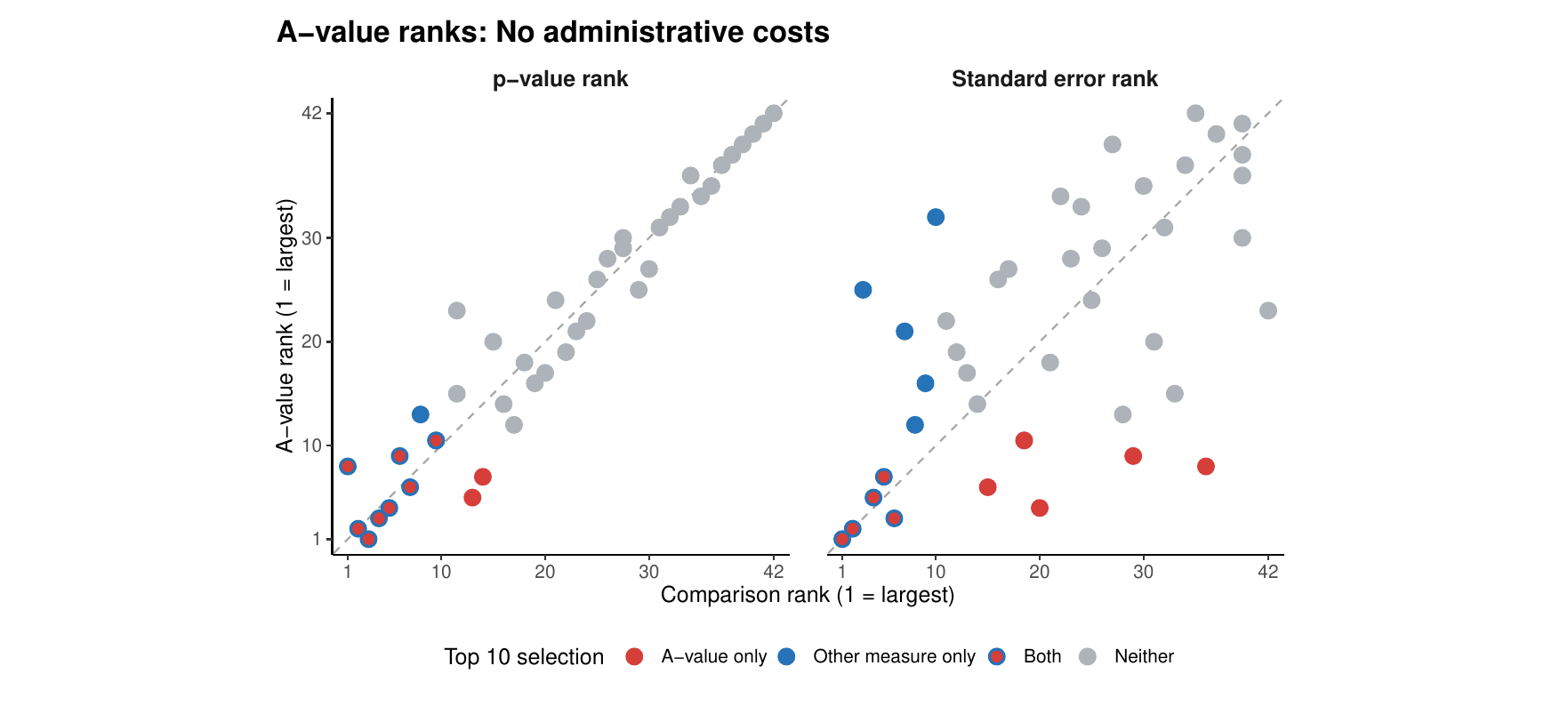}
   \caption{\footnotesize Comparison of rankings for further experimentation. Each point represents one of the 42 programs. The vertical axis reports the \(A\)-value rank; the horizontal axis reports the \(p\)-value rank (left) or standard-error rank (right). Rank 1 corresponds to the largest value. The upper panels include administrative costs equal to 24 percent of transfers; the lower panels exclude administrative costs. Dashed lines indicate identical rankings. Red points are selected only by the \(A\)-value rule, blue points only by the comparison rule, red points with blue outlines by both rules, and gray points by neither. Each rule selects ten programs, with cutoff ties broken by program ID. \(A\)-values assume \(s_{2j}=s_{1j}\), and two-sided \(p\)-values test a zero welfare effect under the respective specification.}
    \label{fig:rank-no-admin}
\end{figure}

\begin{table}[!ht]
\centering
\setstretch{1}
\footnotesize
\renewcommand{\arraystretch}{1.1}
\setlength{\tabcolsep}{3pt}
\caption{Top-ten selection disagreements: $A$-value versus $p$-value and standard error}
\label{tab:selection-disagreements}
\label{tab:selection-disagreements-p_value}
\label{tab:selection-disagreements-standard_error}

\begin{tabular*}{\linewidth}{
    @{\extracolsep{\fill}}
    lrrr
    @{\hspace{0.5cm}}
    lrrr
    @{}
}
\toprule
\multicolumn{8}{l}{
    \textbf{Panel A: $A$-value versus $p$-value}
} \\
\addlinespace[0.5em]

\multicolumn{4}{c}{No administrative costs}
& \multicolumn{4}{c}{Administrative costs (24\%)} \\
\multicolumn{4}{c}{\textit{8 of 10 selections shared}}
& \multicolumn{4}{c}{\textit{7 of 10 selections shared}} \\
\cmidrule(lr){1-4}\cmidrule(lr){5-8}

Country & $A$-value & $p$ & $s_{1j}$
& Country & $A$-value & $p$ & $s_{1j}$ \\
\midrule

\multicolumn{4}{l}{\textit{Selected only by $A$-value}}
& \multicolumn{4}{l}{\textit{Selected only by $A$-value}} \\
\addlinespace[0.2em]
Bangladesh & 0.165 & 0.277 & 10.575
& Bangladesh & 0.583 & 0.603 & 10.575 \\
Mexico & 0.125 & 0.270 & 8.333
& Mexico & 0.216 & 0.388 & 8.333 \\
& & &
& Malawi & 0.092 & 0.496 & 2.371 \\

\addlinespace[0.6em]
\multicolumn{4}{l}{\textit{Selected only by $p$-value}}
& \multicolumn{4}{l}{\textit{Selected only by $p$-value}} \\
\addlinespace[0.2em]
Kenya & 0.025 & 0.377 & 1.017
& Kenya & 0.043 & 0.902 & 0.300 \\
Uganda & 0.070 & 0.370 & 2.900
& Kenya & 0.034 & 0.833 & 0.300 \\
& & &
& Malawi & 0.054 & 0.668 & 0.800 \\

\addlinespace[0.6em]
\midrule
\addlinespace[0.4em]
\multicolumn{8}{l}{
    \textbf{Panel B: $A$-value versus standard error}
} \\
\addlinespace[0.5em]

\multicolumn{4}{c}{No administrative costs}
& \multicolumn{4}{c}{Administrative costs (24\%)} \\
\multicolumn{4}{c}{\textit{5 of 10 selections shared}}
& \multicolumn{4}{c}{\textit{6 of 10 selections shared}} \\
\cmidrule(lr){1-4}\cmidrule(lr){5-8}

Country & $A$-value & $p$ & $s_{1j}$
& Country & $A$-value & $p$ & $s_{1j}$ \\
\midrule

\multicolumn{4}{l}{\textit{Selected only by $A$-value}}
& \multicolumn{4}{l}{\textit{Selected only by $A$-value}} \\
\addlinespace[0.2em]
Malawi & 0.241 & 0.796 & 2.371
& Tanzania & 0.220 & 0.704 & 2.900 \\
Liberia & 0.139 & 0.473 & 3.900
& Uganda & 0.125 & 0.859 & 1.000 \\
Mexico & 0.087 & 0.993 & 0.435
& Kenya & 0.112 & 0.821 & 1.017 \\
Uganda & 0.072 & 0.689 & 1.000
& Malawi & 0.092 & 0.496 & 2.371 \\
Tanzania & 0.070 & 0.370 & 2.900
& & & & \\

\addlinespace[0.6em]
\multicolumn{4}{l}{\textit{Selected only by standard error}}
& \multicolumn{4}{l}{\textit{Selected only by standard error}} \\
\addlinespace[0.2em]
Rwanda & 0.001 & 0.006 & 11.133
& Rwanda & 0.002 & 0.010 & 11.133 \\
South Sudan & 0.004 & 0.021 & 7.700
& South Sudan & 0.009 & 0.040 & 7.700 \\
Mali & 0.026 & 0.102 & 6.625
& Mali & 0.039 & 0.138 & 6.625 \\
Kenya & 0.010 & 0.053 & 6.300
& Uganda & $3.44\times10^{-5}$ & $5.89\times10^{-4}$ & 5.858 \\
Uganda & $7.18\times10^{-6}$ & $1.58\times10^{-4}$ & 5.858
& & & & \\
\bottomrule
\end{tabular*}

\par\vspace{0.6em}
\begin{minipage}{\linewidth}
\footnotesize\raggedright
\textit{Notes:}
Within each comparison and cost specification, only programs selected
by exactly one rule are shown. Each rule selects the ten largest values
of its measure. Repeated country
names within a specification refer to different programs.
$A$-values assume $s_{2j}=s_{1j}$.
$p$-values are from two-sided normal tests of a zero welfare effect
under the respective cost specification.
$A$-values and harmonized standard errors $s_{1j}$ are expressed in
monthly household consumption per \$100 of total transfers
(2010 USD PPP).
\end{minipage}
\end{table}

\paragraph{Empirical Bayes}
As a second comparison, we consider a Gaussian empirical Bayes benchmark based on \cite{abadie2023estimating} (Section 2.3). Empirical Bayes uses evidence across programs to estimate the distribution of policy effects, allowing each evaluation to benefit from information contained in other studies. This approach is particularly useful when a collection of comparable evaluations is available at the time of the decision, but not feasible when the study effect may not have a reference distribution when making a policy recommendation.  

For the empirical Bayes benchmark, we estimate a common Gaussian prior by maximum likelihood, allowing sampling variances to differ across the 42 programs. This specification treats program effects as exchangeable and independent of their standard errors. The estimated prior has mean \(2.985\) and standard deviation \(2.088\). We then compute the expected welfare gain from an independent follow-up experiment with \(s_{2j}=s_{1j}\). 

Figure \ref{fig:comparison-eb} shows substantial differences between the rankings. The two rules share five of their ten selected programs without administrative costs and three with administrative costs. Pooling also affects immediate recommendations: without administrative costs, all 42 posterior means are positive, including those for the four programs with negative initial estimates. With administrative costs, 32 posterior means remain positive. 

Table \ref{tab:selection-disagreements-V_EB} reports the programs where the two ranking disagree. 
The program in Nepal illustrates the mechanism behind these differences. Its estimated consumption effect is \(-3.875\), with a standard error of \(37.325\), but its posterior mean is \(2.963\): the empirical Bayes calculation assigns approximately \(99.7\%\) of the weight to the fitted prior mean. This program ranks first by \(A\)-value and last by empirical Bayes value in both specifications. Under the fitted model, a second estimate with the same large standard error would provide little information relative to the existing posterior and would be very unlikely to change the policy recommendation. The large \(A\)-value reflects the welfare losses that remain possible when assessing this imprecise evaluation without relying on the chosen prior distribution. The empirical Bayes procedure on the other hand strongly relies on the underlying exchangeability assumption in the policy ranking for this setting.  
 
A similar pattern appears in Kenya and Bangladesh. Without administrative costs, programs with standard errors of \(14.7\) and \(10.6\), respectively, rank second and fifth by \(A\)-value but 38th and 37th by empirical Bayes value. The fitted prior mean receives approximately \(98\%\) and \(96\%\) of the weight in their respective posterior means, making an equally imprecise follow-up experiment unlikely to change the policy recommendation. Both programs remain selected by the top-ten \(A\)-value rule and excluded by the empirical Bayes rule when administrative costs are included.

\begin{figure}[!ht]
    \centering
    \includegraphics[scale = 0.55]{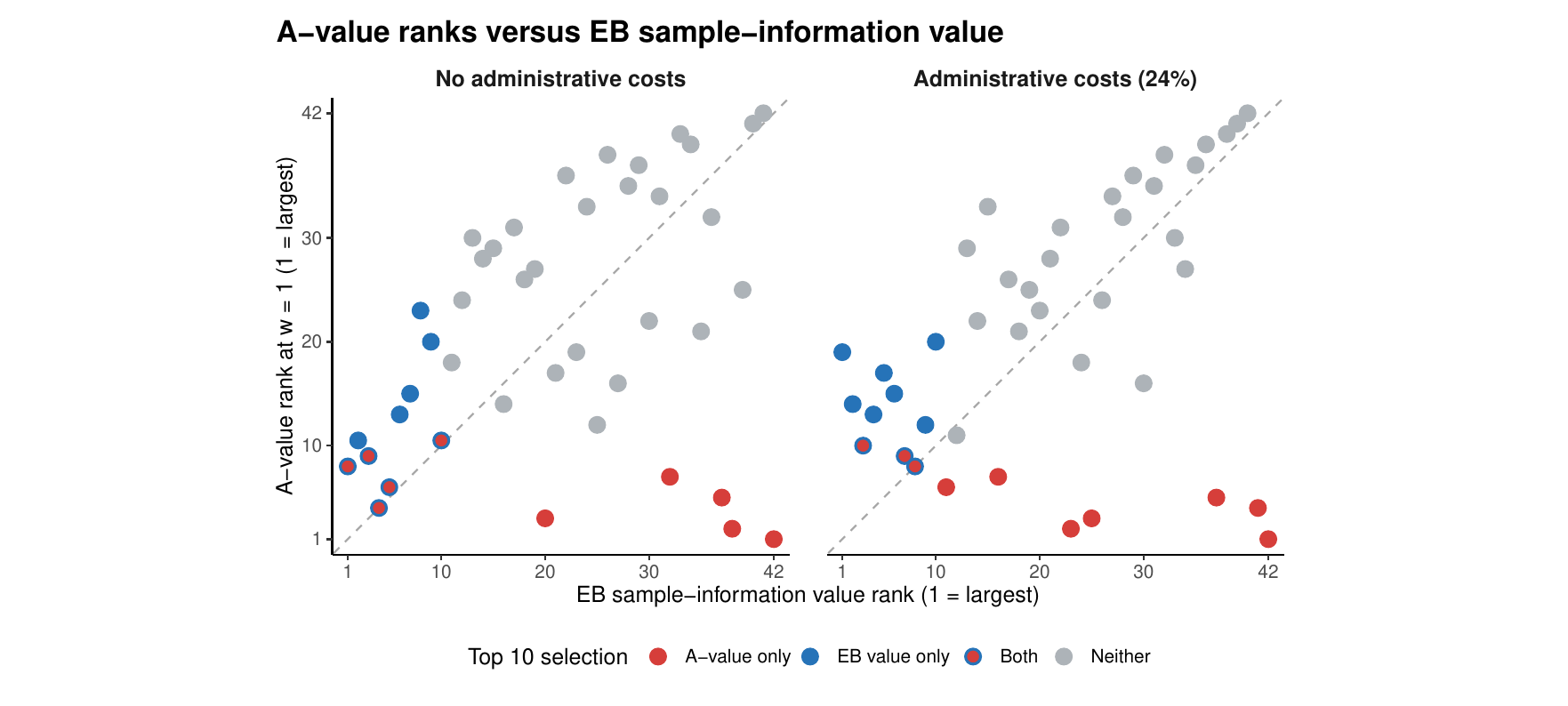}
    \caption{ \footnotesize Comparison of $A$-value and empirical Bayes rankings.
Each point represents one of the 42 programs, with its $A$-value rank
on the vertical axis and empirical Bayes sample-information value
rank on the horizontal axis. Rank 1 denotes the largest value.
The left panel excludes administrative costs; the right panel includes
administrative costs equal to 24\% of transfers.
The dashed diagonal indicates identical ranks.
Red points denote programs selected only by the $A$-value rule,
blue points those selected only by empirical Bayes, red points with
blue outlines those selected by both, and gray points those selected
by neither. Each rule selects ten programs, with cutoff ties broken
by program ID.
Both measures assume $s_{2j}=s_{1j}$.
Empirical Bayes values condition on the initial evidence and hold
the common Gaussian prior.}
    \label{fig:comparison-eb}
\end{figure}

\begin{table}[p]
\centering
\setstretch{1}
\small
\renewcommand{\arraystretch}{1.08}
\setlength{\tabcolsep}{4pt}
\caption{Top-ten selection disagreements: $A$-value versus empirical Bayes}
\label{tab:selection-disagreements-V_EB}

\begin{tabular*}{\linewidth}{
    @{\extracolsep{\fill}}lrrrrrr@{}
}
\toprule
\multicolumn{7}{l}{
    \textbf{Panel A: No administrative costs}
} \\
\addlinespace[0.4em]

Country
& $A$-value
& $V_j^{\mathrm{EB}}$
& \shortstack{Estimate\\$X_{1j}$}
& \shortstack{Standard error\\$s_{1j}$}
& \shortstack{EB Prior weight\\(\%)}
& \shortstack{EB Posterior mean} \\
\midrule

\multicolumn{7}{l}{\textit{Selected only by the $A$-value rule}} \\
\addlinespace[0.2em]
Nepal      & 5.680 & $<0.001$ & -3.875 & 37.325 & 99.7 & 2.963 \\
Kenya      & 2.845 & $<0.001$ &  0.300 & 14.700 & 98.0 & 2.932 \\
Ghana      & 1.208 & $<0.001$ &  0.900 &  8.100 & 93.8 & 2.855 \\
Bangladesh & 0.165 & $<0.001$ & 11.500 & 10.575 & 96.2 & 3.304 \\
Mexico     & 0.125 & $<0.001$ &  9.200 &  8.333 & 94.1 & 3.352 \\

\addlinespace[0.5em]
\multicolumn{7}{l}{\textit{Selected only by the empirical Bayes rule}} \\
\addlinespace[0.2em]
Uganda  & 0.070 & 0.042    & -2.600 & 2.900 & 65.9 & 1.078 \\
Kenya   & 0.025 & 0.004    &  0.897 & 1.017 & 19.2 & 1.297 \\
Malawi  & 0.015 & 0.003    &  0.800 & 0.800 & 12.8 & 1.080 \\
Malawi  & 0.002 & 0.002    &  0.100 & 0.100 &  0.2 & 0.107 \\
Lesotho & 0.004 & $<0.001$ &  1.404 & 0.896 & 15.5 & 1.650 \\

\addlinespace[0.5em]
\midrule
\addlinespace[0.4em]
\multicolumn{7}{l}{
    \textbf{Panel B: Administrative costs (24\%)}
} \\
\addlinespace[0.4em]

Country
& $A$-value
& $V_j^{\mathrm{EB}}$
& \shortstack{Estimate\\$X_{1j}$}
& \shortstack{Standard error\\$s_{1j}$}
& \shortstack{EB Prior weight\\(\%)}
& \shortstack{EB Posterior mean} \\
\midrule

\multicolumn{7}{l}{\textit{Selected only by the $A$-value rule}} \\
\addlinespace[0.2em]
Nepal      & 3.732 & $<0.001$ & -9.875 & 37.325 & 99.7 & -3.037 \\
Kenya      & 2.165 & $<0.001$ & -1.700 & 14.700 & 98.0 &  0.932 \\
Ghana      & 1.606 & $<0.001$ & -0.100 &  8.100 & 93.8 &  1.855 \\
Kenya      & 1.179 & $<0.001$ &  0.200 &  6.300 & 90.1 & -8.103 \\
Bangladesh & 0.583 & $<0.001$ &  5.500 & 10.575 & 96.2 & -2.696 \\
Tanzania   & 0.220 & 0.021    &  1.100 &  2.900 & 65.9 &  1.353 \\
Mexico     & 0.216 & $<0.001$ &  7.200 &  8.333 & 94.1 &  1.352 \\

\addlinespace[0.5em]
\multicolumn{7}{l}{\textit{Selected only by the empirical Bayes rule}} \\
\addlinespace[0.2em]
Uganda  & 0.028 & 0.310 & -3.743 & 2.900 & 65.9 & -0.065 \\
Liberia & 0.048 & 0.244 & -4.646 & 3.900 & 77.7 & -0.150 \\
Malawi  & 0.054 & 0.173 & -0.343 & 0.800 & 12.8 & -0.063 \\
Kenya   & 0.034 & 0.071 & -0.063 & 0.300 &  2.0 & -0.027 \\
Kenya   & 0.043 & 0.053 &  0.037 & 0.300 &  2.0 &  0.071 \\
Lesotho & 0.058 & 0.033 &  0.404 & 0.896 & 15.5 &  0.650 \\
Rwanda  & 0.025 & 0.032 &  0.186 & 0.400 &  3.5 &  0.224 \\
\bottomrule
\end{tabular*}

\par\vspace{0.6em}
\begin{minipage}{\linewidth}
\footnotesize\raggedright
\textit{Notes:}
Only programs selected by exactly one rule are shown.
Both rules select their ten largest values within each specification.
Both information measures assume $s_{2j}=s_{1j}$.
$V_j^{\mathrm{EB}}$ is the expected  empirical Bayes welfare gain from a follow-up
experiment conditional on the initial evidence, excluding the cost
of conducting the experiment.
The common Gaussian prior is fitted to consumption effects,
with $\hat\mu_0=2.985$ and $\hat\tau^2=4.360$.
Prior weight is reported as $100(1-\hat\tau^2/(\hat\tau^2+s_{1j}^2))$.
\end{minipage}
\end{table}

Finally, it is interesting to study similar comparisons where follow up data collection are significantly more precise instead of equally precise. Here we find that changing the precision of the follow-up study does not change the choice of the experiment for the top-10 $A$-value ranking choice with a single exception. 

\begin{figure}[!ht]
    \centering
    \includegraphics[scale = 0.55]{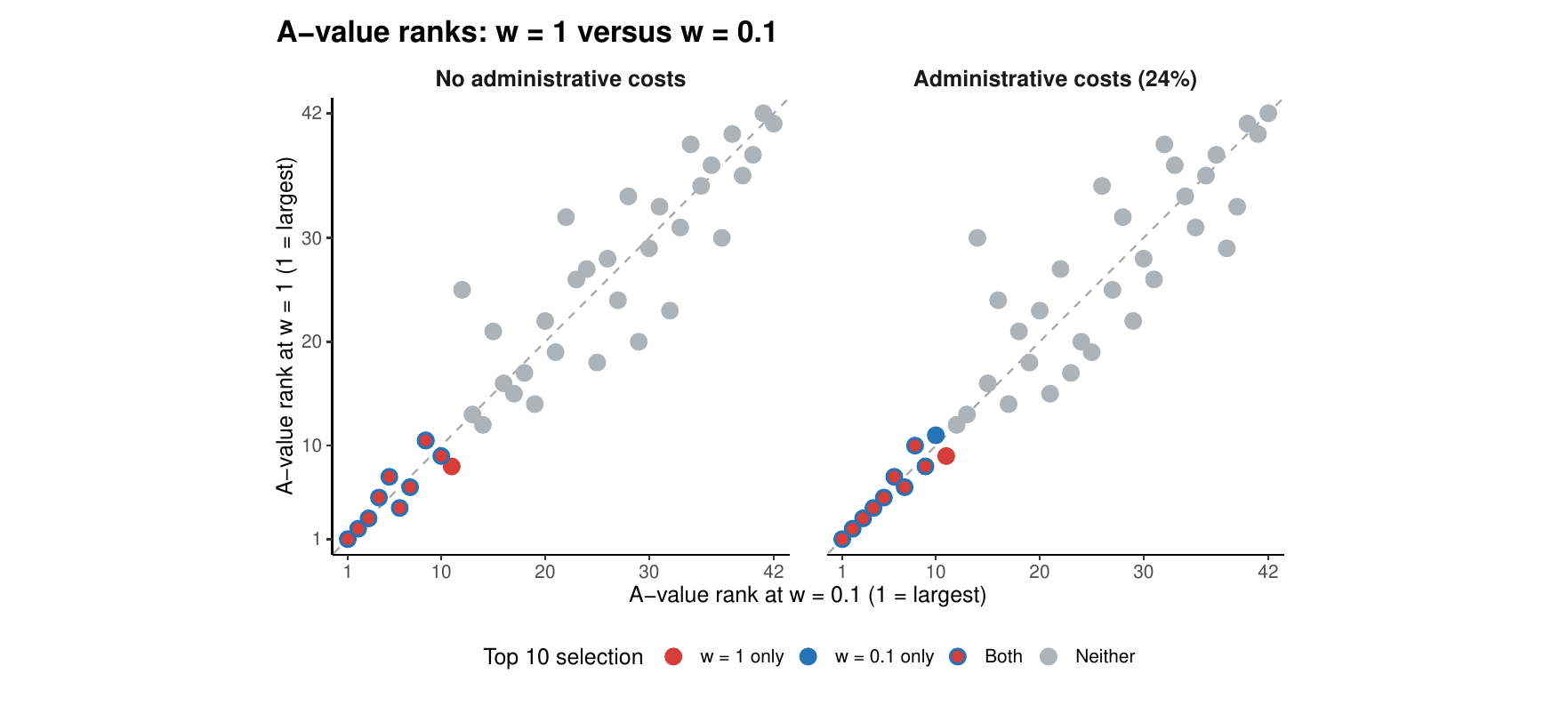}
    \caption{\footnotesize The figure illustrates that $A$-value ranking are strongly correlated as we vary the precision of the follow up study from $s_2/s_1 = 1$ to $s_2/s_1 = 0.1$. }
    \label{fig:different_precision}
\end{figure}

\section{Conclusions}

This paper reinterprets statistical significance through the choice between making an immediate policy recommendation and collecting further evidence. We propose reporting the \(A\)-value alongside \(p\)-values as a measure of the break-even welfare cost of further experimentation under minimax regret. Ranking studies by their \(A\)-values yields finite-sample welfare guarantees when experimentation capacity is limited. Our application to cash transfer programs illustrates how accounting for uncertainty in welfare units changes experimentation priorities, without requiring a common distribution of policy effects across studies.

This paper opens several questions for future research. The first is to combine Bayesian procedures with minimax regret through ambiguity-averse objectives that incorporate prior information while allowing uncertainty about its specification \citep[e.g.][]{banerjeeetal2020, gilboa2008probability, andrews2025certified}, here taking into account opportunity costs of future data collection. The second is to choose the size of the follow-up sample as a function of the initial evidence, jointly determining whether and how to collect additional data. A third direction is to extend the analysis to settings where unbiased estimates of welfare effects are unavailable \citep{cesa2025adaptive}.

\bibliographystyle{plainnat}
\bibliography{bibliography}

\section{Proofs of main results}

\subsection{Proof of Theorem \ref{thm:dead_zone}}
\label{proof:dead_zone}

Theorem \ref{thm:dead_zone} is a corollary of 
the more general Theorem \ref{thm:multi_period_dead_zone} with $H=2$ and
$c_1=c$. Under this specialization, write 
$ 
S_1^2=\frac{1}{s_1^2},
T_1
=
\frac{X_1/s_1^2}{S_1}
=
\frac{X_1}{s_1},
S_2^2
=
\frac{1}{s_1^2}+\frac{1}{s_2^2},
T_2
=
\frac{X_1/s_1^2+X_2/s_2^2}{S_2}.
$ 
Because $S_2>0$,
\[
1\{T_2\geq0\}
=
1\left\{
\frac{X_1}{s_1^2}
+
\frac{X_2}{s_2^2}
\geq0
\right\}
=
1\left\{
\frac{s_2^2}{s_1^2}X_1+X_2\geq0
\right\}.
\]
Moreover, the loss and regret in the $H=2$ problem coincide with
$\mathrm{Reg}_1(\delta_1,\delta_2;\Delta,c)$. Theorem
\ref{thm:multi_period_dead_zone} therefore implies that there exists a
minimax-regret rule of the form
\[
\delta_1^\star(X_1)
=
\begin{cases}
1,
& X_1/s_1\geq\tau^\star,\\
0,
& X_1/s_1\leq-\tau^\star,\\
\texttt{NA},
& |X_1|/s_1<\tau^\star,
\end{cases}, \qquad \delta_2^\star(X_1,X_2)
=
1\left\{
\frac{s_2^2}{s_1^2}X_1+X_2\geq0
\right\}
\]
where $\tau^\star=\tau_1^\star$. It remains only to specialize the threshold characterization in
Theorem \ref{thm:multi_period_dead_zone}. Let
\[
d=\frac{\Delta}{s_1},
\qquad
w=\frac{s_2}{s_1}.
\]
By symmetry (see the proof of Theorem \ref{thm:multi_period_dead_zone}), it is sufficient to consider $\Delta\geq0$. Under a
general threshold $\tau$, an incorrect first-period decision occurs
when $X_1/s_1\leq-\tau$, with probability
$ 
\Phi(-\tau-d).
$ 
Conditional on $X_1/s_1=x$, continuation followed by an incorrect
terminal decision occurs when
$ 
\frac{s_2^2}{s_1^2}X_1+X_2<0.
$ 
Since $X_2\sim\mathcal N(s_1d,s_2^2)$ independently of $X_1$, its
conditional probability is
$ 
\Phi\left(-wx-\frac{d}{w}\right).
$ 
Finally, the probability of continuation is
$ 
\mathbb P_\Delta
\left(
-\tau<\frac{X_1}{s_1}<\tau
\right)
=
\Phi(\tau-d)-\Phi(-\tau-d).
$ 
Since $X_1/s_1\sim\mathcal N(d,1)$, the $H=2$ regret function in
Theorem \ref{thm:multi_period_dead_zone} satisfies
\[
\begin{aligned}
\frac{1}{s_1}
R_2(\tau;\Delta,c)
={}&
d\Phi(-\tau-d)
+
d\int_{-\tau}^{\tau}
\Phi\left(-wx-\frac{d}{w}\right)
\phi(x-d)\,dx +
\frac{c}{s_1}
\left[
\Phi(\tau-d)-\Phi(-\tau-d)
\right].
\end{aligned}
\]
For $\Delta<0$, symmetry gives the same expression with
$d=|\Delta|/s_1$. Therefore, the threshold characterization in
Theorem \ref{thm:multi_period_dead_zone} becomes
\[
\begin{aligned}
\tau^\star
\in
\operatorname*{arg\,min}_{\tau\in[0,\infty]}
\max_{0\leq d\leq\bar\Delta/s_1}
\Bigg\{
&
d\Phi(-\tau-d)
+
d\int_{-\tau}^{\tau}
\Phi\left(-wx-\frac{d}{w}\right)
\phi(x-d)\,dx +
\frac{c}{s_1}
\left[
\Phi(\tau-d)-\Phi(-\tau-d)
\right]
\Bigg\}.
\end{aligned}
\]

\subsection{Proof of Proposition \ref{cor:1}} \label{proof:cor:1}

Throughout the proof we define 
$
d=\frac{|\Delta|}{s_1},
w=\frac{s_2}{s_1},
\gamma=\frac{c}{s_1}.
$
We prove each part separately. 

\paragraph{Proof of Part (A)}
Fix a threshold \(\tau\geq0\). By symmetry, it is sufficient to consider
\(\Delta\geq0\). 

For \(\Delta=s_1d\geq0\), let \(p(d,\tau)\) denote the probability that the policy is ultimately rejected under the first-period rule with threshold \(\tau\) and the optimal terminal rule \(\delta_2^\star\). Formally,
$$
\begin{aligned}
p(d,\tau)
=&
\mathbb P_{\Delta=s_1d}
\left(
\frac{X_1}{s_1}\leq-\tau
\right) +
\mathbb P_{\Delta=s_1d}
\left(
-\tau<\frac{X_1}{s_1}<\tau,
\ \frac{s_2^2}{s_1^2}X_1+X_2<0
\right).
\end{aligned}
$$
We suppress the dependence of \(p(d,\tau)\) on \(w=s_2/s_1\) to simplify notation. Let
$$
q(d,\tau)
=
\Phi(\tau-d)-\Phi(-\tau-d)
$$
denote the probability of collecting the follow-up signal. The normalized
regret can then be written as
$$
\frac{1}{s_1}
\mathrm{Reg}_1
(\delta_1^\tau,\delta_2^\star;\Delta,c)
=
d\,p(d,\tau)+\gamma q(d,\tau).
$$
We show that both terms on the right-hand side are nonincreasing for
\(d\geq u_0\), where \(u_0\) is the unique positive solution to
$
\Phi(-u_0)=u_0\phi(u_0).
$
Numerically, \(u_0\simeq0.7518\).
First,
$$
\frac{\partial q(d,\tau)}{\partial d}
=
\phi(\tau+d)-\phi(\tau-d)
\leq0,
\qquad d,\tau\geq0.
$$
 Next,  define 
$$
Z_1=\frac{X_1}{s_1},
\qquad
Z_2=\frac{X_2}{s_2} \Rightarrow 
Z_1\sim\mathcal N(d,1), \qquad V=\frac{wZ_1+Z_2}{\sqrt{1+w^2}},
\qquad
a=\frac{\sqrt{1+w^2}}{w}, \qquad \Gamma =\tau\sqrt{1+w^2}. 
$$
The terminal decision rejects the policy if
\(wZ_1+Z_2<0\).  
Also, \(V\sim\mathcal N(ad,1)\), and
$
Z_1\mid V=v
\sim
\mathcal N\left(
\frac{w}{\sqrt{1+w^2}}v,
\frac{1}{1+w^2}
\right).
$
Importantly, this conditional distribution does not depend on \(d\).
Conditional on $V=v$, the probability that the policy is ultimately rejected is
\[
\mathbb P_{\Delta=s_1d}\left(
\left.
\left\{\frac{X_1}{s_1}\leq-\tau\right\}
\cup
\left\{
-\tau<\frac{X_1}{s_1}<\tau,\,
\frac{s_2^2}{s_1^2}X_1+X_2<0
\right\}
\,\right|\,V=v
\right)
=
\begin{cases}
\Phi(\Gamma-wv), & v<0,\\[3pt]
\Phi(-\Gamma-wv), & v>0.
\end{cases}
\]
Indeed, when \(v<0\), the second-period decision rejects the policy, so the
policy is ultimately rejected unless \(Z_1\geq\tau\). When \(v>0\), the
second-period decision implements the policy, so rejection occurs only when
\(Z_1\leq-\tau\).

Consider a symmetric random variable \(U\), independent of \(V\), with
$
P(U = 0) = 2 \Phi(\Gamma) - 1
$ 
and density
$
f_U(u)=w\phi(\Gamma+w|u|),
u\neq0.
$
It follows that 
$$
p(d,\tau)
=
\mathbb P(U>V)
=
\mathbb P(U-\varepsilon>ad), \qquad \varepsilon\sim\mathcal N(0,1). 
$$

For \(\Gamma>0\), the continuous part of the density of \(U\) satisfies
$
f_U(u)
=
w\phi(\Gamma)
\exp\left\{
-\Gamma w|u|-\frac{w^2u^2}{2}
\right\}.
$
Using the identity
$$
e^{-b|u|}
=
\frac{b}{2\sqrt{\pi}}
\int_0^\infty
t^{-3/2}
\exp\left\{-\frac{b^2}{4t}\right\}
\exp\{-tu^2\}\,dt,
\qquad b>0,
$$

we can represent \(U\) as a mixture of centered Gaussian random variables
whose variances \(\sigma^2\) satisfy
$
0\leq\sigma^2\leq\frac{1}{w^2},
$
where the atom at zero corresponds to \(\sigma^2=0\). When \(\Gamma=0\), the same
representation holds directly because
\(U\sim\mathcal N(0,1/w^2)\).
For $w>0$ and $\Gamma>0$, let $H(\cdot;\tau,w)$ be the probability measure on
$[0,1/w^2]$ defined, for every Borel set $B\subseteq[0,1/w^2]$, by
\[
\begin{aligned}
H(B;\tau,w)
={}&
\bigl(2\Phi(\Gamma)-1\bigr)1\{0\in B\} +
\int_{B\cap(0,\,1/w^2)}
\frac{\Gamma w^2\phi(\Gamma)}
     {(1-w^2s)^{3/2}}
\exp\left\{
-\frac{\Gamma^2w^2s}{2(1-w^2s)}
\right\}ds.
\end{aligned}
\]
We can write 
\[
p(d,\tau)
=
\int_{[0,\,1/w^2]}
\Phi\left(
-\frac{ad}{\sqrt{1+s}}
\right)
\,dH(s;\tau,w).
\]
When $\Gamma=0$, define $H(\cdot;0,w)$ to place probability one on $s=1/w^2$.
Moreover, for every $\sigma^2\in[0,1/w^2]$,
\[
1
\leq
\lambda(\sigma^2)
\equiv
\frac{a}{\sqrt{1+\sigma^2}}
\leq a,
\]
where the first inequality follows from
$a^2=1+1/w^2$. Hence,
\[
p(d,\tau)
=
\int_{[0,\,1/w^2]}
\Phi\bigl(-\lambda(\sigma^2)d\bigr)
\,dH(\sigma^2;\tau,w).
\]

For every \(\lambda\geq1\),
$
\frac{\partial}{\partial d}
\left\{
d\Phi(-\lambda d)
\right\}
=
\Phi(-\lambda d)-\lambda d\phi(\lambda d).
$
By the definition of \(u_0\),
$
\Phi(-x)-x\phi(x)\leq0$ for every $x\geq u_0.
$
Hence, if \(d\geq u_0\), then \(\lambda d\geq u_0\), and
$
\frac{\partial}{\partial d}
\left\{
d\Phi(-\lambda d)
\right\}
\leq0.
$
Because \(d\,p(d,\tau)\) is a mixture of these terms, it is also
nonincreasing for \(d\geq u_0\). The same conclusion holds for
\(\tau=\infty\), for which
$
p(d,\tau)=\Phi(-ad),
q(d,\tau)=1.
$
Therefore, for every threshold \(\tau\), normalized regret is nonincreasing
in \(d\) for \(d\geq u_0\). If
\(\bar\Delta/s_1\geq0.76>u_0\), it follows that

$$
\max_{0\leq d\leq\bar\Delta/s_1}
\left\{
d\,p(d,\tau)+\gamma q(d,\tau)
\right\}
=
\max_{0\leq d\leq u_0}
\left\{
d\,p(d,\tau)+\gamma q(d,\tau)
\right\}.
$$

This equality holds for every \(\tau\). Consequently, the minimax objective
as a function of \(\tau\), and hence its set of minimizers, does not depend
on \(\bar\Delta\) once \(\bar\Delta/s_1\geq0.76\). Finally, the normalized
objective depends on the remaining primitives only through
$
w=\frac{s_2}{s_1}, 
\gamma=\frac{c}{s_1}.
$

\paragraph{Proof of Part (B)}
Next, retain the notation from Part (A) and write 
\begin{equation}
\label{eq:tau_cost_risk}
\begin{aligned}
L(\tau,d)&=d\,p(d,\tau),\qquad
R(\tau,d,\gamma)=L(\tau,d)+\gamma q(d,\tau), \qquad 
F(\tau,\gamma)=\max_{0\leq d\leq u_0}R(\tau,d,\gamma).
\end{aligned}
\end{equation}
Part (A) shows that minimizing $F$ over $\tau\in[0,\infty]$
is equivalent to the threshold problem in Theorem~\ref{thm:dead_zone}.
The values at $\tau=\infty$ are those specified in Part (A).
Subscripts on $L,R,F,p,q$, and the auxiliary functions below denote
partial derivatives.
We first prove uniqueness, continuity, and the endpoint statements,
and then establish cost monotonicity through four additional lemmas.

For finite $\tau>0$ and $d>0$, introduce the following auxiliary quantities:
\begin{align*}
A&=\phi(\tau+d),& B&=\phi(\tau-d),\\
k&=A+B,& z&=\frac{B-A}{A+B}=\tanh(\tau d),\\
P_+&=\Phi(d/w-w\tau),& P_-&=\Phi(-d/w-w\tau),\\
J&=AP_+-BP_-,& E&=AP_++BP_-,\\
M&=2\Phi(\Gamma)-1,& K&=a\phi(ad)M,\\
S&=E+zJ,& C&=\phi(\Gamma)\phi(ad)=A\phi(d/w-w\tau).
\end{align*}
Define 
\begin{align}
b(\tau,d)&=\frac{dJ}{k}
=d\frac{\Phi(d/w-w\tau)-e^{2\tau d}\Phi(-d/w-w\tau)}{1+e^{2\tau d}},
\label{eq:tau_cost_b}\\
G(\tau,d)&=p(d,\tau)-d(K+S).
\label{eq:tau_cost_G}
\end{align}
Dependence on $w$ is suppressed unless $w$ is varied explicitly.

Differentiating the Gaussian expressions gives
\begin{equation}
p_\tau(d,\tau)=-J,\qquad
p_d(d, \tau)=-(E+K),\qquad
J_\tau=-\tau J-dE,
\label{eq:tau_cost_pderiv}
\end{equation}
and
\begin{equation}
\begin{aligned}
 q_\tau(d,\tau)&=k,& q_d(d, \tau)&=-kz,\\
q_{dd}(d, \tau)&=-k(\tau-dz),&
q_{d\tau}(d, \tau)&=k(\tau z-d).
\end{aligned}
\label{eq:tau_cost_qderiv}
\end{equation}
Consequently,
\begin{align}
R_\tau&=k(\gamma-b),\label{eq:tau_cost_Rt}\\
G&=R_d+zR_\tau=L_d+zL_\tau,\label{eq:tau_cost_Gidentity}\\
R_{\tau\tau}+\tau R_\tau&=d^2E+\gamma d k z>0.\label{eq:tau_cost_convexityidentity}
\end{align}
The cancellation in \eqref{eq:tau_cost_Gidentity} makes $G$ independent of
$\gamma$. Also,
\begin{equation}
b_\tau=-\frac{d^2}{2}(1-z^2)(P_++P_-)<0.
\label{eq:tau_cost_bt}
\end{equation}

\begin{lem}[Curvature and the location of worst-case effects]
\label{lem:tau_cost_curvature}
For every finite $\tau\geq0$ and $d>0$:
\begin{enumerate}
\item If $a^2d^2\leq2$, then $L_{dd}(\tau,d)\leq0$ and
$L_{\tau d}(\tau,d)\leq0$.
\item If $ad\geq1$ and $d(\tau+d)\geq1$, then $L_d(\tau,d)<0$.
\end{enumerate}
For $\tau,\gamma>0$, every maximizer in \eqref{eq:tau_cost_risk} belongs to
$(0,u_0)$, satisfies $R_d=0$ and $R_{dd}\leq0$, and has $L_d>0$.
\end{lem}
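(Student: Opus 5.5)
We will work throughout with the explicit derivative formulas \eqref{eq:tau_cost_pderiv}--\eqref{eq:tau_cost_qderiv}. The starting point is the product rule:
\[
L_d=p+d\,p_d=p-d(E+K),\qquad L_{dd}=2p_d+d\,p_{dd}, \qquad L_{\tau d}=p_\tau+d\,p_{\tau d}=-J-d\,(E+K)_\tau .
\]

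\emph{Curvature claims in item 1.} Rather than differentiating $E$ and $K$ directly, we will use the mixture representation from Part (A):
\[
p(d,\tau)=\int\Phi(-\lambda(s)d)\,dH(s;\tau,w),\qquad \lambda(s)\in[1,a].
\]
This gives
\[
L_{dd}=\int \tfrac{\partial^2}{\partial d^2}\{d\Phi(-\lambda d)\}\,dH=\int \lambda\phi(\lambda d)\bigl(\lambda^2d^2-2\bigr)\,dH .
\]
The integrand is nonpositive whenever $\lambda d\le \sqrt2$. Since $\lambda\le a$, this holds for every $s$ when $a^2d^2\le 2$, which yields $L_{dd}\le0$.

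For $L_{\tau d}$, we note that $\tau$ enters only through the mixing measure $H$. So we will write
\[
L_d=\int g(\lambda(s)d)\,dH(s;\tau,w),\qquad g(x)=\Phi(-x)-x\phi(x),
\]
and study how $H$ shifts as $\tau$ (equivalently $\Gamma$) grows. The atom at $0$ has mass $2\Phi(\Gamma)-1$, which increases in $\Gamma$, and the continuous part concentrates toward $s=0$. We expect $H$ to become stochastically smaller in $s$ as $\Gamma$ increases, which means larger $\lambda$. On $x\le\sqrt2$ we have $g'(x)=x\phi(x)(x^2-2)\le0$. Hence if $a d\le\sqrt2$, moving mass toward larger $\lambda$ lowers $\int g\,dH$, so $L_{\tau d}\le0$.

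The main obstacle is checking this stochastic ordering for the explicit density of $H$. We will try first to verify monotone likelihood ratio in $\Gamma$ after the change of variables $r=w^2s/(1-w^2s)$. If that fails, we will fall back on directly signing $-J-d(E+K)_\tau$ using $J_\tau=-\tau J-dE$ and $K_\tau=a\phi(ad)\cdot2\phi(\Gamma)\sqrt{1+w^2}$.

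\emph{Item 2.} By the same representation,
\[
L_d=\int g(\lambda d)\,dH .
\]
We have $g(x)<0$ for $x>u_0$ and $\lambda d\ge d$, so it suffices to have $\lambda d>u_0$ on the support of $H$. The condition $ad\ge1$ does not give this directly, because $\lambda$ can be as small as $1$ at $s=1/w^2$. We will therefore use $d(\tau+d)\ge1$ together with $G$. Identity \eqref{eq:tau_cost_Gidentity} gives $L_d=G-zL_\tau=G+zdJ$. We will bound $p-dS\le0$ via the inequality $\Phi(-x)\le\phi(x)/x$, applied at the thresholds $\tau+d$ and $d/w-w\tau$, where $d(\tau+d)\ge1$ enters. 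We will handle $-dK<0$ separately, using $ad\ge1$. We expect this Mills-ratio comparison to be routine once the terms are written out.

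\emph{Final statement.} Fix $\tau,\gamma>0$. From Part (A) we know $R$ is nonincreasing in $d$ for $d\ge u_0$. Also $\partial_d\{d\,p\}=p>0$ at $d=0$, while $q_d=-kz=0$ at $d=0$ and $R(\tau,0,\gamma)=\gamma q(0,\tau)$. So $R_d(\tau,0)=p(0,\tau)>0$, which excludes $d=0$ as a maximizer. To exclude $d=u_0$, we will show $R_d(\tau,u_0)<0$ strictly: the mixture term is strictly negative because the atom at $\lambda=a>1$ gives $g(au_0)<0$, and $q_d<0$ for $\tau>0$. Every maximizer is then interior, so $R_d=0$ and $R_{dd}\le0$. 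Finally, $R_d=0$ gives $L_d=-\gamma q_d=\gamma kz>0$, since $z=\tanh(\tau d)>0$.
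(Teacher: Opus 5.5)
\textbf{Item 2 has a genuine gap.} Your plan writes $L_d=G+zdJ$, then shows $G<0$ by proving $p-dS\le 0$ and dropping $-dK<0$. This cannot work, for two reasons.

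\emph{The decomposition points the wrong way.} For $d>0$ we have $J=2w\phi(\tau)e^{-a^2d^2/2}\int_0^\infty\phi(w(\tau+y))\sinh(dy)\,dy>0$ and $z=\tanh(\tau d)\ge 0$. Hence $L_d\ge G$, and proving $G<0$ says nothing about the sign of $L_d$.

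\emph{The intermediate inequality is false where the hypotheses hold.} Fix $d$ with $ad\ge 1$ and let $\tau\to\infty$. Both $ad\ge1$ and $d(\tau+d)\ge 1$ hold. Since $\phi(\tau\pm d)\to 0$, we get $E,J\to 0$, so $S\to 0$, while $p\to\Phi(-ad)>0$. Thus $p-dS\to\Phi(-ad)>0$.

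In this regime the only negative contribution is $-dK\to -ad\phi(ad)$. So $K$ cannot be discarded; it must be weighed against the continuation part of $p$. Note also that the paper uses item 2 as an input to the key sign restriction on $G$, to obtain $r+d^2<1$. Routing item 2 through $G$ therefore reverses the logical order.

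\textbf{The missing idea: split $p$ into stopping and continuation parts.} Write
\[
p(d,\tau)=\Phi(-\tau-d)+\int_0^\infty h(y,\tau)\,\phi(y+ad)\,dy,
\qquad h(y,\tau)=\Phi(\Gamma-wy)-\Phi(-\Gamma-wy)\ge 0 .
\]
This comes from conditioning on $V=-y$. The conditional law of $Z_1$ given $V$ is free of $d$, so $h$ does not depend on $d$. Then:
\begin{itemize}
\item The continuation part satisfies $\partial_d\{d\,p_{\mathrm{cont}}\}=\int_0^\infty h\,\phi(y+ad)\,(1-a^2d^2-ady)\,dy\le 0$ when $ad\ge 1$.
\item The stopping part satisfies $\Phi(-\tau-d)-d\phi(\tau+d)<\phi(\tau+d)\{1/(\tau+d)-d\}\le 0$ when $d(\tau+d)\ge 1$, by the Mills inequality.
\end{itemize}
Each hypothesis controls exactly one piece.

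\textbf{The rest of your proposal is sound.} Your treatment of $L_{dd}$ and of the location of maximizers matches the paper.

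For $L_{\tau d}$ you take a genuinely different route. The paper writes
\[
L_\tau=-2w\phi(\tau)\int_0^\infty\phi(w(\tau+y))\,d\,e^{-a^2d^2/2}\sinh(dy)\,dy
\]
and signs the $d$-derivative of the kernel using $x\cosh x\ge\sinh x$. Your stochastic-ordering argument also goes through, and the step you flagged as the obstacle is easy. For $s<1/w^2$,
\[
H([0,s];\tau,w)=2\Phi\bigl(\Gamma/\sqrt{1-w^2s}\bigr)-1,
\]
which is increasing in $\Gamma$. Equivalently, $s=(1-\Gamma^2/N^2)_+/w^2$ with $N\sim\mathcal N(0,1)$, a pointwise-decreasing coupling. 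So $H$ moves toward larger $\lambda$ as $\tau$ grows. Since $g(x)=\Phi(-x)-x\phi(x)$ is decreasing on $[0,\sqrt2]$, $L_d=\int g(\lambda d)\,dH$ is nonincreasing in $\tau$ whenever $ad\le\sqrt2$.

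Your route reuses the mixture representation and avoids the explicit kernel computation; the paper's route is a direct formula. One small correction: $g'(x)=\phi(x)(x^2-2)$, not $x\phi(x)(x^2-2)$. The sign is unaffected.
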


\begin{proof}
Differentiating the Gaussian-mixture representation from Part (A) gives
\begin{equation}
L_{dd}(\tau,d)=
\int_{[0,\,1/w^2]}\lambda(s)\phi(\lambda(s)d)
\bigl((\lambda(s)d)^2-2\bigr)\,dH(s;\tau,w)\leq0
\quad\text{if }a^2d^2\leq2,
\label{eq:tau_cost_Ldd}
\end{equation}
because $1\leq\lambda(s)\leq a$. The mixing measure does not depend
on $d$, and its bounded support justifies differentiation.
For the mixed derivative, we can write 
\begin{equation}
L_\tau=-2w\phi(\tau)\int_0^\infty
\phi\bigl(w(\tau+y)\bigr)d e^{-a^2d^2/2}\sinh(dy)\,dy.
\label{eq:tau_cost_Ltintegral}
\end{equation}
The derivative in $d$ of the last three factors in this
integrand is
\begin{align*}
\frac{\partial}{\partial d}
\left[d e^{-a^2d^2/2}\sinh(dy)\right]
&=e^{-a^2d^2/2}\bigl[(1-a^2d^2)\sinh(dy)+dy\cosh(dy)\bigr]\\
&\geq e^{-a^2d^2/2}(2-a^2d^2)\sinh(dy)\geq0,
\end{align*}
using $x\cosh x\geq\sinh x$ for $x\geq0$.
Differentiation under the integral is justified by Gaussian domination
on compact intervals of $d$. This proves the first assertion.

For the second assertion, condition on $V=-y<0$ using the conditional
distribution of $Z_1$ given $V$ obtained in Part (A). 
The probability of continuation followed by terminal rejection is
\begin{equation}
\begin{aligned}
p_{\mathrm{cont}}(\tau,d)
&:=\mathbb P_{\Delta=s_1d}
\left(-\tau<Z_1<\tau,\ V<0\right) =\int_0^\infty h(y,\tau)\phi(y+ad)\,dy,
\end{aligned}
\label{eq:tau_cost_pcont}
\end{equation}
where, by Gaussian conditioning and symmetry,
\[
\begin{aligned}
h(y,\tau)
&:=\mathbb P_{\Delta=s_1d}
\left(-\tau<Z_1<\tau\mid V=-y\right) =\Phi(\Gamma-wy)-\Phi(-\Gamma-wy)\geq0.
\end{aligned}
\]
The integral in \eqref{eq:tau_cost_pcont} follows because
$V\sim\mathcal N(ad,1)$. Thus the total rejection probability satisfies
\[
p(d, \tau)=\Phi(-\tau-d)+p_{\mathrm{cont}}(\tau,d),
\]
where the two terms correspond to immediate rejection and rejection
Its contribution to $L_d$ is
\[
\frac{\partial}{\partial d}\{d p_{\mathrm{cont}}(\tau,d)\}
=\int_0^\infty h(y,\tau)\phi(y+ad)
\bigl(1-a^2d^2-ady\bigr)\,dy\leq0
\]
when $ad\geq1$. The derivative of the stopping component is strictly
negative when $d(\tau+d)\geq1$, since
\[
\Phi(-\tau-d)-d\phi(\tau+d)
<\phi(\tau+d)\left(\frac{1}{\tau+d}-d\right)\leq0.
\]
Here we used the Gaussian Mills inequality
$\Phi(-x)<\phi(x)/x$ for $x>0$, which follows by comparing
$\int_x^\infty\phi(y)\,dy$ with
$x^{-1}\int_x^\infty y\phi(y)\,dy$.

Finally, symmetry gives $R_d(\tau,0,\gamma)=1/2>0$.
Part (A) gives $L_d(\tau,u_0)\leq0$, whereas
$q_d(d =u_0, \tau)<0$ for $\tau>0$. Thus
$R_d(\tau,u_0,\gamma)<0$ for $\gamma>0$.
Every maximizer is therefore interior, and at such a maximizer
$L_d=-\gamma q_d=\gamma kz>0$.
\end{proof}

\begin{lem}[Existence, uniqueness, continuity, and endpoint behavior]
\label{lem:tau_cost_unique}
For every $\gamma\geq0$, $F(\cdot,\gamma)$ has a unique minimizer
in $[0,\infty]$. This minimizer equals infinity at $\gamma=0$ and is
finite at every $\gamma>0$. It is continuous as a function of
$\gamma\geq0$ with values in $[0,\infty]$, and therefore tends to
infinity as $\gamma\downarrow0$.
In addition, $\tau^\star(w,\gamma)=0$ whenever $\gamma>u_0$.
\end{lem}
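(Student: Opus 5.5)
The plan is to work entirely with the normalized objective $R(\tau,d,\gamma)=L(\tau,d)+\gamma q(d,\tau)$ and the identity \eqref{eq:tau_cost_Rt}, $R_\tau=k(\gamma-b)$ with $k>0$. The whole lemma should follow from a single-crossing property in $\tau$. By \eqref{eq:tau_cost_bt}, $b(\cdot,d)$ is strictly decreasing in $\tau$ for every $d>0$, so $\gamma-b(\tau,d)$ is strictly increasing. Hence each $R(\cdot,d,\gamma)$ is first strictly decreasing and then strictly increasing in $\tau$, with either phase possibly empty; \eqref{eq:tau_cost_convexityidentity} confirms that any critical point is a strict local minimum. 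At $d=0$ we have $L\equiv0$ and $q_\tau=k>0$, so $R$ is nondecreasing there as well.

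\textbf{Uniqueness and existence.} A pointwise maximum of strictly quasiconvex functions is quasiconvex. I would upgrade this to strict quasiconvexity of $F(\cdot,\gamma)$ by the following argument. Suppose $\tau_1<\tau_2$ were both minimizers. Take any $\tau\in(\tau_1,\tau_2)$ and a maximizer $d_\tau$ of $R(\tau,\cdot,\gamma)$; by Lemma~\ref{lem:tau_cost_curvature} this maximizer lies in $(0,u_0)$ when $\gamma,\tau>0$, and the $\gamma=0$ case is treated separately below. Strict quasiconvexity of $R(\cdot,d_\tau,\gamma)$ then gives
\[
F(\tau,\gamma)=R(\tau,d_\tau,\gamma)<\max\{R(\tau_1,d_\tau,\gamma),R(\tau_2,d_\tau,\gamma)\}\le\max\{F(\tau_1,\gamma),F(\tau_2,\gamma)\},
\]
which contradicts minimality. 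For existence, I would compactify $[0,\infty]$ and show that $R$ is jointly continuous in $(\tau,d,\gamma)$ up to $\tau=\infty$, using $p(d,\tau)\to\Phi(-ad)$ and $q(d,\tau)\to1$ uniformly on $d\in[0,u_0]$. By Berge's maximum theorem, $F$ is then continuous on $[0,\infty]\times[0,\infty)$, so a minimizer exists.

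\textbf{Endpoints.} First, $b(\tau,d)\to0$ as $\tau\to\infty$, uniformly on $d\in[0,u_0]$; both terms in \eqref{eq:tau_cost_b} are bounded by $d$ times Gaussian tails in $w\tau$. Combined with $b_\tau<0$, this gives $b>0$ for all finite $\tau$ and all $d>0$.
\begin{itemize}
\item[(i)] For $\gamma=0$: $R_\tau=-kb<0$ for every $d>0$. So every $R(\cdot,d,0)$ with $d>0$ is strictly decreasing, and therefore $F(\cdot,0)$ is strictly decreasing (the value $d=0$ contributes $0$ and is never the binding maximizer). The unique minimizer is $\infty$.
\item[(ii)] For $\gamma>0$: uniform convergence yields $T$ with $b<\gamma$ for all $\tau\ge T$ and all $d\in[0,u_0]$. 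Every $R(\cdot,d,\gamma)$ is then strictly increasing on $[T,\infty]$, hence so is $F$, and the minimizer lies in $[0,T]$, so it is finite.
\item[(iii)] For $\gamma>u_0$: the bound $b\le d\,\Phi(d/w-w\tau)/(1+e^{2\tau d})\le d\le u_0<\gamma$ gives $R_\tau>0$ everywhere. Thus $F$ is strictly increasing and $\tau^\star=0$.
\end{itemize}

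\textbf{Continuity in $\gamma$.} Continuity of $F$ on the compact set $[0,\infty]\times[0,\bar\gamma]$, together with uniqueness of the minimizer, gives continuity of $\gamma\mapsto\tau^\star(w,\gamma)$ into $[0,\infty]$ by the standard argmin-continuity argument: any limit point of minimizers is a minimizer of the limit problem. At $\gamma=0$ the minimizer is $\infty$, so $\tau^\star(w,\gamma)\to\infty$ as $\gamma\downarrow0$.

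I expect the main obstacle to be the technical verification of uniform continuity at the compactified endpoint $\tau=\infty$ and the strictness argument near the boundary cases $\tau=0$ and $d=0$. At $\tau=0$, Lemma~\ref{lem:tau_cost_curvature}'s interiority statement is not available as stated. There I would argue directly that any candidate minimizer pair forces some $d>0$ to be binding, since $R(\tau,0,\gamma)=\gamma q(0,\tau)$ is dominated by nearby $d>0$ because $R_d(\tau,0,\gamma)=1/2>0$.
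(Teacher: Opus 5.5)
Your proof is correct. It shares the paper's compactification and argmin-continuity steps, but obtains uniqueness and the large-cost claim by a different route.

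\textbf{Uniqueness.} The paper passes to $x=2\Phi(\tau)-1$ and uses $R_{\tau\tau}+\tau R_\tau=d^2E+\gamma dkz>0$ to make each $R(\cdot,d,\gamma)$ strictly convex in $x$, hence $F$ strictly convex. You use only the single crossing of $R_\tau=k(\gamma-b)$ implied by $b_\tau<0$. This gives strict quasiconvexity in $\tau$, which is all that uniqueness requires.

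\textbf{Endpoints.} Your route is more elementary and gets the endpoint facts from the same identity. At $\gamma=0$, $b>0$ follows from strict decrease to zero; the paper instead reads $J>0$ off the integral representation of $L_\tau$. For large cost, the bound $b\le d/(1+e^{2\tau d})\le d/2$ gives $R_\tau>0$ everywhere. This in fact proves $\tau^\star(w,\gamma)=0$ already for $\gamma>u_0/2$. That is sharper than what the paper's coupling inequality $R(\tau,d,\gamma)\ge R(0,d,\gamma)+(\gamma-d)q(d,\tau)$, evaluated at $d=u_0$, delivers.

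\textbf{What the convexity buys.} The paper reuses its convexity later as a first-order inequality in the cost-monotonicity argument. Your single-crossing property would suffice there too, since that step only needs the sign of $R_\tau$ at an intermediate threshold.

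\textbf{Your two flagged concerns.} The obstacle you anticipate at $\tau=\infty$ is handled in one line. The threshold rule and the always-continue rule differ only on the stopping event, so $|p(d,\tau)-\Phi(-ad)|\le 1-q(d,\tau)\le 2\Phi(u_0-\tau)$ uniformly on $[0,u_0]$. Your concern about $\tau=0$ is moot. The intermediate threshold in your uniqueness argument is strictly positive, and $R_d(\tau,0,\gamma)=1/2$ makes the maximizer positive at every $\tau$.
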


\begin{proof}
Make the increasing change of variable
$x=2\Phi(\tau)-1\in[0,1]$, with $x=1$ corresponding to
$\tau=\infty$. The threshold rule and the rule that always continues
can differ in their terminal decisions only on the stopping event.
Hence $|p(d,\tau)-\Phi(-ad)|\leq1-q(d,\tau)$, and
\begin{equation}
\sup_{0\leq d\leq u_0}
|R(\tau,d,\gamma)-R(\infty,d,\gamma)|
\leq2(u_0+\gamma)\Phi(u_0-\tau)\longrightarrow0.
\label{eq:tau_cost_endpoint_continuity}
\end{equation}
Together with continuity at finite thresholds, this makes the
transformed objective continuous on $[0,1]$, so it attains a minimum.
Moreover, because $0\leq q(d,\tau)\leq1$,
\begin{equation}
|F(\tau,\gamma)-F(\tau,\eta)|\leq|\gamma-\eta|
\qquad(\tau\in[0,\infty],\ \gamma,\eta\geq0).
\label{eq:tau_cost_cost_lipschitz}
\end{equation}
The transformed objective is thus jointly continuous in $(x,\gamma)$,
including at $x=1$.

For $d>0$, \eqref{eq:tau_cost_convexityidentity} implies
\[
\frac{\partial^2 R}{\partial x^2}
=\frac{R_{\tau\tau}+\tau R_\tau}{4\phi(\tau)^2}>0
\qquad(0<x<1).
\]
For every fixed $d>0$ and $\gamma\geq0$, the normalized regret
$R(\tau,d,\gamma)$ is strictly convex as a function of
$x=2\Phi(\tau)-1$, including its continuous extensions at
$x=0$ and $x=1$. For $d=0$, it equals the affine function
$\gamma x$.
For any $0\leq x_1<x_2\leq1$ and $0<\theta<1$, choose a maximizing
effect at $x_\theta=(1-\theta)x_1+\theta x_2$. This effect is positive
because $R_d(\tau,0,\gamma)=1/2$. Strict convexity of its regret
function gives strict convexity of $F$ in $x$. Consequently the
minimizer is unique.

Equation \eqref{eq:tau_cost_Ltintegral} gives $J>0$ for $d>0$.
Extend $b(\tau,d)$ to $d=0$ by setting $b(\tau,0)=0$. Uniformly over
$d\in[0,u_0]$,
\[
0\leq b(\tau,d)\leq u_0\Phi(u_0/w-w\tau)\longrightarrow0.
\]
For $\gamma>0$, \eqref{eq:tau_cost_Rt} therefore implies that
$R_\tau(\tau,d,\gamma)>0$ for every $d\in[0,u_0]$ once $\tau$ is
sufficiently large. Thus $F$ is strictly increasing beyond a finite
threshold, and infinity cannot minimize it.
At zero cost, \eqref{eq:tau_cost_Ltintegral} gives $L_\tau<0$ for every $d>0$.
Every finite-threshold maximizer is positive. Evaluating at a maximizer
for the larger of any two finite thresholds shows that $F(\tau,0)$
is strictly decreasing in $\tau$. Its unique minimum is at infinity.

To prove continuity of the minimizer, let $\gamma_n\to\gamma_0\geq0$
and put
$x_n=2\Phi(\tau^\star(w,\gamma_n))-1$, with $\Phi(\infty)=1$.
Every subsequence of $(x_n)$ has a further subsequence converging
to some $\bar x\in[0,1]$. 
For every fixed $x\in[0,1]$, optimality of $x_{n_k}$ gives
\[
F\left(\Phi^{-1}\left(\frac{1+x_{n_k}}{2}\right),\gamma_{n_k}\right)
\leq
F\left(\Phi^{-1}\left(\frac{1+x}{2}\right),\gamma_{n_k}\right),
\]
where $\Phi^{-1}(1)=\infty$.
Since $x_{n_k}\to\bar x$ and $\gamma_{n_k}\to\gamma_0$,
joint continuity of the transformed objective implies
\[
F\left(\Phi^{-1}\left(\frac{1+\bar x}{2}\right),\gamma_0\right)
\leq
F\left(\Phi^{-1}\left(\frac{1+x}{2}\right),\gamma_0\right).
\]
This inequality holds for every $x\in[0,1]$. Thus $\bar x$
minimizes the transformed objective at cost $\gamma_0$.
By uniqueness,
\[
\bar x=2\Phi\bigl(\tau^\star(w,\gamma_0)\bigr)-1.
\]

If $\gamma_0>0$, the limit is smaller than one, and continuity of
$x\mapsto\Phi^{-1}((1+x)/2)$ gives
$\tau^\star(w,\gamma_n)\to\tau^\star(w,\gamma_0)$.
If $\gamma_0=0$, then $x_n\to1$, which gives
$\tau^\star(w,\gamma_n)\to\infty$.
This proves continuity.

Finally, the terminal decisions under thresholds $0$ and $\tau$
can differ only when the latter rule continues. Hence
$p(d, \tau = 0)-p(d,\tau)\leq q(d,\tau)$ and
\[
R(\tau,d,\gamma)\geq R(0,d,\gamma)+(\gamma-d)q(d,\tau).
\]
By the definition of $u_0$, $F(0,\gamma)=u_0\Phi(-u_0)$.
For $\gamma>u_0$ and any $\tau\in(0,\infty]$, evaluating the last
inequality at $d=u_0$ gives
$F(\tau,\gamma)>F(0,\gamma)$ because $q(u_0, \tau)>0$.
This proves the additional assertion.
\end{proof}

\begin{lem}[Shape of the continuation benefit]\label{lem:tau_cost_bshape}
For fixed $\tau,w>0$, the positive function $d\mapsto b(\tau,d)$ is strictly
log-concave on $(0,\infty)$. In addition:
\begin{enumerate}
\item If $b_d(\tau,d)\leq0$, then $\tau d\{1+\tanh(\tau d)\}>1$.
\item If $d/w=\sqrt2$ and $\tau d\leq1$, then $b_d(\tau,d)>0$.
\end{enumerate}
\end{lem}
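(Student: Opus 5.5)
The plan is to take logarithms and split $b$ into elementary factors, then work with a single one-dimensional function of $d$. Since $A=\phi(\tau+d)$ and $B=\phi(\tau-d)$ satisfy $A/B=e^{-2\tau d}$, definition \eqref{eq:tau_cost_b} gives
\[
b(\tau,d)=\frac{d\,D(d)}{2\cosh(\tau d)},\qquad
D(d)=e^{-\tau d}P_+-e^{\tau d}P_-,
\]
where $P_\pm=\Phi(\pm d/w-w\tau)$. Positivity of $b$ follows from $J>0$, which was already obtained from \eqref{eq:tau_cost_Ltintegral}. Hence $D>0$ on $(0,\infty)$, and $D$ is odd with $D(0)=0$. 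Writing $r=D'/D$, we have
\[
(\log b)_d=\frac1d+r-\tau\tanh(\tau d),\qquad
(\log b)_{dd}=-\frac1{d^2}+r'-\tau^2(1-z^2).
\]
The terms $-1/d^2$ and $-\tau^2(1-z^2)$ are strictly negative, so strict log-concavity follows once I show $r'\le 0$, i.e. that $D$ is log-concave on $(0,\infty)$.

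To handle $D$ I will use the Gaussian identity
\[
e^{-\tau d}\phi(d/w-w\tau)=e^{\tau d}\phi(-d/w-w\tau)=:\psi(d),
\]
which follows by expanding the squares. It gives
\[
D'=-\tau\bigl(e^{-\tau d}P_++e^{\tau d}P_-\bigr)+\frac{2}{w}\psi ,
\]
and, with $\psi'=-(d/w^2)\psi$ (because $\psi\propto e^{-d^2/(2w^2)}$), a closed second-order relation of the form
\[
D''=\tau^2D-\Bigl(\frac{2d}{w^3}+\frac{4\tau}{w}\Bigr)\psi .
\]
Consequently $r$ satisfies a Riccati equation,
\[
r'=\tau^2-r^2-c(d)\,\frac{\psi}{D},\qquad c(d)=\frac{2d}{w^3}+\frac{4\tau}{w}>0 .
\]
It therefore suffices to show $r^2+c\psi/D\ge\tau^2$. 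When $|r|\ge\tau$ this is immediate. In the remaining case $|r|<\tau$, I will bound $\psi/D$ from below using the Mills-ratio representation
\[
e^{\tau d}P_-=\psi\,m\!\left(\frac dw+w\tau\right),\qquad m(x)=\frac{\Phi(-x)}{\phi(x)}<\frac1x .
\]
Together with $D\le e^{-\tau d}P_+$, this reduces the required inequality to a scalar Gaussian inequality in the variables $(d/w,\,w\tau)$. I expect this step to be the main obstacle. If the Riccati route does not close directly, my fallback is to show that $d\mapsto D(d)\,e^{d^2/(2w^2)}$ is a Laplace-type transform of a Pólya-frequency kernel, which would give log-concavity after multiplication by the Gaussian factor.

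For item 1, I will establish the strict bound $r>-\tau$. Indeed,
\[
D'+\tau D=\frac2w\psi-2\tau e^{\tau d}P_-=2\psi\Bigl(\frac1w-\tau\, m\!\left(\frac dw+w\tau\right)\Bigr)>0,
\]
because $m(x)<1/x\le 1/(w\tau)$. Hence if $b_d\le0$, then $\tau z\ge 1/d+r>1/d-\tau$. This is exactly $\tau d(1+\tanh(\tau d))>1$.

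For item 2, I will substitute $d=\sqrt2\,w$ into $(\log b)_d=1/d+r-\tau z$. Using $\tau d\le1$, I will bound $\tau z\le\tau\tanh(1)$ and $\tau\le 1/d$. I will then lower-bound $r$ through the explicit formula for $D'$, using that $\Phi(\sqrt2-w\tau)$ stays bounded away from $0$ in this regime and again bounding $P_-$ by the Mills inequality. This reduces the claim to verifying a one-variable inequality in $t=\tau d\in(0,1]$, which I will check monotonically or at the endpoints.
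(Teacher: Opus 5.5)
Your reduction is correct and matches the paper's. Write $u=w\tau$, $v=d/w$ and $h(v)=\Phi(v-u)-e^{2uv}\Phi(-v-u)$. Then $D=e^{-\tau d}h(d/w)$, so $b=dD/(2\cosh(\tau d))=d\,h(d/w)/(1+e^{2\tau d})$, and strict log-concavity of $b$ follows once $D$ (equivalently $h$) is log-concave. That log-concavity is the real content of the lemma, and your proposal does not establish it. First, your second-order identity is wrong. The $\psi/w$ terms cancel in $\frac{d}{dd}\bigl(e^{-\tau d}P_++e^{\tau d}P_-\bigr)=-\tau D$, so the correct identity is $D''=\tau^2D-\frac{2d}{w^3}\psi$. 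Your extra term $-\frac{4\tau}{w}\psi$ would force $D''(0)=-4\tau\psi(0)/w\neq0$, contradicting the oddness of $D$. Because your $c(d)$ is too large, the inequality $r^2+c\psi/D\ge\tau^2$ you target (with your $r=D'/D$) is strictly weaker than what $r'\le0$ requires, so proving it would not suffice. Second, even with the correct coefficient $2d/w^3$, you leave the case $|r|<\tau$ open, and it is not a marginal case: since $r>-\tau$ and $r\to-\tau$ as $d\to\infty$, it contains all large $d$. Bounding $\psi/D$ below via $D\le e^{-\tau d}P_+$ leaves an inequality that must still be proved jointly with $r$, itself a ratio of Gaussian integrals, and you give no argument that it holds. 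The P\'olya-frequency fallback does not close the gap either. $D\,e^{d^2/(2w^2)}$ is proportional to $\int_0^\infty\phi(u+y)\sinh(vy)\,dy$, a Laplace transform of a signed kernel. Laplace transforms of positive measures are log-convex, so multiplying by a Gaussian yields nothing without a further estimate.

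The missing idea is the paper's representation $h(v)=\int_0^\infty\phi(v-u-y)\,(1-e^{-2vy})\,dy$, whose integrand is jointly log-concave on $(0,\infty)^2$. The Gaussian factor is log-concave. For $f(s)=\log(1-e^{-s})$, the Hessian of $f(2vy)$ has negative diagonal and determinant $4f'(s)^2\{2s/(1-e^{-s})-1\}>0$. Pr\'ekopa's theorem, which the paper proves by a monotone-transport argument, then makes $h$ log-concave. Your item 1 is correct and is the paper's argument: since $D'+\tau D=e^{-\tau d}h'(d/w)/w$, your bound $r>-\tau$ is the paper's $h'>0$. It is also unaffected by the faulty $D''$. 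Item 2 is only a plan. After bounding $\tau z\le\tanh(1)/d$, you still need $dD'/D>-(1-\tanh 1)$ uniformly in $t=\tau d\in(0,1]$ at $d/w=\sqrt2$. This does hold, but numerically $dD'/D\approx-0.01$ at $t=1$, where the numerator $-t(e^{-t}P_++e^{t}P_-)+2\sqrt2\,\psi$ is already negative, so your bounds must survive a sign change. The paper avoids a uniform estimate: it shows $b_{\tau d}<0$ for $\tau d\le1$, which reduces the check to the single point $(\tau d,d/w)=(1,\sqrt2)$.
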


\begin{proof}
Put $u=w\tau$, $v=d/w$, $r=\tau d=uv$, and
\[
h(v)=\Phi(v-u)-e^{2uv}\Phi(-v-u).
\]
Changing variables in the two Gaussian integrals gives
\begin{equation}
h(v)=\int_0^\infty\phi(v-u-y)(1-e^{-2vy})\,dy>0.
\label{eq:tau_cost_h}
\end{equation}
The integrand is jointly log-concave in $(v,y)\in(0,\infty)^2$,
and remains log-concave when extended by zero outside this quadrant.
To verify the only non-Gaussian factor, let $f(s)=\log(1-e^{-s})$.
The diagonal entries of the Hessian of $f(2vy)$ are negative, and its
determinant is
\[
4\{f'(s)\}^2\left(\frac{2s}{1-e^{-s}}-1\right)>0,
\qquad s=2vy>0.
\]
For completeness, log-concavity is preserved by the integration in
\eqref{eq:tau_cost_h} for the following reason. Denote the integrand by
$g(v,y)$, fix $v_0,v_1>0$ and $0<\theta<1$, and set
$f_i(y)=g(v_i,y)/h(v_i)$ for $i=0,1$. These are positive smooth
probability densities on $(0,\infty)$. Let $T$ be their increasing
quantile transport, defined by
\[
\int_0^{T(y)} f_1(s)\,ds=\int_0^y f_0(s)\,ds.
\]
Then $T'(y)=f_0(y)/f_1(T(y))>0$, and
$Y_\theta(y)=(1-\theta)y+\theta T(y)$ is an increasing bijection of
$(0,\infty)$ onto itself. Joint log-concavity of $g$ and the change
of variables $s=Y_\theta(y)$ give
\begin{align*}
h((1-\theta)v_0+\theta v_1)
&\geq h(v_0)^{1-\theta}h(v_1)^\theta
\int_0^\infty f_0(y)
\frac{(1-\theta)+\theta T'(y)}{T'(y)^\theta}\,dy\\
&\geq h(v_0)^{1-\theta}h(v_1)^\theta.
\end{align*}
The last inequality uses
$(1-\theta)+\theta s\geq s^\theta$ for $s>0$.
Thus $h$ is log-concave.
Since
\[
\log b(\tau,d)=\log d+\log h(d/w)-\log(1+e^{2\tau d}),
\]
the strictly concave term $\log d$ makes $b$ strictly log-concave.

Moreover,
\[
h'(v)=2\phi(v-u)-2u e^{2uv}\Phi(-v-u)>0,
\]
where strict positivity follows from
$e^{2uv}\phi(v+u)=\phi(v-u)$ and
$\Phi(-v-u)<\phi(v+u)/(v+u)$.
The function $d/(1+e^{2\tau d})$ has a nonnegative derivative whenever
$r(1+\tanh r)\leq1$. Multiplying by the strictly increasing positive
factor $h(d/w)$ proves the first additional assertion.

For the second, differentiating \eqref{eq:tau_cost_bt} in $d$ gives, for $r\leq1$,
\begin{equation}
b_{\tau d}
=-d(1-\tanh^2 r)\left[
(1-r\tanh r)(P_++P_-)
+\frac{v}{2}\{\phi(v-u)-\phi(v+u)\}
\right]<0
.
\label{eq:tau_cost_btd}
\end{equation}
Hold $d,w$ fixed and increase $\tau$, so $v$ is fixed and $r=\tau d$ increases.
Equation \eqref{eq:tau_cost_btd} implies that $b_d$ decreases in $r$ on
$0<r\leq1$. Differentiating \eqref{eq:tau_cost_b} shows that $b_d$ depends only
on $(r,v)$, so it suffices to check $v=\sqrt2$ and $r=1$.
At this point put $z_1=\tanh1$ and $\delta=1/\sqrt2$. Direct
differentiation of \eqref{eq:tau_cost_b} gives
\[
(1+e^2)b_d
=-z_1\Phi(\delta)-e^2(2-z_1)\Phi(-3\delta)
+2\sqrt2\phi(\delta).
\]
By the Mills inequality and $e^2\phi(3\delta)=\phi(\delta)$, the
right-hand side is strictly larger than
\[
\frac{\sqrt2(4+z_1)}{3}\phi(\delta)-z_1\Phi(\delta).
\]
This expression decreases with $z_1$. Using $z_1<4/5$ and
$\Phi(\delta)<1/2+1/(2\sqrt\pi)$, it is strictly larger than
\[
\frac{8e^{-1/4}-2\sqrt\pi-2}{5\sqrt\pi}>0.
\]
The final inequality follows from $e^{-1/4}>3/4$ and $\pi<4$.
This proves the second assertion.
\end{proof}

\begin{lem}[Key sign restriction]\label{lem:tau_cost_key}
Fix $\tau,w>0$. If $d>0$ satisfies
\[
L_d(\tau,d)\geq0,\qquad a^2d^2>2,\qquad b_d(\tau,d)\leq0,
\]
then $G(\tau,d)<0$.
\end{lem}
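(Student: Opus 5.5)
The plan is to reduce the claim to an explicit inequality between Gaussian quantities in a small region of $(\tau,d)$. The three hypotheses are used to cut that region down.

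\textbf{Reformulation.} By \eqref{eq:tau_cost_Gidentity} and \eqref{eq:tau_cost_pderiv}, $L_\tau=d\,p_\tau=-dJ$, so
\[
G=L_d+zL_\tau=L_d-z\,dJ .
\]
Here $z=\tanh(\tau d)>0$, and $J>0$ for $d>0$ by \eqref{eq:tau_cost_Ltintegral}. The lemma therefore says that the positive quantity $z\,dJ$ strictly dominates $L_d$. Since $L_d\ge 0$ is assumed, this is not automatic and must come from the other two hypotheses.

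\textbf{Step 1: localize $(\tau,d)$.} I would use the decomposition $p(d,\tau)=\Phi(-\tau-d)+p_{\mathrm{cont}}(\tau,d)$ from the proof of Lemma~\ref{lem:tau_cost_curvature}. This gives
\[
L_d=\bigl[\Phi(-\tau-d)-d\phi(\tau+d)\bigr]+\int_0^\infty h(y,\tau)\phi(y+ad)\bigl(1-a^2d^2-ady\bigr)\,dy .
\]
Under $a^2d^2>2$ the integrand satisfies $1-a^2d^2-ady<-1-ady<0$, so the continuation term is strictly negative. Hence $L_d\ge0$ forces the stopping bracket to be strictly positive. By the Mills argument in Lemma~\ref{lem:tau_cost_curvature}(2), this in turn forces $d(\tau+d)<1$, that is, $r:=\tau d<1-d^2$.

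Next, $b_d\le0$ gives $r(1+\tanh r)>1$ by Lemma~\ref{lem:tau_cost_bshape}(1), so $r>r_0\approx0.64$. Combining the two yields $d^2<1-r_0$, hence $d<0.6$, and $r\in(r_0,1)$. In this region $z=\tanh r\ge\tanh r_0$ is bounded away from zero.

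\textbf{Step 2: compare the terms.} It then suffices to show
\[
\Phi(-\tau-d)-d\phi(\tau+d)-\int_0^\infty h\,\phi(y+ad)\bigl(a^2d^2+ady-1\bigr)\,dy<z\,d\bigl(AP_+-BP_-\bigr).
\]
For the left side I would use the sharper Mills bound $\Phi(-x)-\tfrac{1}{x}\phi(x)\le -\phi(x)/x^3+\dots$, or more simply keep the stopping bracket and show that it alone is dominated. For the right side I would rewrite $J$ using $e^{2\tau d}A=B$ in the form of \eqref{eq:tau_cost_b}:
\[
J=k\,h(d/w)\big/(1+e^{2r}),
\]
with $h$ from \eqref{eq:tau_cost_h}. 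Then $z\,dJ=d\,B\,(1-e^{-2r})\,h(v)/(1+e^{2r})\cdot(\dots)$ becomes a function of $(r,v,u)$.

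The condition $a^2d^2>2$ reads $d^2(1+w^{-2})>2$. Since $d<0.6$, this forces $v=d/w$ to be large (about $v^2>2-d^2\ge1.64$), so $P_+=\Phi(v-u)$ is not small relative to the stopping term $\approx\phi(\tau+d)$. The remaining inequality is a two- or three-variable Gaussian inequality on a compact region.

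\textbf{Main obstacle.} I expect Step 2 to be the difficult part. The inequality is genuinely numerical, and the constraint $b_d\le0$ has to be used more quantitatively than through Lemma~\ref{lem:tau_cost_bshape}(1) alone. I would first try to use $b_d\le0$ directly in its expanded form, $J+dJ_d\le dJ(\tau z-d)$, together with $J_\tau=-\tau J-dE$, to express $E$ and $J$ in terms of each other and eliminate $P_-$. This should leave a comparison between $\phi(\tau+d)$ and $\phi(v-u)$ that monotonicity in $r$, as in \eqref{eq:tau_cost_btd}, reduces to checking the boundary point $r=1-d^2$. If that fails, the fallback is to establish the inequality at the boundary $v=\sqrt2$ with an explicit numeric bound, as in the proof of Lemma~\ref{lem:tau_cost_bshape}(2), and to propagate it inward by monotonicity in $v$.
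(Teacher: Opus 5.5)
Your reformulation $G=L_d-z\,dJ$ and your Step~1 are correct, and they match the opening of the paper's argument. There, $L_d\ge0$ with $ad>1$ gives $r+d^2<1$ by Lemma~\ref{lem:tau_cost_curvature}(2), and $b_d\le0$ gives $r(1+\tanh r)>1$ by Lemma~\ref{lem:tau_cost_bshape}(1).

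The gap is Step~2, which is the entire content of the lemma. You leave it as a list of possible strategies, and the two concrete devices you commit to would not work as stated.

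\emph{The feasible set is not compact.} Nothing in the hypotheses bounds $v=d/w$ from above. At fixed $(\tau,d)$ with $r(1+\tanh r)>1$ and $\Phi(-\tau-d)>d\phi(\tau+d)$ (e.g.\ $d=0.1$, $\tau=7$), all three hypotheses hold for every sufficiently small $w$. This is because $L_d\to\Phi(-\tau-d)-d\phi(\tau+d)$ and $b_d\to\{1-r(1+\tanh r)\}/(1+e^{2r})$. Points with $d\to0$, $\tau\to\infty$ and $r\in(r_0,1)$ are also admissible. A numerical check on a compact region therefore cannot finish the proof.

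\emph{The check at $v=\sqrt2$ runs the wrong way.} On the relevant region $G$ is increasing in $v$, as shown below. A bound at the lower end $v=\sqrt2$ therefore gives no upper bound for larger $v$; the binding end is $v\to\infty$. Relatedly, your bound $v^2>2-d^2$ is too weak for the monotonicity that is actually available.

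The missing idea is comparative statics in the follow-up precision. Fix $(\tau,d)$ and regard $G$ as a function of $a=\sqrt{1+w^{-2}}$, so that increasing $a$ means increasing $v$. Differentiating the Gaussian expressions gives
\[
G_a=\frac{dK}{a}\Bigl[a^2d^2-2-\frac{d}{\tau}\,z\,\frac{2\Gamma\phi(\Gamma)}{2\Phi(\Gamma)-1}\Bigr].
\]
Since $z<\tau d$ and the last fraction lies in $(0,1)$, the bracket exceeds $(a^2-1)d^2-2=v^2-2$. Hence, once $v^2>2$, $G$ is strictly increasing in $a$ from its current value onward. It follows that
\[
G(\tau,d)<\lim_{a\to\infty}G=\Phi(-\tau-d)-d\phi(\tau+d)(1+z).
\]
This is the perfect-follow-up limit, in which $p\to\Phi(-\tau-d)$, $E\to A$, $J\to A$ and $K\to0$. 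The limit is negative by the Mills inequality, because $d(\tau+d)(1+z)\ge r(1+z)>1$.

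The more quantitative use of $b_d\le0$ that you anticipated enters only to secure $v^2>2$. Suppose $v\le\sqrt2$. Then $v^2>2-d^2>1+r>2r^2$, so $\bar d=\sqrt2\,w\ge d$ satisfies $\tau\bar d=\sqrt2\,r/v<1$. Lemma~\ref{lem:tau_cost_bshape}(2) then gives $b_d(\tau,\bar d)>0$, and strict log-concavity of $b$ transfers this to $b_d(\tau,d)>0$, a contradiction. This reduction to the $w\to0$ endpoint removes any need for the multivariable numerical inequality you were preparing to verify.
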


\begin{proof}
Write $r=\tau d$ and $v=d/w$; thus $z=\tanh r$.
Since $ad>\sqrt2>1$, Lemma~\ref{lem:tau_cost_curvature} implies
\begin{equation}
r+d^2<1.\label{eq:tau_cost_rbound}
\end{equation}
We first show that $v^2>2$. If instead $v^2\leq2$, then
\[
v^2>2-d^2>1+r>2r^2,
\]
where the last inequality uses $0<r<1$.
Let $\bar d=\sqrt2w\geq d$. It follows that
\[
\tau\bar d=\frac{\sqrt2r}{v}<1.
\]
Lemma~\ref{lem:tau_cost_bshape} gives $b_d(\tau,\bar d)>0$.
Strict log-concavity of $b$ then gives $b_d(\tau,d)>0$ as well,
contradicting the hypothesis. Thus
\begin{equation}
v^2=(a^2-1)d^2>2.\label{eq:tau_cost_vbound}
\end{equation}

Now hold $(\tau,d)$ fixed and regard $G$ as a function of the precision
parameter $a>1$, so $w=(a^2-1)^{-1/2}$ and
$\Gamma=\tau a/\sqrt{a^2-1}$.
Differentiating the Gaussian expressions gives
\begin{align*}
\partial_a p(d,\tau)&=-dK/a,\\
K_a&=(1-a^2d^2)K/a-\frac{2a\tau C}{(a^2-1)^{3/2}},\\
S_a&=2aC\left(\frac{dz}{\sqrt{a^2-1}}
+\frac{\tau}{(a^2-1)^{3/2}}\right).
\end{align*}
For the first identity, differentiating the Gaussian integral for $p(d,\tau)$ in $w$ cancels
the odd part of the integrand and gives
\[
\partial_w p(d,\tau)=\frac{d}{w^2}\int_{-\tau}^\tau
\phi(wx+d/w)\phi(x-d)\,dx
=\frac{d\,\phi(ad)M}{w^2\sqrt{1+w^2}}.
\]
Combining this with $da/dw=-1/(aw^3)$ yields $\partial_a p(d,\tau)=-dK/a$.
Consequently,
\begin{equation}
G_a=\frac{dK}{a}\left[
a^2d^2-2-\frac{d}{\tau}z\,
\frac{2\Gamma\phi(\Gamma)}{2\Phi(\Gamma)-1}
\right].\label{eq:tau_cost_Ga}
\end{equation}
Since
\[
0<\frac{2\Gamma\phi(\Gamma)}{2\Phi(\Gamma)-1}<1,
\qquad z<\tau d,
\]
the bracket in \eqref{eq:tau_cost_Ga} is strictly larger than
$(a^2-1)d^2-2$. By \eqref{eq:tau_cost_vbound}, $G$ is strictly increasing as
$a$ increases from its current value to infinity.
Its limit is
\[
\lim_{a\to\infty}G(\tau,d)
=\Phi(-\tau-d)-d\phi(\tau+d)(1+z).
\]
The first assertion of Lemma~\ref{lem:tau_cost_bshape} gives $\tau d(1+z)>1$.
Thus the Mills inequality implies
\[
\Phi(-\tau-d)-d\phi(\tau+d)(1+z)
<\phi(\tau+d)\left(\frac1{\tau+d}-d(1+z)\right)<0.
\]
The current value of $G$ is smaller still, proving the claim.
\end{proof}

For $\tau,\gamma>0$, let
\[
\mathcal M(\tau,\gamma)=\argmax_{0\leq d\leq u_0}R(\tau,d,\gamma).
\]
This is a nonempty compact subset of $(0,u_0)$ by
Lemma~\ref{lem:tau_cost_curvature}. In particular, \eqref{eq:tau_cost_Gidentity} implies
\begin{equation}
G(\tau,d)=zR_\tau(\tau,d,\gamma)=kz\{\gamma-b(\tau,d)\},
\qquad d\in\mathcal M(\tau,\gamma).
\label{eq:tau_cost_Gactive}
\end{equation}

\begin{lem}[Zeros and ties]\label{lem:tau_cost_zeros}
Fix $\tau,\gamma>0$.
\begin{enumerate}
\item If $d\in\mathcal M(\tau,\gamma)$ and $G(\tau,d)=0$, then $G_d(\tau,d)<0$.
\item There do not exist $d_1,d_2\in\mathcal M(\tau,\gamma)$ with
$d_1<d_2$ and $G(\tau,d_2)\geq0\geq G(\tau,d_1)$.
\end{enumerate}
\end{lem}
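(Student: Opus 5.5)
We will work throughout at fixed $\tau,\gamma>0$ with $d\in\mathcal M(\tau,\gamma)\subset(0,u_0)$, so that Lemma~\ref{lem:tau_cost_curvature} gives $R_d=0$, $R_{dd}\le 0$ and $L_d>0$. We will use the active identity \eqref{eq:tau_cost_Gactive}, $G=kz(\gamma-b)$, with $k,z>0$. The basic tool is a case split on whether $a^2d^2\le 2$, in which case the curvature bounds of Lemma~\ref{lem:tau_cost_curvature} apply, or $a^2d^2>2$, in which case Lemma~\ref{lem:tau_cost_key} applies.

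\emph{Part 1.} Suppose $G(\tau,d)=0$ at a maximizer, so that $b(\tau,d)=\gamma$ and $R_\tau=k(\gamma-b)=0$.

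If $a^2d^2\le 2$, differentiate the representation $G=L_d+zL_\tau$ from \eqref{eq:tau_cost_Gidentity}:
\[
G_d=L_{dd}+z_dL_\tau+zL_{\tau d}.
\]
Lemma~\ref{lem:tau_cost_curvature} gives $L_{dd}\le0$ and $L_{\tau d}\le 0$. Also $z_d=\tau(1-z^2)>0$, and $L_\tau=d\,p_\tau=-dJ<0$, because $J>0$ by \eqref{eq:tau_cost_Ltintegral}. Hence $G_d<0$ strictly.

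If $a^2d^2>2$, we use the other representation $G=R_d+zR_\tau$. Differentiating and using $R_d=0$ and $R_\tau=0$ gives
\[
G_d=R_{dd}+zR_{\tau d}=R_{dd}-zk\,b_d,
\]
since $R_{\tau d}=k_d(\gamma-b)-kb_d=-kb_d$. Now if $b_d\le0$, Lemma~\ref{lem:tau_cost_key} (applicable because $L_d>0$) would give $G<0$, contradicting $G=0$. Therefore $b_d>0$, and $G_d\le -zkb_d<0$.

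\emph{Part 2.} Suppose that $d_1<d_2$ both lie in $\mathcal M(\tau,\gamma)$ with $G(\tau,d_2)\ge0\ge G(\tau,d_1)$. By \eqref{eq:tau_cost_Gactive} this means $b(\tau,d_2)\le\gamma\le b(\tau,d_1)$.

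Since $\log b$ is strictly concave by Lemma~\ref{lem:tau_cost_bshape}, its derivative at $d_2$ is strictly below the secant slope $\{\log b(d_2)-\log b(d_1)\}/(d_2-d_1)\le0$. Hence $b_d(\tau,d_2)<0$.

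At $d_2$ we also have $L_d>0$. If $a^2d_2^2>2$, Lemma~\ref{lem:tau_cost_key} would give $G(\tau,d_2)<0$, a contradiction. So $a^2d^2\le2$ holds for every $d\in[d_1,d_2]$.

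On this interval $L_{dd}\le0$. Moreover $q_{dd}=-k(\tau-dz)<0$, because $z=\tanh(\tau d)<\tau d$ and $d<u_0<1$ give $\tau-dz>\tau(1-d^2)>0$. Since $\gamma>0$, it follows that $R(\tau,\cdot,\gamma)=L+\gamma q$ is strictly concave on $[d_1,d_2]$. But $R(\tau,d_1,\gamma)=R(\tau,d_2,\gamma)=F(\tau,\gamma)$, so strict concavity makes $R$ at the midpoint exceed $F(\tau,\gamma)$, which is impossible.

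The main obstacle is Part 1 in the regime $a^2d^2>2$. There no curvature control on $L$ is available, and the argument has to rule out $b_d\le0$ through Lemma~\ref{lem:tau_cost_key}. The delicate points are to verify the derivative bookkeeping $R_{\tau d}=-kb_d$ at a zero of $\gamma-b$, and the strict signs $J>0$ and $q_{dd}<0$.
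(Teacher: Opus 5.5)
Your proposal is correct and is essentially the paper's argument. Part 2 is identical to the paper's proof. Part 1 uses the same ingredients: the active identity, Lemma~\ref{lem:tau_cost_key}, and the curvature bounds of Lemma~\ref{lem:tau_cost_curvature}. The only difference is that you split cases on whether $a^2d^2\le 2$ rather than on the sign of $b_d$, which is an equivalent rearrangement. Your strictly negative term $z_dL_\tau$ coincides with the paper's $-\gamma\tau k(1-z^2)$ once $R_\tau=0$ is imposed.
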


\begin{proof}
For the first assertion, \eqref{eq:tau_cost_Gactive} gives $R_\tau=0$ and
$\gamma=b(\tau,d)$. Differentiating \eqref{eq:tau_cost_Gidentity} in $d$ at this
point yields
\begin{equation}
G_d=R_{dd}+zR_{\tau d}=R_{dd}-kzb_d.
\label{eq:tau_cost_Gdzero}
\end{equation}
If $b_d>0$, this is strictly negative since $R_{dd}\leq0$.
If $b_d\leq0$, Lemma~\ref{lem:tau_cost_key} and $G=0$ imply $a^2d^2\leq2$.
Using Lemma~\ref{lem:tau_cost_curvature} and \eqref{eq:tau_cost_qderiv},
\begin{align*}
G_d
&=L_{dd}+zL_{\tau d}+\gamma(q_{dd}(d, \tau)+z  q_{d\tau}(d,\tau))\\
&=L_{dd}+zL_{\tau d}-\gamma \tau k(1-z^2)<0.
\end{align*}
This argument does not require $R_{dd}<0$.

For the second assertion, suppose such $d_1<d_2$ existed.
Equation \eqref{eq:tau_cost_Gactive} would give
\[
b(\tau,d_1)\geq\gamma\geq b(\tau,d_2).
\]
Strict log-concavity of $b$ implies
\[
\frac{b_d(\tau,d_2)}{b(\tau,d_2)}
<\frac{\log b(\tau,d_2)-\log b(\tau,d_1)}{d_2-d_1}\leq0.
\]
Thus $b_d(\tau,d_2)<0$. If $a^2d_2^2>2$, Lemma~\ref{lem:tau_cost_key}
gives $G(\tau,d_2)<0$, a contradiction.
If $a^2d_2^2\leq2$, Lemma~\ref{lem:tau_cost_curvature} gives
$L_{dd}\leq0$ throughout $[0,d_2]$. Also $q_\tau''(d)<0$ there, because
\[
\tau-d\tanh(\tau d)\geq \tau(1-d^2)>0
\qquad(0\leq d\leq d_2<u_0<1).
\]
Hence $R$ is strictly concave on $[0,d_2]$, which precludes two
distinct global maximizers $d_1,d_2$.
\end{proof}

\begin{lem}[Cost single crossing at a fixed threshold]
\label{lem:tau_cost_crossing}
Fix $\tau>0$, and select
\[
m(\gamma)=\min\mathcal M(\tau,\gamma),\qquad \gamma>0.
\]
There do not exist $0<\gamma_1<\gamma_2$ such that
\[
G(\tau,m(\gamma_1))>0>G(\tau,m(\gamma_2)).
\]
\end{lem}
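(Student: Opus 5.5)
Suppose, for contradiction, that $0<\gamma_1<\gamma_2$ satisfy $G(\tau,m(\gamma_1))>0>G(\tau,m(\gamma_2))$. Put $d_i=m(\gamma_i)$. The plan is to combine two facts. The first is a monotone comparative static: maximizers move toward smaller effects as $\gamma$ increases. The second is the fact that $G$ does not depend on $\gamma$, so it is a single fixed function of $d$ on $(0,u_0)$ and crosses zero in a controlled way there. Lemma~\ref{lem:tau_cost_zeros} then gives the contradiction.

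\textbf{Step 1 (monotone maximizers).} Since $q_d=-kz<0$ for $d>0$, the function $R(\tau,d,\gamma)=L(\tau,d)+\gamma q(d,\tau)$ has strictly decreasing differences in $(d,\gamma)$. A standard revealed-preference argument compares $R(\tau,\cdot,\gamma_1)$ and $R(\tau,\cdot,\gamma_2)$ at their respective maximizers. It shows that every element of $\mathcal M(\tau,\gamma_2)$ is at most every element of $\mathcal M(\tau,\gamma_1)$, unless the two are equal. In particular $d_2\le d_1$, and the case $d_2=d_1$ is impossible because $G(\tau,d_1)>0>G(\tau,d_2)$. Hence $d_2<d_1$.

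\textbf{Step 2 (sign pattern of $G$ along the path).} By \eqref{eq:tau_cost_Gactive}, at the active effect we have $G=kz(\gamma-b)$. So $G(\tau,d_1)>0$ means $\gamma_1>b(\tau,d_1)$, and $G(\tau,d_2)<0$ means $\gamma_2<b(\tau,d_2)$. The goal is to move continuously in $\gamma$ from $\gamma_1$ to $\gamma_2$ and track the sign of $G$ at $m(\gamma)$.

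Upper hemicontinuity of $\mathcal M(\tau,\cdot)$ follows from Berge's theorem, using continuity of $R$ in $(d,\gamma)$. Let $\bar\gamma=\inf\{\gamma\in[\gamma_1,\gamma_2]: G(\tau,m(\gamma))<0\}$ and take limits of maximizers from both sides. This should produce either of two outcomes:
\begin{enumerate}
\item a maximizer $d^\ast\in\mathcal M(\tau,\bar\gamma)$ with $G(\tau,d^\ast)=0$, approached from the side where $G\ge0$ at larger $d$;
\item two maximizers $d'<d''$ in $\mathcal M(\tau,\bar\gamma)$ with $G(\tau,d'')\ge0\ge G(\tau,d')$.
\end{enumerate}
The second outcome is a tie, and part 2 of Lemma~\ref{lem:tau_cost_zeros} excludes it directly. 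In the first outcome, part 1 of Lemma~\ref{lem:tau_cost_zeros} gives $G_d(\tau,d^\ast)<0$. So $G$ is positive just to the left of $d^\ast$ and negative just to the right.

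Step 1 says maximizers move left as $\gamma$ rises. Therefore, for $\gamma$ slightly above $\bar\gamma$, the selections $m(\gamma)$ that are near $d^\ast$ lie at or to the left of $d^\ast$, where $G\ge0$. This contradicts the definition of $\bar\gamma$ as the point where $G$ first turns negative. The only alternative is a jump in $m$, and a jump forces a tie at $\bar\gamma$, which is again excluded by part 2.

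\textbf{Main obstacle.} The delicate part is Step 2's limiting argument. $\mathcal M$ may be set-valued, and $m(\gamma)=\min\mathcal M$ is only lower semicontinuous in general. The plan is to handle this by splitting into two cases: points where $\mathcal M(\tau,\bar\gamma)$ is a singleton, where continuity of $m$ holds, and points where it is not, where ties apply. At points of discontinuity, the left and right limits of $m$ both belong to $\mathcal M(\tau,\bar\gamma)$ by closedness of the argmax correspondence. Careful bookkeeping of the signs of $G$ at these limit points then reduces every case to one of the two parts of Lemma~\ref{lem:tau_cost_zeros}.
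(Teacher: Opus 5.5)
Your proposal is correct and follows essentially the same route as the paper. Both arguments use the revealed-preference inequality $(\gamma_2-\gamma_1)\{q(d_2,\tau)-q(d_1,\tau)\}\geq0$ to show maximizers move left, then take a first-crossing cost $\bar\gamma$ and identify $m(\bar\gamma-)=\max\mathcal M(\tau,\bar\gamma)$ and $m(\bar\gamma)=\min\mathcal M(\tau,\bar\gamma)$ (so $m$ is right-continuous, not merely lower semicontinuous), and finally apply the two parts of Lemma~\ref{lem:tau_cost_zeros} to the tie and singleton cases; the only difference is cosmetic, since you define $\bar\gamma$ with $g<0$ and get the singleton contradiction from costs just above $\bar\gamma$, while the paper uses $g\leq0$ and gets it from costs just below, and both versions work.
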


\begin{proof}
Every maximizer at a higher cost is weakly smaller than every maximizer
at a lower cost. Indeed, the optimality inequalities for
$d_i\in\mathcal M(\tau,\gamma_i)$, where $\gamma_2>\gamma_1$, imply
\[
(\gamma_2-\gamma_1)\{q(d_2, \tau)-q(d_1, \tau)\}\geq0.
\]
Since $q(d,\tau)$ is strictly decreasing in $d>0$, $d_2\leq d_1$.
Continuity of $R$ and compactness of the parameter interval then imply
\begin{equation}
m(\gamma+)=\min\mathcal M(\tau,\gamma)=m(\gamma),
\qquad
m(\gamma-)=\max\mathcal M(\tau,\gamma).
\label{eq:tau_cost_limits}
\end{equation}
The left-limit statement follows in the same way.

Suppose the claimed single-crossing property failed, and set
$g(\gamma)=G(\tau,m(\gamma))$.
This function is right-continuous and has left limits by
\eqref{eq:tau_cost_limits}. Let
\[
\bar\gamma=\inf\{\gamma\in[\gamma_1,\gamma_2]:g(\gamma)\leq0\}.
\]
Then $\bar\gamma>\gamma_1$, $g(\bar\gamma)\leq0$, and
$g(\gamma)>0$ for $\gamma_1\leq\gamma<\bar\gamma$.
Write
\[
d_-=m(\bar\gamma-)=\max\mathcal M(\tau,\bar\gamma),
\qquad
d_+=m(\bar\gamma)=\min\mathcal M(\tau,\bar\gamma).
\]
We have
\[
d_-\geq d_+,\qquad G(\tau,d_-)\geq0\geq G(\tau,d_+).
\]
If $d_->d_+$, this contradicts the second assertion of
Lemma~\ref{lem:tau_cost_zeros}.
If $d_-=d_+=\bar d$, then $G(\tau,\bar d)=0$, and the first assertion
gives $G_d(\tau,\bar d)<0$.
For $\gamma<\bar\gamma$ sufficiently close to $\bar\gamma$,
$m(\gamma)\geq\bar d$ and $m(\gamma)\to\bar d$.
Local strict decrease of $G$ in $d$ therefore gives
$g(\gamma)\leq G(\tau,\bar d)=0$, again a contradiction.
\end{proof}

We now complete the proof of Part (B). Existence, uniqueness,
continuity, and all endpoint statements follow from
Lemma~\ref{lem:tau_cost_unique}. 
We now prove cost monotonicity. Suppose that
$0<\gamma_1<\gamma_2$ but
$\tau_1:=\tau^\star(w,\gamma_1)
 <\tau_2:=\tau^\star(w,\gamma_2)$.
Choose $\tau\in(\tau_1,\tau_2)$ and let
$d_i=\min\mathcal M(\tau,\gamma_i)$.
For each $i=1,2$, hold $d_i$ and $\gamma_i$ fixed.
By Lemma~\ref{lem:tau_cost_unique}, the function
\[
x\longmapsto
R\left(\Phi^{-1}\left(\frac{1+x}{2}\right),d_i,\gamma_i\right)
\]
is convex. Its derivative at $x=2\Phi(\tau)-1$ equals
$R_\tau(\tau,d_i,\gamma_i)/(2\phi(\tau))$.
The first-order inequality for a differentiable convex function
therefore gives
\[
R(\tau_i,d_i,\gamma_i)
\geq R(\tau,d_i,\gamma_i)
+\frac{R_\tau(\tau,d_i,\gamma_i)}{2\phi(\tau)}(x_i-x).
\]
The preceding strict comparison and $x_1<x<x_2$ therefore imply
$R_\tau(\tau,d_1,\gamma_1)>0$ and
$R_\tau(\tau,d_2,\gamma_2)<0$.
Equation \eqref{eq:tau_cost_Gactive} gives
$G(\tau,d_1)>0>G(\tau,d_2)$, contradicting
Lemma~\ref{lem:tau_cost_crossing}.
Thus $\tau^\star(w,\gamma)$ is nonincreasing for $\gamma>0$.
The comparison with zero cost follows from
$\tau^\star(w,0)=\infty$.
Together with Lemma~\ref{lem:tau_cost_unique},
this proves Part (B).

\subsection{Proof of Proposition \ref{prop:c_value_properties}}
\label{proof:prop:c_value_properties}

Let
$ 
t=\frac{|X_1|}{s_1}.
$ 
By the definitions of the $A$-value and $\Gamma_w$,
$ 
A_{\mathrm{value}}(X_1;s_1,w)
=
s_1\Gamma_w(t).
$ 
Moreover, since
$ 
p(t)=2\{1-\Phi(t)\},
$ 
and $t\geq0$, we have
$ 
t
=
\Phi^{-1}\left(1-\frac{p(t)}{2}\right).
$ 
Substitution gives
\begin{equation} \label{eqn:characterization}
A_{\mathrm{value}}(X_1;s_1,w)
=
s_1\Gamma_w\left(
\Phi^{-1}\left(
1-\frac{p(|X_1|/s_1)}{2}
\right)
\right).
\end{equation} 

To establish monotonicity, consider $0\leq t_1\leq t_2$. Then
\[
\left\{
\gamma\geq0:
t_1\geq\tau^\star(w,\gamma)
\right\}
\subseteq
\left\{
\gamma\geq0:
t_2\geq\tau^\star(w,\gamma)
\right\}.
\]
Taking infima over the two sets gives
$ 
\Gamma_w(t_2)\leq\Gamma_w(t_1).
$ 
Hence, for fixed $(s_1,w)$, the $A$-value is weakly decreasing in
$|X_1|/s_1$. Because
$ 
p'(t)=-2\phi(t)<0,
$ 
it is equivalently weakly increasing in the two-sided $p$-value.

Next, holding the $p$-value and $w$ fixed, the $A$-value is proportional
to $s_1$ and is therefore weakly increasing in $s_1^2$. The increase is
strict whenever the $A$-value is positive.

The third property follows directly from Equation \eqref{eqn:characterization}.

We conclude by proving the fourth property.
Fix $w>0$, a finite $t\geq0$, and $\gamma\geq0$.
Recall that
\[
\Gamma_w(t)
=
\inf\{\eta\geq0:t\geq\tau^\star(w,\eta)\},
\]
with the convention $\inf\varnothing=\infty$.

If $t\geq\tau^\star(w,\gamma)$, then $\gamma$ belongs to the
set defining the infimum, and therefore
$\Gamma_w(t)\leq\gamma$.

Conversely, suppose that $t<\tau^\star(w,\gamma)$.
By Proposition \ref{cor:1}(B), there exists $\varepsilon>0$
such that
$ 
t<\tau^\star(w,\gamma+\varepsilon).
$ 
For $\gamma>0$, this follows from continuity at $\gamma$.
For $\gamma=0$, it follows from
$\tau^\star(w,\eta)\to\infty$ as $\eta\downarrow0$.
Monotonicity then implies that, for every
$0\leq\eta\leq\gamma+\varepsilon$,
$ 
\tau^\star(w,\eta)
\geq\tau^\star(w,\gamma+\varepsilon)>t.
$ 
Thus no such $\eta$ belongs to the set defining $\Gamma_w(t)$,
and
$ 
\Gamma_w(t)\geq\gamma+\varepsilon>\gamma.
$ 
Combining the two implications gives
$ 
t<\tau^\star(w,\gamma)$  if and only if $\gamma<\Gamma_w(t).
$ 
Taking $t=|X_1|/s_1$ and $\gamma=c/s_1$, and using
$A_{\mathrm{value}}(X_1;s_1,w)=s_1\Gamma_w(t)$,
proves Part (4).

\subsection{Proof of Theorem \ref{prop:top_q_c_value}}
\label{proof:prop:top_q_c_value}

We break the proof into several steps.

\paragraph{A one-policy bound}
For each policy $j$, let
\[
r_j^\star(c)
=
\inf_{(\delta_{1j},\delta_{2j})}
\sup_{|\Delta_j|\leq\bar\Delta_j}
\mathrm{Reg}_1
(\delta_{1j},\delta_{2j};\Delta_j,c)
\]
denote the one-policy minimax-regret value when the continuation cost is
$c$. When $c=0$, Theorem \ref{thm:dead_zone} implies that it is optimal
to collect $X_{2j}$ after every realization of $X_{1j}$. Using this rule
when the continuation cost is $c$ gives
\begin{equation}
\label{eqn:a1}
r_j^\star(c)
\leq
r_j^\star(0)+c.
\end{equation}

We next establish the implication of Equation \eqref{eqn:a1} for the
$A$-value. Fix $\lambda\geq0$ and consider the one-policy rule that
collects $X_{2j}$ when $A_j\geq\lambda$, makes an immediate decision
based on the sign of $X_{1j}$ otherwise, and uses the terminal rule in
Theorem \ref{thm:dead_zone} after continuation. Denote this rule by
\[
\delta_{1j}(X_{1j};\lambda)
=
\begin{cases}
\texttt{NA},
& \text{if } A_j\geq\lambda,\\
1,
& \text{if } A_j<\lambda \text{ and } X_{1j}\geq0,\\
0,
& \text{otherwise}.
\end{cases}, \qquad 
\delta_{2j}^\star(X_{1j},X_{2j})
=
1\left\{
\frac{s_{2j}^2}{s_{1j}^2}X_{1j}+X_{2j}\geq0
\right\}.
\]
Write $w_j=s_{2j}/s_{1j}$. By the definition of the $A$-value, recall
from Proposition \ref{prop:c_value_properties} that
$ 
A_j
=
s_{1j}\Gamma_{w_j}
\left(
\frac{|X_{1j}|}{s_{1j}}
\right).
$ 
Fix $\lambda>0$ and write
\[
t_j=\frac{|X_{1j}|}{s_{1j}},
\qquad
\bar\tau_j(\lambda)
=
\inf_{0\leq\eta<\lambda/s_{1j}}
\tau^\star(w_j,\eta).
\]
By the definition of $\Gamma_{w_j}$,
$ 
\left\{
\Gamma_{w_j}(t_j)\geq\frac{\lambda}{s_{1j}}
\right\}
=
\bigcap_{0\leq\eta<\lambda/s_{1j}}
\left\{
t_j<\tau^\star(w_j,\eta)
\right\}.
$ 
It follows that
\[
\left\{t_j<\bar\tau_j(\lambda)\right\}
\subseteq
\left\{A_j\geq\lambda\right\}
\subseteq
\left\{t_j\leq\bar\tau_j(\lambda)\right\}.
\]
The two outer events differ only when
$t_j=\bar\tau_j(\lambda)$. Because $t_j$ has a continuous distribution
under every $\Delta_j$,
\[
\mathbb P_{\Delta_j}\left(
\{A_j\geq\lambda\}
\setminus
\left\{
t_j<\bar\tau_j(\lambda)
\right\}
\right) 
=0.
\]
Consequently, for every $\Delta_j$, the rule that continues when
$A_j\geq\lambda$ has the same regret as the threshold rule in
Theorem \ref{thm:dead_zone} with threshold $\bar\tau_j(\lambda)$.

Although $\bar\tau_j(\lambda)$ need not itself be an optimal threshold
at any particular continuation cost, it is a limit of optimal
thresholds associated with costs strictly below $\lambda$.
Taking limits
yields\footnote{
If $\bar\tau_j(\lambda)<\infty$, the definition of the infimum implies
that there exist $0\leq\eta_n<\lambda/s_{1j}$ such that
$\tau_n\equiv\tau^\star(w_j,\eta_n)\rightarrow
\bar\tau_j(\lambda)$. Let $c_n=s_{1j}\eta_n<\lambda$. The rule with
threshold $\tau_n$ is minimax at cost $c_n$. Therefore, for every
$|\Delta_j|\leq\bar\Delta_j$,
\[
\begin{aligned}
\mathrm{Reg}_1
(\delta_{1j}^{\tau_n},\delta_{2j}^\star;\Delta_j,0)
 =
\mathrm{Reg}_1
(\delta_{1j}^{\tau_n},\delta_{2j}^\star;\Delta_j,c_n)
-
c_n\mathbb P_{\Delta_j}(t_j<\tau_n)
 & \leq
r_j^\star(c_n)
-
c_n\mathbb P_{\Delta_j}(t_j<\tau_n)
 \leq
r_j^\star(0)
+
c_n\mathbb P_{\Delta_j}(t_j\geq\tau_n)
 \\ & \leq
r_j^\star(0)
+
\lambda\mathbb P_{\Delta_j}(t_j\geq\tau_n).
\end{aligned}
\]
For fixed $\Delta_j$, the zero-cost regret and stopping probability
are continuous in $\tau$. Taking $n\to\infty$ and using the
almost-sure equivalence between the $A$-value rule and the rule with
threshold $\bar\tau_j(\lambda)$ proves the desired inequality.
If $\bar\tau_j(\lambda)=\infty$, the $A$-value rule always continues,
and its regret is bounded by $r_j^\star(0)$.
}
\begin{equation}
\label{eqn:a2}
\mathrm{Reg}_1
(\delta_{1j}(\,\cdot\,;\lambda),
 \delta_{2j}^{\star};
 \Delta_j,0)
\leq
r_j^\star(0)
+
\lambda\mathbb P_{\Delta_j}(A_j<\lambda).
\end{equation}
When $\lambda=0$, the $A$-value rule always continues, so the same
inequality holds directly.

We will also use the version of Equation \eqref{eqn:a2} for the rule
that continues when $A_j>\lambda$. Applying
Equation \eqref{eqn:a2} with cutoff $\lambda+1/n$ and taking
$n\to\infty$, dominated convergence gives the same bound with
$\mathbb P_{\Delta_j}(A_j\leq\lambda)$ in place of
$\mathbb P_{\Delta_j}(A_j<\lambda)$. Thus, under either convention,
the additional term is $\lambda$ times the probability of making
an immediate decision.

\paragraph{Relaxed problem}
Consider the relaxed problem in which all follow-up signals
$\mathbf X_2=(X_{21},\ldots,X_{2J})$ are observed at zero cost and the
terminal action for each policy may depend on
$(\mathbf X_1,\mathbf X_2)$. By definition, the minimax value of this
problem is $R_J^\star$. We claim that
\begin{equation}
\label{eqn:a3}
R_J^\star
=
\frac{1}{J}\sum_{j=1}^J r_j^\star(0).
\end{equation}

For the upper bound, apply the one-policy minimax rule in
Theorem \ref{thm:dead_zone} with zero continuation cost separately
to each policy. Rectangularity of the parameter space then gives
\[
R_J^\star
\leq
\sup_{\boldsymbol\Delta}
\frac{1}{J}\sum_{j=1}^J
\mathrm{Reg}_1
(\delta_{1j}^\star,\delta_{2j}^\star;\Delta_j,0)
=
\frac{1}{J}\sum_{j=1}^J r_j^\star(0).
\]

For the reverse inequality, let $\pi_j^\star$ be a least-favorable
prior for the zero-cost one-policy problem for policy $j$, and
consider the product prior
$ 
\pi^\star
=
\bigotimes_{j=1}^J\pi_j^\star.
$ 
By independence of $(X_{1j},X_{2j})$ across policies $j$,
\[
\pi^\star
\bigl(
d\boldsymbol\Delta
\mid
\mathbf X_1,\mathbf X_2
\bigr)
=
\bigotimes_{j=1}^J
\pi_j^\star
\bigl(
d\Delta_j
\mid
X_{1j},X_{2j}
\bigr).
\]
Consequently, conditional on $(\mathbf X_1,\mathbf X_2)$, the posterior
expected regret from action $a_j\in\{0,1\}$ is
\[
\bigl(1-a_j\bigr)
\mathbb E_{\pi_j^\star}
\left[
\Delta_{j,+}\mid X_{1j},X_{2j}
\right]
+
a_j
\mathbb E_{\pi_j^\star}
\left[
\Delta_{j,-}\mid X_{1j},X_{2j}
\right],
\]
which depends on the data only through $(X_{1j},X_{2j})$. The Bayes
decision problem therefore separates across policies, and its Bayes
risk under $\pi^\star$ is
$ 
\frac{1}{J}\sum_{j=1}^J r_j^\star(0).
$ 
Since the supremum risk of any rule is at least its Bayes risk under
$\pi^\star$,
$ 
R_J^\star
\geq
\frac{1}{J}\sum_{j=1}^J r_j^\star(0),
$ 
which proves the claim.

\paragraph{Capacity constrained problem}
We next bound the regret of the top-$q$ $A$-value rule. Fix
$\boldsymbol\Delta\in
\prod_{j=1}^J[-\bar\Delta_j,\bar\Delta_j]$.
For each policy $j$, define its realized regret by
\[
\begin{aligned}
L_j^{A,q}
={}&
\Delta_{j,+}
1\{\delta_{1j}^{A,q}=0\}
+
\Delta_{j,-}
1\{\delta_{1j}^{A,q}=1\}
+
1\{\delta_{1j}^{A,q}=\texttt{NA}\}
\left[
\Delta_{j,+}1\{\delta_{2j}^\star=0\}
+
\Delta_{j,-}1\{\delta_{2j}^\star=1\}
\right].
\end{aligned}
\]
Thus,
\[
\mathrm{Reg}
(\boldsymbol\delta^{A,q};\boldsymbol\Delta)
=
\frac{1}{J}\sum_{j=1}^J
\mathbb E_{\boldsymbol\Delta}[L_j^{A,q}].
\]

Fix policy $j$, and let $\lambda_j$ be the $q$-th largest $A$-value
among the remaining $J-1$ policies, using the same fixed ordering to
resolve ties as in the top-$q$ procedure. Then $\lambda_j$ is measurable
with respect to $\mathbf X_{1,-j}$. Moreover,
\[
\{A_j>\lambda_j\}
\subseteq
\{j\in\mathcal J_q(\mathbf X_1)\}
\subseteq
\{A_j\geq\lambda_j\}.
\]
Conditional on $\mathbf X_{1,-j}$, the value of $\lambda_j$ is fixed
and $(X_{1j},X_{2j})$ has its marginal distribution under $\Delta_j$,
independently of $\mathbf X_{1,-j}$. The conditional rule for policy
$j$ therefore coincides with the one-policy $A$-value rule with cutoff
$\lambda_j$. Applying
Equation \eqref{eqn:a2}
\[
\begin{aligned}
\mathbb E_{\boldsymbol\Delta}
\left[
L_j^{A,q}
\mid
\mathbf X_{1,-j}
\right]
\leq{}&
r_j^\star(0) +
\lambda_j
\mathbb P_{\boldsymbol\Delta}
\left(
j\notin\mathcal J_q(\mathbf X_1)
\mid
\mathbf X_{1,-j}
\right)
\end{aligned}
\]
$\mathbb P_{\boldsymbol\Delta}$-almost surely.
Taking expectations and summing over $j$ gives
\[
\begin{aligned}
\mathrm{Reg}
(\boldsymbol\delta^{A,q};\boldsymbol\Delta)
\leq{}&
\frac{1}{J}\sum_{j=1}^J r_j^\star(0) +
\mathbb E_{\boldsymbol\Delta}
\left[
\frac{1}{J}\sum_{j=1}^J
\lambda_j
1\{j\notin\mathcal J_q(\mathbf X_1)\}
\right].
\end{aligned}
\]

For every realization of $\mathbf X_1$, the definition of $\lambda_j$
implies that $\lambda_j=A_{(q)}$ whenever
$j\notin\mathcal J_q(\mathbf X_1)$. Because
$|\mathcal J_q(\mathbf X_1)|=q$,
\[
\frac{1}{J}\sum_{j=1}^J
\lambda_j
1\{j\notin\mathcal J_q(\mathbf X_1)\}
=
\frac{J-q}{J}A_{(q)}.
\]

\paragraph{Maximization of expectation}
It remains to show that the expectation of $A_{(q)}$ is maximized
under $\boldsymbol\Delta=\mathbf0$. For
$ 
d_j=\frac{|\Delta_j|}{s_{1j}},
$ 
and every $t\geq0$,
\[
\mathbb P_{\Delta_j}
\left(
\frac{|X_{1j}|}{s_{1j}}\leq t
\right)
=
\Phi(t-d_j)-\Phi(-t-d_j).
\]
The right-hand side is nonincreasing in $d_j$, since its derivative is
$ 
-\phi(t-d_j)+\phi(t+d_j)\leq0.
$ 
Therefore, $|X_{1j}|/s_{1j}$ is stochastically smallest when
$\Delta_j=0$. Proposition \ref{prop:c_value_properties} shows that
$A_j$ is weakly decreasing in $|X_{1j}|/s_{1j}$. Thus, $A_j$ is
stochastically largest under $\Delta_j=0$. Independence across policies
and the fact that the $q$-th largest order statistic is increasing in
each coordinate imply
$ 
\sup_{\boldsymbol\Delta\in
\prod_{j=1}^J[-\bar\Delta_j,\bar\Delta_j]}
\mathbb E_{\boldsymbol\Delta}[A_{(q)}]
=
\mathbb E_{\boldsymbol\Delta=\mathbf0}[A_{(q)}].
$

\subsection{Proof of Theorem \ref{thm:multi_period_dead_zone}}
\label{proof:multi_period_dead_zone}

We break the proof into several steps.

\paragraph{Existence and worst-case prior}
The case $\bar\Delta=0$ is immediate, so suppose that
$\bar\Delta>0$. Temporarily enlarge the class of admissible decisions $\boldsymbol{\delta} \in \mathcal D_H$ to include
randomized behavioral strategies. 
Our setting satisfies Assumptions 3.1--3.7 of
\citet{wald1950statistical}. Indeed, the Gaussian experiment is
absolutely continuous and dominated; the parameter space
$[-\bar\Delta,\bar\Delta]$ is compact; the distribution of
$(X_1,\ldots,X_H)$ is continuous in $\Delta$ in total variation; and
regret is bounded and continuous in $\Delta$, uniformly over decision
rules. Moreover, the horizon and action sets are finite, and
the class of randomized strategies is convex and closed.

Therefore, \citet[Theorem 3.4]{wald1950statistical} implies that
\[
\begin{aligned}
V
&:=
\inf_{\boldsymbol\delta\in\mathcal D_H}
\sup_{|\Delta|\leq\bar\Delta}
\mathrm{Reg}_H
(\boldsymbol\delta;\Delta,\boldsymbol c) =
\sup_{\pi\in\mathcal P([-\bar\Delta,\bar\Delta])}
\inf_{\boldsymbol\delta\in\mathcal D_H}
\int
\mathrm{Reg}_H
(\boldsymbol\delta;\Delta,\boldsymbol c)
\,\pi(d\Delta).
\end{aligned}
\]
By \citet[Theorems 3.7 and 3.14]{wald1950statistical}, respectively,
there exist a minimax strategy and a
least-favorable prior $\pi^\star$. In particular, defining
\[
b(\pi)
=
\inf_{\boldsymbol\delta\in\mathcal D_H}
\int
\mathrm{Reg}_H
(\boldsymbol\delta;\Delta,\boldsymbol c)
\,\pi(d\Delta),
\]
we have
\[
b(\pi^\star)=V, \qquad \pi^\star
\in
\operatorname*{arg\,max}_{\pi\in
\mathcal P([-\bar\Delta,\bar\Delta])}
b(\pi)
\]

\paragraph{Symmetric worst-case prior}
We may take the least-favorable prior to be symmetric. For any prior
$\pi$, define its reflection around zero by
\[
\pi^-(A)=\pi(-A),
\qquad
-A=\{-\Delta:\Delta\in A\},
\]
for every Borel set
$A\subseteq[-\bar\Delta,\bar\Delta]$. For any strategy
$\boldsymbol\delta$, define its reflected strategy by reversing the
signs of all observed signals, interchanging terminal actions $0$ and
$1$, and leaving continuation unchanged. Formally, for $h\leq H$,
\[
\delta_h^-(x_1,\ldots,x_h)
=
\begin{cases}
1-\delta_h(-x_1,\ldots,-x_h),
&
\delta_h(-x_1,\ldots,-x_h)\in\{0,1\},\\
\texttt{NA},
&
\delta_h(-x_1,\ldots,-x_h)=\texttt{NA}.
\end{cases}
\]
Because the distribution of
$(-X_1,\ldots,-X_H)$ under $\Delta$ is the same as the distribution
of $(X_1,\ldots,X_H)$ under $-\Delta$, while the continuation costs
are unchanged,
\[
\mathrm{Reg}_H
(\boldsymbol\delta^-;\Delta,\boldsymbol c)
=
\mathrm{Reg}_H
(\boldsymbol\delta;-\Delta,\boldsymbol c).
\]
It follows that
\[
b(\pi^-)=b(\pi).
\]

The function $b(\pi)$ is concave in $\pi$, because it is the infimum
over strategies of functions that are linear in $\pi$. Hence, if
$\pi^\star$ is least favorable, then
\[
b\left(
\frac{\pi^\star+(\pi^\star)^-}{2}
\right)
\geq
\frac{1}{2}b(\pi^\star)
+
\frac{1}{2}b((\pi^\star)^-)
=
V.
\]
Since $V$ is the largest attainable Bayes risk, equality must hold.
Thus, the symmetrized prior
\[
\frac{\pi^\star+(\pi^\star)^-}{2}
\]
is also least favorable, and we may take $\pi^\star$ to be symmetric.

\paragraph{Nondegeneracy}
The symmetric least-favorable prior is nondegenerate. To establish
this, fix $a\in(0,\bar\Delta]$ and consider the symmetric two-point
prior $\pi(a) = \pi(-a) = 1/2$. 
Consider the more informative experiment in which all $H$ signals are
observed at zero cost. Under $\Delta=\pm a$,
\[
T_H\sim\mathcal N(\pm aS_H,1).
\]
The likelihood ratio between $\Delta=a$ and $\Delta=-a$ is increasing
in $T_H$, so the Bayes terminal decision under $\pi_a$ implements the
policy if and only if $T_H\geq0$. Its probability of error under
either parameter value is $\Phi(-aS_H)$. Since an incorrect terminal
decision has regret $a$, the Bayes regret in this more informative,
costless experiment is
$ 
a\Phi(-aS_H)>0.
$ 
The original sequential problem cannot have smaller Bayes regret.
Therefore,
\[
b(\pi_a)
\geq
a\Phi(-aS_H)
>0,
\]
and consequently
\[
V
=
\sup_{\pi}b(\pi)
\geq
b(\pi_a)
>0.
\]

By contrast, if $\delta_0$ denotes the point mass at $\Delta=0$, then
$b(\delta_0)=0$: either immediate terminal action has zero regret, and
the planner can avoid all continuation costs by stopping in the first
period. Hence a least-favorable prior cannot equal $\delta_0$. Because
$\pi^\star$ is symmetric, it follows that $\pi^\star$ is
nondegenerate.

\paragraph{Bayes characterization of the minimax strategy}
Taking $\boldsymbol\delta^M$ to be a minimax strategy and $\pi^\star$ is
least favorable,
\[
\begin{aligned}
V
=
b(\pi^\star)
&\leq
\int
\mathrm{Reg}_H
(\boldsymbol\delta^M;\Delta,\boldsymbol c)
\,\pi^\star(d\Delta) \leq
\sup_{|\Delta|\leq\bar\Delta}
\mathrm{Reg}_H
(\boldsymbol\delta^M;\Delta,\boldsymbol c)
=
V.
\end{aligned}
\]
Both inequalities must therefore hold with equality. In particular,
$\boldsymbol\delta^M$ attains the minimum Bayes risk under
$\pi^\star$. Hence the minimax strategy is Bayes under a symmetric,
nondegenerate least-favorable prior. It remains to characterize its
Bayes form.

\paragraph{Posterior policy decision}
Define
\[
Y_h
=
\sum_{r=1}^h\frac{X_r}{s_r^2}
=
S_hT_h.
\]
The likelihood of $(X_1,\ldots,X_h)$ satisfies
\[
\begin{aligned}
\prod_{r=1}^h
\exp\left\{
-\frac{(X_r-\Delta)^2}{2s_r^2}
\right\}
&\propto
\exp\left\{
\Delta\sum_{r=1}^h\frac{X_r}{s_r^2}
-
\frac{\Delta^2}{2}
\sum_{r=1}^h\frac{1}{s_r^2}
\right\} =
\exp\left\{
Y_h\Delta-\frac{S_h^2}{2}\Delta^2
\right\}.
\end{aligned}
\]
Hence, $Y_h$ is sufficient for $\Delta$. Define
\begin{equation} \label{eqn:M_h}
M_h(y)
=
\int_{-\bar\Delta}^{\bar\Delta}
\exp\left\{
y\Delta-\frac{S_h^2}{2}\Delta^2
\right\}
\pi^\star(d\Delta).
\end{equation} 
The posterior distribution after period $h$ is
\[
\pi^\star(d\Delta\mid Y_h=y)
=
\frac{
\exp\left\{
y\Delta-\frac{S_h^2}{2}\Delta^2
\right\}
}{
M_h(y)
}
\pi^\star(d\Delta).
\]
Its posterior mean is
\[
\mu_h(y)
=
\mathbb E_{\pi^\star}[\Delta\mid Y_h=y]
=
\frac{M_h'(y)}{M_h(y)}.
\]

Symmetry of $\pi^\star$ implies that $M_h$ is even. Therefore,
$M_h'$ and $\mu_h$ are odd, and in particular $\mu_h(0)=0$.
Moreover, because $\Delta$ has bounded support, differentiation under
the integral is justified and
\[
\begin{aligned}
\mu_h'(y)
&=
\frac{M_h''(y)}{M_h(y)}
-
\left(
\frac{M_h'(y)}{M_h(y)}
\right)^2 =
\mathbb E_{\pi^\star}
[\Delta^2\mid Y_h=y]
-
\left(
\mathbb E_{\pi^\star}
[\Delta\mid Y_h=y]
\right)^2 =
\operatorname{Var}_{\pi^\star}
(\Delta\mid Y_h=y)
>0.
\end{aligned}
\]
The final inequality follows because the posterior is obtained from
the nondegenerate prior $\pi^\star$ through a strictly positive
exponential tilt and therefore has the same support. Thus, $\mu_h$ is
strictly increasing. Together with oddness, this gives
\[
\operatorname{sign}\{\mu_h(y)\}
=
\operatorname{sign}(y)
=
\operatorname{sign}(T_h).
\]

If the planner stops in period $h$, the posterior regret from rejecting
the policy is
$ 
\mathbb E_{\pi^\star}[\Delta_+\mid Y_h=y],
$ 
whereas the posterior regret from implementing it is
$ 
\mathbb E_{\pi^\star}[\Delta_-\mid Y_h=y].
$ 
Their difference is
\[
\mathbb E_{\pi^\star}
[\Delta_+-\Delta_-\mid Y_h=y]
=
\mathbb E_{\pi^\star}[\Delta\mid Y_h=y]
=
\mu_h(y).
\]
Consequently, the Bayes terminal action is
\[
1\{\mu_h(Y_h)\geq0\}
=
1\{Y_h\geq0\}
=
1\{T_h\geq0\}.
\]
At $Y_h=0$, the two actions have the same posterior regret, and the
display adopts implementation as the tie-breaking convention.(Since
$Y_h$ has a continuous distribution under every $\Delta$, this
convention does not affect frequentist regret.)
\paragraph{Continuation decision}
Let
\[
r_h(y)
=
\min\left\{
\mathbb E_{\pi^\star}[\Delta_+\mid Y_h=y],
\mathbb E_{\pi^\star}[\Delta_-\mid Y_h=y]
\right\}
\]
denote the posterior regret from stopping in period $h$. Excluding
costs already incurred, define the Bayes value functions recursively
by
\[
V_H(y)=r_H(y), \qquad V_h(y)
=
\min\left\{
r_h(y),\,
c_h+
\mathbb E_{\pi^\star}
\left[
V_{h+1}(Y_{h+1})
\mid Y_h=y
\right]
\right\}, \quad  h < H.
\]
Continuation is optimal in period $h$ if and only if
$A_h(y)>0$, where
\[
A_h(y)
=
r_h(y)-c_h-
\mathbb E_{\pi^\star}
\left[
V_{h+1}(Y_{h+1})
\mid Y_h=y
\right].
\]
We show by backward induction that $A_h$ is even and strictly
decreasing in $|y|$.

\paragraph{Posterior-predictive ordering}
Recall he expression of $M_h(y)$ in Equation \eqref{eqn:M_h}. 
Conditional on $\Delta$,
\[
Y_{h+1}-Y_h
=
\frac{X_{h+1}}{s_{h+1}^2}
\sim
\mathcal N\left(
\frac{\Delta}{s_{h+1}^2},
\frac{1}{s_{h+1}^2}
\right).
\]
Let $K_h(y,y')$ denote the posterior-predictive density of
$Y_{h+1}=y'$ conditional on $Y_h=y$. Integrating the preceding
Gaussian density with respect to the posterior distribution of
$\Delta$ gives
\[
K_h(y,y')
=
\phi_{h+1}(y'-y)
\frac{M_{h+1}(y')}{M_h(y)},
\]
where $\phi_{h+1}$ is the density of
$\mathcal N(0,s_{h+1}^{-2})$. For $y_2>y_1$,
\[
\begin{aligned}
\frac{K_h(y_2,y')}{K_h(y_1,y')}
=
\frac{M_h(y_1)}{M_h(y_2)}
\exp\left\{
-\frac{s_{h+1}^2}{2}(y_2^2-y_1^2)
+
s_{h+1}^2(y_2-y_1)y'
\right\}.
\end{aligned}
\]
This likelihood ratio is strictly increasing in $y'$. Therefore,
$Y_{h+1}\mid Y_h=y$ is strictly stochastically increasing in $y$. In addition, symmetry of $\pi^\star$ implies that $M_h$ is even and
$ 
K_h(-y,-y')=K_h(y,y').
$ 
For $x\geq0$ and $y\geq0$, the conditional density of
$|Y_{h+1}|=x$ is
\[
\widetilde K_h(y,x)
=
K_h(y,x)+K_h(y,-x) \propto \frac{M_{h+1}(x)}{M_h(y)}
\exp\left\{
-\frac{s_{h+1}^2}{2}(x^2+y^2)
\right\}
\cosh(s_{h+1}^2xy),
\]
where $\cosh(z) = e^z/2 + e^{-z}/2$. 
Hence, if $0\leq y_1<y_2$,
\[
\frac{\widetilde K_h(y_2,x)}
     {\widetilde K_h(y_1,x)}
=
\frac{M_h(y_1)}{M_h(y_2)}
\exp\left\{
-\frac{s_{h+1}^2}{2}(y_2^2-y_1^2)
\right\}
\frac{
\cosh(s_{h+1}^2xy_2)
}{
\cosh(s_{h+1}^2xy_1)
}.
\]
The right-hand side is increasing in $x\geq0$. Thus,
$|Y_{h+1}|\mid Y_h=y$ is stochastically increasing in
$y\geq0$.

\paragraph{Backward induction}
Set $A_H\equiv0$. Suppose that $A_{h+1}$ is even and nonincreasing in
$|y|$, which holds for $h+1=H$. From the definitions of $V_{h+1}$ and
$A_{h+1}$,
\[
V_{h+1}(y)
=
r_{h+1}(y)-(A_{h+1}(y))_+.
\]
Therefore,
\begin{equation}
\label{eqn:a7}
\begin{aligned}
A_h(y)
={}&
r_h(y)-c_h-
\mathbb E_{\pi^\star}
\left[
r_{h+1}(Y_{h+1})
\mid Y_h=y
\right] +
\mathbb E_{\pi^\star}
\left[
(A_{h+1}(Y_{h+1}))_+
\mid Y_h=y
\right].
\end{aligned}
\end{equation}
Posterior symmetry implies that $r_h$ is even. Together with
$K_h(-y,-y')=K_h(y,y')$ and the induction hypothesis, this shows that
$A_h$ is even.

It remains to establish monotonicity for $y\geq0$. For every $y$,
\[
r_h(y)
=
\frac{1}{2}
\left\{
\mathbb E_{\pi^\star}
[|\Delta|\mid Y_h=y]
-
|\mu_h(y)|
\right\}.
\]
When $y\geq0$, $\mu_h(y)\geq0$. By the law of iterated expectation applied to the posterior mean under $\pi^\star$, which implies 
$ 
\mathbb E_{\pi^\star}
\left[
\mu_{h+1}(Y_{h+1})
\mid Y_h=y
\right]
=
\mu_h(y)
$, 
we have 
\begin{equation}
\label{eqn:a8}
\begin{aligned}
&r_h(y)-
\mathbb E_{\pi^\star}
\left[
r_{h+1}(Y_{h+1})
\mid Y_h=y
\right] =
\mathbb E_{\pi^\star}
\left[
\bigl(-\mu_{h+1}(Y_{h+1})\bigr)_+
\mid Y_h=y
\right].
\end{aligned}
\end{equation}
Because $\mu_{h+1}$ is strictly increasing,
$y'\mapsto(-\mu_{h+1}(y'))_+$ is nonconstant and nonincreasing.
The strict stochastic ordering of
$Y_{h+1}\mid Y_h=y$ therefore implies that the sum of the first three
terms in \eqref{eqn:a7} is strictly decreasing in $y\geq0$.

By the induction hypothesis,
$y'\mapsto(A_{h+1}(y'))_+$ is even and nonincreasing in $|y'|$.
The stochastic ordering of
$|Y_{h+1}|\mid Y_h=y$ therefore implies that the final term in
\eqref{eqn:a7} is nonincreasing in $y\geq0$. Consequently, $A_h$ is
even and strictly decreasing in $|y|$, completing the induction.

\paragraph{Threshold representation}
The functions $r_h$, $V_h$, and $A_h$ are continuous by backward
induction. Because $A_h$ is even and strictly decreasing in $|y|$,
there exists $b_h^\star\in[0,\infty]$ such that 
$ 
A_h(y)>0$ if and only if $ 
|y|<b_h^\star.$ 
Set $b_H^\star=0$ and define
\[
\tau_h^\star
=
\frac{b_h^\star}{S_h},
\qquad h=1,\ldots,H.
\]
Since $Y_h=S_hT_h$, the Bayes rule continues in period $h<H$ if and
only if
$ 
|T_h|<\tau_h^\star.
$ 
Upon stopping, the posterior policy decision derived above implements
the policy if and only if $T_h\geq0$. The Bayes action is unique except when $Y_h=0$ or
$|Y_h|=b_h^\star$. Each of these events has probability zero under
every $\Delta$. Since the minimax strategy
$\boldsymbol\delta^M$ is Bayes under $\pi^\star$, it agrees almost
surely, under every $\Delta$, with $\boldsymbol{\delta}^\star$. 

\paragraph{Characterization of the optimal thresholds}
For a symmetric threshold vector $\boldsymbol\tau$ and $\Delta\geq0$,
regret is incurred if the terminal statistic is negative and whenever
an additional signal is collected. Therefore,
\[
\mathrm{Reg}_H
(\boldsymbol\delta^{\boldsymbol\tau};
 \Delta,\boldsymbol c)
=
\Delta\,
\mathbb P_\Delta
\left(
T_{\sigma(\boldsymbol\tau)}<0
\right)
+
\sum_{h=1}^{H-1}
c_h\,
\mathbb P_\Delta
\left(
\sigma(\boldsymbol\tau)>h
\right)
=
R_H(\boldsymbol\tau;\Delta,\boldsymbol c).
\]
Reflection symmetry implies that the worst-case regret over
$[-\bar\Delta,\bar\Delta]$ equals the maximum of this expression over
$[0,\bar\Delta]$. Since
$\boldsymbol\delta^{\star}$ is minimax,
$ 
\boldsymbol\tau^\star
\in
\operatorname*{arg\,min}_{\boldsymbol\tau\in\mathcal T_H}
\max_{\Delta\in[0,\bar\Delta]}
R_H(\boldsymbol\tau;\Delta,\boldsymbol c).
$

\end{document}